\newcommand{\blind}{1}
\newtheorem{lem}{Lemma}
\newtheorem{assumption}{Assumption}
\newtheorem{prop}{Proposition}
\newtheorem{thm}{Theorem}
\newtheoremstyle{remark2}{1ex}{1ex}%
      {}
      {}
      {\bf}
      {.}
      {5pt}
      {\thmname{#1}\thmnumber{ #2}\thmnote{ \slshape{(#3)}}} 
\theoremstyle{remark2} 
\newtheorem{rem}{Remark}
\newcommand{\vtheta}{\bm \theta}
\newcommand{\E}{\operatorname{E}}
\newcommand{\p}{\operatorname{P}}
\newcommand{\LB}{\operatorname{LB}}
\newcommand{\Var}{\operatorname{Var}}
\newcommand{\Cov}{\operatorname{Cov}}
\newcommand{\VaR}{\operatorname{VaR}}
\newcommand{\ARMA}{\operatorname{ARMA}}
\newcommand{\GARCH}{\operatorname{GARCH}}
\newcommand{\sign}{\operatorname{sgn}}
\newcommand{\D}{\,\mathrm{d}}
\newcommand{\vxi}{\bm \xi}
\newcommand{\vlambda}{\bm \lambda}
\newcommand{\vx}{\bm x}
\newcommand{\vzeros}{\boldsymbol{0}}
\newcommand{\mTheta}{\bm \varTheta}
\newcommand{\mB}{\bm B}
\newcommand{\mI}{\bm I}
\newcommand{\mM}{\bm M}
\newcommand{\mZ}{\bm Z}
\newcommand{\s}{s}
\newcommand{\m}{m}
\begin{document}

\def\spacingset#1{\renewcommand{\baselinestretch}%
{#1}\small\normalsize} \spacingset{1}



\if1\blind
{
  \title{\bf Extremal Dependence-Based Specification Testing of Time Series}
  \author{Yannick Hoga\thanks{
	 The author is grateful to seminar participants at CREST and Erasmus University Rotterdam for valuable comments and suggestions, in particular Christian Francq, Jeroen Rombouts, Jean-Michel Zako\"{i}an and Chen Zhou. The author would also like to thank Christoph Hanck and Till Massing for their careful reading of an earlier version of this manuscript. This work was supported by the German Research Foundation (DFG) under Grant HO 6305/1-1.}\hspace{.2cm}\\
	Faculty of Economics and Business Administration\\University of Duisburg-Essen
}
  \maketitle
} \fi

\if0\blind
{
  \bigskip
  \bigskip
  \bigskip
  \begin{center}
    {\LARGE\bf Extremal Dependence-Based Specification Testing of Time Series}
\end{center}
  \medskip
} \fi

\onehalfspacing

\bigskip
\begin{abstract}
\noindent We propose a specification test for conditional location--scale models based on extremal dependence properties of the standardized residuals. We do so comparing the left-over serial extremal dependence---as measured by the pre-asymptotic tail copula---with that arising under serial independence at different lags. Our main theoretical results show that the proposed Portmanteau-type test statistics have nuisance parameter-free asymptotic limits. The test statistics are easy to compute, as they only depend on the standardized residuals, and critical values are likewise easily obtained from the limiting distributions. This contrasts with extant tests (based, e.g., on autocorrelations of squared residuals), where test statistics depend on the parameter estimator of the model and critical values may need to be bootstrapped. We show that our test performs well in simulations. An empirical application to S\&P~500 constituents illustrates that our test can uncover violations of residual serial independence that are not picked up by standard autocorrelation-based specification tests, yet are relevant when the model is used for, e.g., risk forecasting.
\end{abstract}

\noindent%
{\it Keywords:}  Location--scale models, Serial extremal dependence, Specification test, Tail copula
\vfill

\newpage

\section{Motivation}

\onehalfspacing

The dynamics of many economic and financial time series can be successfully captured by location--scale models, that allow for mean and variance changes in the conditional distribution. The benchmark models for incorporating mean changes are ARMA models, and GARCH-type processes have become the standard volatility models, since the seminal work of \citet{Eng82} and \citet{Bol86}. The main aim of this paper is to derive tests that check whether the fitted location--scale model adequately captures the time variation in the mean and variance of the conditional distribution. As is common for specification tests, a rejection points to a problem in the mean \textit{or} variance dynamics, but does not establish which of the two.\footnote{A rare exception is the work of \citet{FRZ06}.} While our tests may also pick up misspecification in the conditional mean, they are first and foremost designed to detect misspecified volatility dynamics. Thus, our main focus in the following will be on volatility models.

Testing for correctly specified volatility models is of high practical relevance. This is because volatility models are widely used to predict risk measures, such as the Value-at-Risk (VaR) and the Expected Shortfall (ES); see, e.g., \citet{MF00,Cea07,Hog18+}. Risk measure forecasts are key inputs in risk management procedures of financial institutions, due to regulatory requirements in the Basel framework of the \citet{BCBSBF19}. The Basel framework penalizes sustained underpredictions of risk by imposing higher capital requirements. On the other hand, if risk forecasts are too high, too much capital is put aside as a buffer against large losses. In both cases of over- and underprediction, some portion of the capital can no longer earn premiums, leading to foregone profits. Thus, accurate risk forecasts from correctly specified volatility models are important. Consequently, diagnostic tests should be routinely applied to the chosen volatility specification. 

To this end, several specification tests are available. These tests typically verify whether the model residuals are independent, identically distributed (i.i.d.). However, standard tests of the i.i.d.~property---such as \citet{BP70} or \citet{LB78} tests---cannot be applied `as usual', since the residuals are only estimated. \citet{LM94} were the first to propose a Portmanteau-type test for the autocorrelations of squared residuals that corrects for this fact in (conditionally Gaussian) location--scale models. \citet{BHK03b} and \citet{LL97} extend the applicability of the \citet{LM94} test to more general GARCH and $N(0,1)$--FARIMA--GARCH models, respectively. Similarly, \citet{FG12} consider weighted Portmanteau statistics applied to ARCH residuals. \citet{HZ07} develop spectral-based tests for ARCH($\infty$) series. The limiting distributions for all these proposed test statistics have only been derived for the quasi-maximum likelihood (QML) estimator. Moreover, practical application of these tests is complicated by the fact that the limiting distributions depend on nuisance parameters induced by parameter estimation. Valid tests thus require consistent estimators of the nuisance parameters (that have to be computed on a case-by-case basis) or involved bootstrap-based procedures to compute critical values. By construction, all these tests diagnose the complete serial dependence structure of the standardized residuals.

Here, instead of considering the complete dependence structure, we take a different tack by focusing on the left-over serial \textit{extremal} dependence in the standardized residuals. When diagnosing the complete dependence structure, the effect of any remaining residual extremal dependence may be `washed-out'. 
However, overlooked serial dependence in the extremes may be very harmful as it invalidates, e.g., the model's risk forecasts, such as VaR and ES forecasts \citep{Sea21+}. As pointed out above, misspecified risk forecasts are penalized under the Basel framework. Hence, a separate diagnostic for the extremes is desirable. In empirical work, \citet{DMC12} and \citet{DMZ13} use the pre-asymptotic (PA) extremogram as a diagnostic. For instance, to assess the GARCH fit for FTSE returns, \citet{DMC12} show that the PA-extremogram of the standardized residuals exhibits no signs of extremal dependence. However, such a model check lacks theoretical justification so far, because the central limit theory for extremal dependence measures has not been developed for model residuals. It is the main theoretical contribution of this paper to do so.

Specifically, we base our tests on the PA-tail copula \citep{SS06} of the (absolute values of the) standardized residuals. The tail copula---i.e., the limit of the PA-tail copula---has been used by, e.g., \citet{BJW15} to detect structural changes in tail dependence. The PA-tail copula is closely related to the PA-extremogram of \citet{DMC12}. However, for our purposes, working with the PA-tail copula gives rise to nuisance parameter-free limiting distributions, which would not be the case for the PA-extremogram.

It turns out that---unlike the above mentioned traditional specification tests---our tests are easy to apply: We only require the standardized residuals to compute the test statistics. In particular, it does not matter for our tests which parameter estimator (QML estimator, Self-Weighted QML estimator, Least Squares, etc.) was used in fitting the volatility model. This gives our tests wide practical appeal.

Testing for any left-over serial extremal dependence in the residuals directs the power of our specification tests to the tails. Thus, they are more sensitive to any remaining dependence in the tails. This may be important when, say, a GARCH model captures well the volatility dynamics in the body of the distribution, but not so much in the extremes (see also the simulations for such an example). It is exactly for these cases that our tests are designed. Thus, the diagnostic tests developed here are not meant to replace standard tests based (e.g.) on the autocorrelations of the squared residuals, but rather to supplement them.

We illustrate the good size and power of our tests in simulations. Specifically, we show that---as predicted by theory---in sufficiently large samples the size is indeed unaffected by parameter estimation effects. We also demonstrate that empirically relevant misspecifications can be detected more easily using our extremal dependence-based tests instead of autocorrela-tion-based tests.

The empirical application shows that for the S\&P~500 components an autocorrelation-based test can often not detect any significant departures from a fitted GARCH-type model. However, in many of these cases, applying our tests suggests some statistically significant serial extremal dependence left over in the residuals. This indicates that the fitted GARCH-type processes capture well the volatility changes in the center of the distribution, but not so much in the tails. This result is consistent with \citet{EM04}, who conclude from fitting their CAViaR models at different risk levels that `our findings suggest that the process governing the behavior of the tails might be different from that of the rest of the distribution.' Such a dynamic is particularly worrisome when using GARCH-type models to forecast risk measures, such as VaR and ES, far out in the tail.

The remainder of the paper proceeds as follows. In Section~\ref{Main results}, we present our model setup and propose our Portmanteau-type tests for serial extremal dependence. Section~\ref{Simulations} explores size and power of our tests in finite samples, while Section~\ref{Application} illustrates the benefits of our test for real data. Finally, Section~\ref{Conclusion} concludes. The supplementary Appendix contains all proofs and additional simulations.

\section{Main Results}\label{Main results}

\subsection{Location--Scale Model}\label{Location--Scale Model}

Denote by $Y_1,\ldots, Y_n$ the observations of interest to which some parametric model will be fitted. For instance, the $Y_1,\ldots, Y_n$ may denote log-returns on some speculative asset. We define the $\sigma$-field generated by $Y_t,Y_{t-1},\ldots$ and some exogenous (possibly multivariate) variables $\vx_{t},\vx_{t-1},\ldots$ by $\mathcal{F}_t=\sigma(Y_t,Y_{t-1},\ldots;\vx_{t},\vx_{t-1},\ldots)$. We assume the $Y_t$ to follow the parametric conditional location--scale model
\begin{equation}\label{eq:ls model}
	Y_t=\mu_{t}(\vtheta^\circ)+\sigma_t(\vtheta^\circ)\varepsilon_t,
\end{equation}
where $\vtheta^\circ\in\mTheta$ is the true parameter vector from the parameter space $\mTheta\subset\mathbb{R}^{m}$ ($m\in\mathbb{N}$), $\mu_{t}(\vtheta^\circ)$ and $\sigma_t(\vtheta^\circ)$ are the $\mathcal{F}_{t-1}$-measurable conditional mean and conditional standard deviation, and $\varepsilon_t$ is i.i.d.~with mean zero and unit variance (written: $\varepsilon_t\overset{\text{i.i.d.}}{\sim}(0,1)$), independent of $\mathcal{F}_{t-1}$.

In the following, we require a smoothness condition on the tail of the $|\varepsilon_t|$, whose distribution function (d.f.) we denote by $F(\cdot)$.

\begin{assumption}\label{ass:U distr 2}
The d.f.~$F(\cdot)$ of the $|\varepsilon_t|$ satisfies:
\begin{enumerate}
	\item[(i)] The upper endpoint of $F(\cdot)$ is infinite.
	\item[(ii)] There exists some constant $C_F>0$ such that $F(\cdot)$ is differentiable with density $f(\cdot)$ on $[C_F,\infty)$.
	\item[(iii)] $\lim_{x\to\infty}xf(x)/[1-F(x)]=\alpha\in(0,\infty)$.
\end{enumerate}
\end{assumption}

Assumption~\ref{ass:U distr 2} is known as one of the \textit{von Mises'} conditions \citep[see, e.g.,][Rem.~1.1.15]{HF06} and ensures a sufficiently well-behaved tail of the $|\varepsilon_t|$. This is needed to justify the replacement of $|\varepsilon_t|$ in our test statistic with the standardized residuals. In their autocorrelation-based diagnostic test, \citet{BHK03b} impose a similar regularity condition; see in particular their equation~(2.4). Assumption~\ref{ass:U distr 2} is only needed to ensure the validity of Lemma~\ref{lem:Lem F} in Appendix~\ref{Proofs of Propositions}. Thus, it may be replaced by any other assumption on the tail decay, as long as the conclusions of Lemma~\ref{lem:Lem F} remain valid. 

Next, we require a $\sqrt{n}$-consistent estimator of the unknown $\vtheta^{\circ}$:

\begin{assumption}\label{ass:estimator}
There exists an estimator $\widehat{\vtheta}$ satisfying $n^{1/2}|\widehat{\vtheta}-\vtheta^{\circ}|=O_{\p}(1)$, as $n\to\infty$.
\end{assumption}

Root-$n$ consistent estimators are available under various regularity conditions for ARMA--GARCH and GARCH-type models; see, e.g., \citet{BH04}, \citet{FZ04}, \citet{PWT08}, \citet{FT19}.

Once the parameters have been estimated, volatility $\widehat{\sigma}_t=\widehat{\sigma}_t(\widehat{\vtheta})$, the conditional mean $\widehat{\mu}_t=\widehat{\mu}_t(\widehat{\vtheta})$ and the standardized residuals $\widehat{\varepsilon}_t=\widehat{\varepsilon}_t(\widehat{\vtheta})=[Y_t-\widehat{\mu}_t(\widehat{\vtheta})]/\widehat{\sigma}_t(\widehat{\vtheta})$ can be computed. Note that $\mu_t(\vtheta)$ and $\sigma_t(\vtheta)$ may depend on the infinite past of $Y_t$ (and, possibly, $\vx_t$) and, hence, can only be approximated via $\widehat{\mu}_t(\vtheta)$ and $\widehat{\sigma}_t(\vtheta)$ using some initial values; see, e.g., the truncated recursion \eqref{eq:vola} in Appendix~\ref{APARCH and ARMA--GARCH Models}.

For our final assumption, we define the neighborhood of the true $\vtheta^\circ$ as
\[
	N_n(\eta)=\left\{\vtheta\ :\ n^{1/2}|\vtheta-\vtheta^{\circ}|\leq\eta\right\},\qquad\eta>0,
\]
where $|\cdot|$ is some norm on $\mTheta$. 

\begin{assumption}\label{ass:UA}
For any $\eta>0$, there exist r.v.s $m_t:=m_{n,t}(\eta)\geq0$ and $s_t:=s_{n,t}(\eta)\geq0$ with $\max_{t=\ell_{n},\ldots,n}\m_{t}=o_{\p}(1)$ and $\max_{t=\ell_{n},\ldots,n}\s_{t}=o_{\p}(1)$ for any $\ell_{n}\leq n$ with $\ell_n\to\infty$, such that with probability approaching 1, as $n\to\infty$,
\[
	|\varepsilon_t|(1-s_t)-m_t\leq |\widehat{\varepsilon}_t(\vtheta)| \leq |\varepsilon_t|(1+s_t)+m_t
\]
for all $t=\ell_{n},\ldots,n$ and all $\vtheta\in N_n(\eta)$.
\end{assumption}

Assumption~\ref{ass:UA} imposes a uniform approximability of the innovations $|\varepsilon_t|$ by the $|\widehat{\varepsilon}_t(\vtheta)|$ in a $n^{-1/2}$-neighborhood of the true parameter. However, Assumption~\ref{ass:UA} does not require the first $(\ell_{n}-1)$ innovations to be approximable. This is convenient as the first few residuals are often imprecise, due to initialization effects caused by using artificial initial values in the mean and variance recursions. The variables $m_t$ and $s_t$ bound the error in the $\widehat{\varepsilon}_t(\vtheta)$ associated with approximating the conditional mean and conditional variance, respectively. Of course, Assumption~\ref{ass:UA} is a high-level condition that must be verified for each specific model on a case-by-case basis. In Appendix~\ref{APARCH and ARMA--GARCH Models}, we verify Assumption~\ref{ass:UA} for two popular model classes, namely APARCH and ARMA--GARCH models.

\subsection{The Tail Copula and its Estimators}\label{The Tail Copula and its Estimators}

The \textit{survival copula} at lag $d$ of the $\{|\varepsilon_t|\}$ from Section~\ref{Location--Scale Model} is
\[
	\overline{C}^{(d)}(u,v)=\p\big\{|\varepsilon_t|>F^{\leftarrow}(1-u),\ |\varepsilon_{t-d}|>F^{\leftarrow}(1-v)\big\},\qquad 0\leq u,v\leq 1,
\]
where $F^{\leftarrow}(\cdot)$ denotes the left-continuous inverse of $F(\cdot)$. Following \citet{SS06}, the (upper) \textit{tail copula} is the directional derivative of the survival copula at the origin with direction $(x,y)$, i.e., 
\[
	\Lambda^{(d)}(x,y)=\lim_{s\to\infty}s\overline{C}^{(d)}(x/s, y/s),
\]
where the limit is assumed to exist. Thus, the tail copula describes the serial extremal dependence structure of the $|\varepsilon_t|$ at different lags $d$. For $x=y=1$, the tail copula simplifies to the \textit{tail dependence coefficient} of \citet{Sib60}. In this case, it also coincides with the most popular version of the \textit{extremogram} due to \citet{DM09}. 

Let $k=k(n)$ be an intermediate sequence satisfying $k\to\infty$ and $k/n\to0$ as $n\to\infty$. Then, replacing $s$ with $n/k\to\infty$ and also replacing population quantities with empirical counterparts, one may estimate the tail copula non-parametrically via
\[
	\widetilde{\Lambda}_{n}^{(d)}(x,y)=\frac{1}{k}\sum_{t=d+1}^{n}I_{\{|\varepsilon_t|>|\varepsilon|_{(\lfloor kx\rfloor+1)},\ |\varepsilon_{t-d|}>|\varepsilon|_{(\lfloor ky\rfloor+1)}\}},
\]
where $|\varepsilon|_{(1)}\geq\ldots\geq |\varepsilon|_{(n)}$ denote the order statistics, and $\lfloor\cdot\rfloor$ rounds to the nearest smaller integer.

In the case of interest here, where the $\varepsilon_t$ from \eqref{eq:ls model} are i.i.d., the tail copula can easily shown to be identically zero, i.e., $\Lambda^{(d)}(x,y)\equiv0$. However, even when there is strong (non-extremal) dependence between $|\varepsilon_t|$ and $|\varepsilon_{t-d}|$, we have $\Lambda^{(d)}(x,y)\equiv0$. For instance, this is the case when the dependence between the two variables is governed by a Gaussian copula \textit{for any} correlation parameter $\rho\in(-1,1)$ \citep{Hef00}. Hence, while the relationship may be strong in non-extremal regions, it is non-existent in the limit as measured by the tail copula. Likewise, \citet{Hil11a} shows that the tail copula is identically zero if $\varepsilon_t$ follows a \textit{serially dependent} stochastic volatility model. These are well-known to display weaker extremal dependence than GARCH-type models \citep{DM09,DMZ13}. Thus, the tail copula cannot adequately discriminate between independent and dependent variables, which would result in compromised power of a specification test based on the tail copula.

For this reason, we prefer to base our test on what we term the \textit{PA-tail copula}:
\begin{equation}\label{eq:PA-tail copula}
	\Lambda_{n}^{(d)}(x,y)=\frac{n}{k}\p\left\{|\varepsilon_t|>b\Big(\frac{n}{kx}\Big),\ |\varepsilon_{t-d}|>b\Big(\frac{n}{ky}\Big)\right\},
\end{equation}
where $b(x)=F^{\leftarrow}(1-1/x)$. Now, the PA-tail copula is $\Lambda_{n}^{(d)}(x,y)=(k/n)xy$ for serially independent $\varepsilon_t$, but $\Lambda_{n}^{(d)}(x,y)\neq (k/n)xy$ when $\varepsilon_t$ follows a stochastic volatility model \citep{Hil11a} or when $(|\varepsilon_t|, |\varepsilon_{t-d}|)^\prime$ possesses a Gaussian copula. Thus, the PA-tail copula captures finer differences in pre-asymptotic levels of extremal dependence than the tail copula, ensuring that our test has power against more subtle misspecifications. For $x=y=1$, the PA-tail copula is the perhaps most popular version of the \textit{PA-extremogram}, due to \citet{DM09}. Even though the PA-tail copula differs from the tail copula, it may also be estimated via $\widetilde{\Lambda}_{n}^{(d)}(x,y)$.

\subsection{A Portmanteau-Type Test for Residual Extremal Dependence}\label{A Portmanteau-Type Test for Residual Extremal Dependence}

In specification testing, one is interested in verifying whether observations $Y_1,\ldots,Y_n$ are generated by some parametric model \eqref{eq:ls model}. Formally, one would like to test
\[
	H_0\ :\ \{Y_t\}_{t=1,2,\ldots}\ \text{are generated by the parametric model \eqref{eq:ls model}.}
\]
Typically, this is done by fitting the specific model \eqref{eq:ls model} to the $Y_1,\ldots,Y_n$. Under $H_0$, the residuals should then be approximately i.i.d. This is verified by testing some implication of the i.i.d.~property. For instance, Ljung--Box-type tests focus on the implication that the autocorrelation function is identically zero. As pointed out in the Motivation, we test the implication of $H_0$ that the residuals display no serial \textit{extremal} dependence. To this end, note that the PA-tail copula in \eqref{eq:PA-tail copula} equals $(k/n)xy$ under serial independence. Hence, we test the implication that $\Lambda_n^{(d)}(x,y)=(k/n)xy$ for $d=1,2,\ldots$. 

However, for purposes of model checking, the estimator $\widetilde{\Lambda}_{n}^{(d)}(x,y)$ from the previous subsection is infeasible, as the $\varepsilon_t$ are unobserved. Hence, we rely on the feasible counterpart
\[
	\widehat{\Lambda}_{n}^{(d)}(x,y)=\frac{1}{k}\sum_{t=d+1}^{n}I_{\{|\widehat{\varepsilon}_t|>|\widehat{\varepsilon}|_{(\lfloor kx\rfloor+1)},\ |\widehat{\varepsilon}_{t-d}|>|\widehat{\varepsilon}|_{(\lfloor ky\rfloor+1)}\}}.
\]
With this estimator, we can detect deviations from $\Lambda_n^{(d)}(x,y)=(k/n)xy$ using the Portmanteau-type test statistic
\begin{equation*}
	\mathcal{P}_n^{(D)}(x,y) = \frac{n}{xy}\sum_{d=1}^{D}\Big[\widehat{\Lambda}_n^{(d)}(x,y)-(k/n)xy\Big]^2,
\end{equation*}
where $D\in\mathbb{N}$ is some fixed user-specified integer. Once again, the fact that we compare $\widehat{\Lambda}_n^{(d)}(x,y)$ with the null hypothetical value of the PA tail copula $(k/n)xy$ (and not with the null hypothetical value of the tail copula, i.e., $0$) gives our test power in situations, where there is extremal independence (as measured by the tail copula), but not independence (as measured by the PA tail copula). 

In choosing $D$, there is the usual trade-off. Using a small $D$ possibly leads to undetected misspecifications at higher lags, resulting in a loss of power. However, choosing $D$ too large, the estimates $\widehat{\Lambda}_{n}^{(D)}(x,y)$ may be based on too few observations, thus distorting size. We explore the choice of $D$ in detail in the simulations in Appendix~\ref{Simulation Results for Varying $D$}. There, we show that $D=5$ typically leads to a good balance between size and power.

Regarding the sequence $k$, we impose

\begin{assumption}\label{ass:k}
For $n\to\infty$, the sequence $k$ satisfies $k\to\infty$, $k/n\to0$, and $k/\sqrt{n}\to\infty$.
\end{assumption}

Consistent with the notion of extremal dependence, the requirement that $k/n\to0$ ensures that only a vanishing fraction of the residuals is used in estimation. On the other hand, sufficiently many extremes need to be used, as specified by $k/\sqrt{n}\to\infty$. In the related context of estimating the \textit{tail event correlation}
\[
	r_n^{(d)}=\frac{n}{k}\left[\p\big\{|\varepsilon_{t}|>b(n/k),\ |\varepsilon_{t-d}|>b(n/k)\big\}-\p\big\{|\varepsilon_{t}|>b(n/k)\big\}\p\big\{|\varepsilon_{t-d}|>b(n/k)\big\}\right],
\]
\citet[Sec.~5.2]{Hil11} even requires $k/n^{2/3}\to\infty$. Hence, assuming $k/\sqrt{n}\to\infty$ may be regarded as a rather mild requirement.

The following is our first main theoretical result:

\begin{thm}\label{thm:mainresult}
Suppose Assumptions~\ref{ass:U distr 2}--\ref{ass:k} are met under $H_0$. Then, for any $x>0$ and $y>0$, as $n\to\infty$,
\[
	\mathcal{P}_n^{(D)}(x,y)\overset{d}{\longrightarrow}\chi^2_{D},
\]
where $\chi^2_{D}$ denotes the $\chi^2$-distribution with $D$ degrees of freedom.
\end{thm}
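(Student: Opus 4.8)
The plan is to rewrite $\mathcal{P}_n^{(D)}(x,y)=\frac{1}{xy}\sum_{d=1}^{D}\big(\sqrt{n}\,[\widehat\Lambda_n^{(d)}(x,y)-(k/n)xy]\big)^2$, to establish that under $H_0$ the $D$-vector of normalized deviations $\big(\sqrt n\,[\widehat\Lambda_n^{(d)}(x,y)-(k/n)xy]\big)_{d=1}^{D}$ converges to $N(\vzeros,\,xy\,\mI)$ (with $\mI$ the $D\times D$ identity), and then to invoke the continuous mapping theorem. Fix $x,y>0$ and abbreviate $b_x=b(n/(kx))$, $b_y=b(n/(ky))$. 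The argument proceeds by two successive reductions followed by a central limit theorem.

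First reduction (residuals to innovations). Using Lemma~\ref{lem:Lem F}, I would show $\sqrt{n}\,\big[\widehat\Lambda_n^{(d)}(x,y)-\widetilde\Lambda_n^{(d)}(x,y)\big]=o_{\p}(1)$ for each $d=1,\dots,D$, where $\widetilde\Lambda_n^{(d)}$ is the infeasible estimator built from the true $|\varepsilon_t|$. This is where Assumptions~\ref{ass:U distr 2} and~\ref{ass:UA} enter (together with $k/\sqrt n\to\infty$, which also lets one absorb the non-approximable first $\ell_n-1$ residuals). The point is delicate because $\approx kx$ residuals lie near the order-$k$ threshold and Assumption~\ref{ass:UA} only bounds the residual error uniformly by $o_{\p}(1)$; however, the regular variation of $1-F$ in Assumption~\ref{ass:U distr 2}(iii) forces a multiplicative perturbation of size $\max_t s_t=o_{\p}(1)$ near $b_x$ to move only $O_{\p}(\max_t s_t\cdot k)=o_{\p}(k)$ ranks across the cutoff, and the induced discrepancies in $\widehat\Lambda_n^{(d)}-\widetilde\Lambda_n^{(d)}$ are sums of random-sign increments that concentrate at the square-root scale, so that after multiplication by $\sqrt n$ they remain $O_{\p}(\max_t s_t)=o_{\p}(1)$ (the additive error $\max_t m_t$ is handled analogously, being relatively negligible against the diverging threshold $b_x$). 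This replacement is the main obstacle.

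Second reduction and CLT. Next I would linearize $\widetilde\Lambda_n^{(d)}$ around deterministic thresholds: a standard tail-empirical-process expansion gives $\sqrt n\,\big[\widetilde\Lambda_n^{(d)}(x,y)-\Lambda_n^{(d),\ast}(x,y)\big]=o_{\p}(1)$, where $\Lambda_n^{(d),\ast}(x,y)=k^{-1}\sum_{t=d+1}^{n}I_{\{|\varepsilon_t|>b_x,\ |\varepsilon_{t-d}|>b_y\}}$ uses $b_x,b_y$ in place of the random order statistics; the contribution of estimating the marginal quantiles is of order $(k/n)^{3/2}$ after $\sqrt n$-scaling and hence vanishes precisely because $k/n\to0$. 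It then remains to prove the CLT for $\Lambda_n^{(d),\ast}$. Under $H_0$ the $\{\varepsilon_t\}$ are i.i.d., so for fixed $d$ the array $Z_{t,d}:=I_{\{|\varepsilon_t|>b_x,\ |\varepsilon_{t-d}|>b_y\}}$, $t=d+1,\dots,n$, is $D$-dependent with $\p\{Z_{t,d}=1\}=k^2xy/n^2$; direct moment computations—every covariance term requires overlapping time indices, of which there are only $O(n)$, each of order $(k/n)^3$—yield $\Var\!\big(\sqrt n\,\Lambda_n^{(d),\ast}\big)\to xy$, $\Cov\!\big(\sqrt n\,\Lambda_n^{(d),\ast},\sqrt n\,\Lambda_n^{(d'),\ast}\big)\to0$ for $d\ne d'$, and $\sqrt n\,\big(\E[\Lambda_n^{(d),\ast}]-(k/n)xy\big)\to0$. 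Since $\Var\big(\sum_tZ_{t,d}\big)\asymp k^2/n\to\infty$ (this is where $k/\sqrt n\to\infty$ is needed, so that the limit is Gaussian and not Poisson) and $|Z_{t,d}|\le1$, a Lyapunov CLT for $D$-dependent triangular arrays (via blocking into asymptotically independent blocks) gives $\big(\sqrt n\,[\Lambda_n^{(d),\ast}(x,y)-(k/n)xy]\big)_{d=1}^{D}\overset{d}{\longrightarrow}N(\vzeros,\,xy\,\mI)$. Combining the two reductions with Slutsky's theorem (each replacement being $o_{\p}(1)$ while the leading term is $O_{\p}(1)$) and the continuous mapping theorem yields
\[
	\mathcal{P}_n^{(D)}(x,y)=\frac{1}{xy}\sum_{d=1}^{D}\Big(\sqrt n\,\big[\widehat\Lambda_n^{(d)}(x,y)-(k/n)xy\big]\Big)^2\overset{d}{\longrightarrow}\frac{1}{xy}\sum_{d=1}^{D}\big(\sqrt{xy}\,N_d\big)^2=\sum_{d=1}^{D}N_d^2\sim\chi^2_D,
\]
with $N_1,\dots,N_D$ i.i.d.\ $N(0,1)$, which is the assertion. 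Of the three ingredients, the first reduction is by far the most delicate; the tail-empirical-process linearization and the triangular-array CLT are comparatively routine.
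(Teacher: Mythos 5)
Your overall architecture coincides with the paper's: an asymptotic normality result for the oracle statistic built from the true innovations and the deterministic thresholds $b(n/(kx))$, $b(n/(ky))$ (the paper's Proposition~\ref{prop:base convergence}, proved exactly as you suggest via the Cram\'{e}r--Wold device and a CLT for $D$-dependent triangular arrays, with limiting variance $xy$ and vanishing cross-lag covariances), followed by two negligibility reductions and the continuous mapping theorem. Your second reduction (random order-statistic thresholds replaced by deterministic ones) is in substance the paper's Propositions~\ref{prop:xi convergence} and~\ref{prop:emp an}; your stated order $(k/n)^{3/2}$ should be $(k/n)^{1/2}$ (about $O_{\p}(\sqrt{k})$ observations cross the perturbed marginal threshold, each carrying a lagged co-exceedance with probability $O(k/n)$), but the conclusion is unaffected since $k/n\to0$.

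The gap is in your first reduction, which you correctly identify as the main obstacle but then dispatch with an argument that does not work. Assumption~\ref{ass:UA} only sandwiches $|\widehat{\varepsilon}_t(\vtheta)|$ between $|\varepsilon_t|(1\mp s_t)\mp m_t$, so the discrepancy must be bounded by comparing the joint-exceedance indicators with their one-sided envelopes (the paper's $A_t(\vxi,\eta,\pm1)$); for a one-sided envelope every increment has the \emph{same} sign, so there is no ``random-sign'' cancellation to appeal to, and the sum is governed by its mean rather than by square-root fluctuations. A relative threshold perturbation of size $\nu$ changes the lag-$d$ joint exceedance probability by $O(\nu k^2/n^2)$ per time point (Lemma~\ref{lem:Lem F}), so the $\sqrt{n}/k$-scaled one-sided discrepancy has mean of order $\nu k/\sqrt{n}$; since $k/\sqrt{n}\to\infty$, knowing only $\max_t s_t=o_{\p}(1)$ does not yield $o_{\p}(1)$, and your claimed bound $O_{\p}(\max_t s_t)$ does not follow. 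One would need either $\max_t s_t=o_{\p}(\sqrt{n}/k)$ (not assumed) or a genuine cancellation argument, which would require controlling the dependence between the perturbations $(s_t,m_t)$ and the lagged exceedance events --- information Assumption~\ref{ass:UA} does not provide. The paper's treatment (Lemmas~\ref{lem:wpa1}--\ref{lem:Bop1} and Proposition~\ref{prop:hat convergence}) is considerably more involved: it works on the event $\max_t\max(s_t,m_t)\le\varepsilon_0$, replaces the random perturbations by a constant envelope $\nu(\varepsilon_0)$, bounds the resulting nonnegative $d$-dependent indicator sums via a Rosenthal-type fourth-moment inequality, lets $\varepsilon_0\to0$ only after $n\to\infty$, and disposes of the first $\ell_n-1$ non-approximable residuals separately using $\ell_n=o(k/\sqrt{n})$. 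Without an argument at this level of care, the first reduction --- and hence the theorem --- is not established.
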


We defer the proof of Theorem~\ref{thm:mainresult} to Appendix~\ref{Proof of Theorem 1}. The proof reveals that the same $\chi^2$-limit as in Theorem~\ref{thm:mainresult} is obtained, if the $\widehat{\varepsilon}_t$'s are replaced with the true $\varepsilon_t$'s in $\mathcal{P}_n^{(D)}(x,y)$. Thus, the effects of parameter estimation in $\widehat{\varepsilon}_t=\widehat{\varepsilon}_t(\widehat{\vtheta})$ vanish asymptotically. In the context of residual-based estimation of univariate extremal quantities (such as the tail index or large quantiles), such vanishing estimation effects have been observed before by, e.g., \citet{Cea07}, \citet{Hil15}, and \citet{Hog18+}. The explanation is that the convergence rate of the parameter estimator is faster than that of the tail quantity, leading to vanishing parameter estimation effects.

For $x=y=1$ the PA-tail copula is the most prominent version of \citeauthor{DM09}'s \citeyearpar{DM09} PA-extremogram. Thus, the choice $x=y=1$ is the leading case for applications of $\mathcal{P}_n^{(D)}(x,y)$. However, other choices of $x$ and $y$ may lead to different test results. Thus, the next section derives a test that combines the evidence across different values of $x$ and $y$. This may also lead to more powerful specification tests, as we will see in the simulations.

\subsection{A Functional Portmanteau-Type Test}

To obtain powerful functional versions of $\mathcal{P}_n^{(D)}(x,y)$, we need to integrate it over a suitable area in the $(x,y)^\prime$-space. Observe that $\widehat{\Lambda}_n^{(d)}(x,y)$ estimates $\Lambda_n^{(d)}(x,y)$, which converges to the tail copula $\Lambda^{(d)}(x,y)$ (if it exists) for $n\to\infty$. From Theorem~1~(ii) of \citet{SS06}, we know that the tail copula is homogeneous, i.e., it satisfies $\Lambda^{(d)}(tx,ty)=t\Lambda^{(d)}(x,y)$ for all $t>0$. This suggests that $\widehat{\Lambda}_n^{(d)}(tx,ty)\approx t\widehat{\Lambda}_n^{(d)}(x,y)$, which implies that integrating over a larger space than some sphere $S_{c,\ \norm{\cdot}}:=\{(x,y)^\prime\in[0,\infty)^2\ :\ \norm{(x,y)^\prime}=c\}$ does not provide us with more information against the null. Here, $c>0$ is a constant and $\norm{\cdot}$ some norm on $\mathbb{R}^2$. A typical choice in extreme value theory is the $\norm{\cdot}_1$-norm. Since---as pointed out above---putting $x=y=1$ is a natural choice and $\norm{(1,1)^\prime}_1=2$, we fix $c=2$ and, thus, integrate over the sphere $S_{2,\ \norm{\cdot}_1}$. Using the $\norm{\cdot}_1$-norm in defining the sphere also has the added advantage of giving a limiting distribution in Theorem~\ref{thm:mainresult2} in terms of simple Brownian bridges.

These considerations lead to the functional Portmanteau-type test statistic
\[
	\mathcal{F}_n^{(D)}=n\sum_{d=1}^{D}\int_{[\iota, 1-\iota]}\Big[\widehat{\Lambda}_n^{(d)}(2-2z,2z)-\frac{k}{n}(2-2z)2z\Big]^2\D z,
\]
where $\iota\in(0,1/2)$ is some typically small constant chosen by the practitioner. The parameter $\iota$ serves to bound the $z$-values in the integral away 0 and 1, where our non-parametric estimator $\widehat{\Lambda}_n^{(d)}(2-2z,2z)$ may be unreliable as very extreme quantile thresholds are considered, above which hardly any observations lie. We mention that there is a trade-off between size and power involved in choosing $\iota$. A large value of $\iota$ leads to a smaller area being integrated over, thus leading to compromised power, yet estimates tend to be more stable, thus improving size. However, unreported simulations show that the differences in size and power are not large. In our numerical experiments, we choose $\iota=0.1$ and compute the integral in $\mathcal{F}_n^{(D)}$ using standard numerical integration techniques, which allow to approximate integrals with any required accuracy.

\begin{thm}\label{thm:mainresult2}
Suppose Assumptions~\ref{ass:U distr 2}--\ref{ass:k} are met under $H_0$. Then, as $n\to\infty$,
\[
	\mathcal{F}_n^{(D)}\overset{d}{\longrightarrow}4\sum_{d=1}^{D}\int_{[\iota,1-\iota]}B_d^2(z)\D z,
\]
where $\{B_d(\cdot)\}_{d=1,\ldots,D}$ are mutually independent Brownian bridges.
\end{thm}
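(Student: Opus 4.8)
The plan is to reduce Theorem~\ref{thm:mainresult2} to a functional central limit theorem for the process $z\mapsto \sqrt{n}\,[\widehat{\Lambda}_n^{(d)}(2-2z,2z)-(k/n)(2-2z)2z]$ on $[\iota,1-\iota]$, jointly over $d=1,\ldots,D$, and then apply the continuous mapping theorem to the square-and-integrate functional. First I would recall from the proof of Theorem~\ref{thm:mainresult} (which I may assume) that the parameter-estimation effects in $\widehat{\varepsilon}_t=\widehat{\varepsilon}_t(\widehat{\vtheta})$ are asymptotically negligible; using Assumptions~\ref{ass:UA} and~\ref{ass:U distr 2} together with Lemma~\ref{lem:Lem F}, one shows $\sup_{z\in[\iota,1-\iota]}\sqrt n\,|\widehat{\Lambda}_n^{(d)}(2-2z,2z)-\widetilde{\Lambda}_n^{(d)}(2-2z,2z)|=o_{\p}(1)$, so that it suffices to work with the infeasible estimator $\widetilde{\Lambda}_n^{(d)}$ built from the true $|\varepsilon_t|$. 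This uniform-in-$z$ version of the negligibility statement is stronger than what Theorem~\ref{thm:mainresult} needs pointwise, so I would make sure the bound in Assumption~\ref{ass:UA} is exploited uniformly over the relevant order-statistic thresholds, which range over $z\in[\iota,1-\iota]$ and hence over a compact set of intermediate ranks bounded away from $0$ and $k$.

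Next I would establish the weak convergence in $(\ell^\infty([\iota,1-\iota]))^D$ (or $C[\iota,1-\iota]^D$) of the vector process with $d$-th coordinate $\sqrt n\,[\widetilde{\Lambda}_n^{(d)}(2-2z,2z)-(k/n)(2-2z)2z]$. Under $H_0$ the $\varepsilon_t$ are i.i.d., so $|\varepsilon_t|$ and $|\varepsilon_{t-d}|$ are independent for each fixed $d\ge1$; the empirical tail-copula process is then a standard object. I would replace the random order-statistic thresholds $|\varepsilon|_{(\lfloor k(2-2z)\rfloor+1)}$ and $|\varepsilon|_{(\lfloor k\cdot 2z\rfloor+1)}$ by the deterministic quantiles $b(n/(k(2-2z)))$ and $b(n/(2kz))$ via the usual Vervaat-type / tail-empirical-quantile argument (again invoking the von Mises condition in Assumption~\ref{ass:U distr 2} for the needed regular-variation smoothness), picking up the correction terms that turn the bivariate tail-empirical count into a bivariate tail-empirical \emph{bridge}. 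Because $|\varepsilon_t|$ and $|\varepsilon_{t-d}|$ are independent, the limiting centered Gaussian field factorizes: the lag-$d$ limit process evaluated at $(2-2z,2z)$ becomes $\sqrt{(2-2z)(2z)}$ times a process that, thanks to the choice $\|(x,y)\|_1=2$ on the sphere so that $(2-2z)+2z\equiv 2$, collapses to a single Brownian bridge $B_d(z)$ in the variable $z$ — this is exactly where the remark in the text about the $\|\cdot\|_1$-norm "giving simple Brownian bridges" is used. The covariance computation, $\operatorname{Cov}$ of the lag-$d$ limit at $z$ and $z'$ reducing to $z\wedge z' - zz'$ up to the deterministic factor $4\sqrt{z(1-z)z'(1-z')}/\ldots$, should be checked carefully; the factor $4$ in the statement comes from $(2-2z)(2z)=4z(1-z)$ being pulled outside the square. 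Joint asymptotic independence across $d=1,\ldots,D$ follows because distinct lags involve "staggered" index blocks whose overlap is asymptotically negligible at the tail-empirical scale — a short argument bounding cross-covariances of indicator sums, of the type used in extremogram CLTs (cf.\ the references to \citet{DM09}, \citet{Hil11}).

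With the joint functional convergence in hand, the conclusion is immediate: the map $g\mapsto \sum_{d=1}^D \int_{[\iota,1-\iota]} g_d^2(z)\,\D z$ is continuous on $(\ell^\infty)^D$ restricted to continuous paths, so by the continuous mapping theorem $\mathcal{F}_n^{(D)}=n\sum_d\int_{[\iota,1-\iota]}[\widehat\Lambda_n^{(d)}(2-2z,2z)-(k/n)(2-2z)2z]^2\,\D z$ converges to $4\sum_{d=1}^D\int_{[\iota,1-\iota]}B_d^2(z)\,\D z$ with the $B_d$ independent Brownian bridges. I expect the main obstacle to be the \emph{uniform} (in $z$) handling of the switch from random to deterministic quantile levels and the attendant tightness of the bivariate tail-empirical process — establishing stochastic equicontinuity of $z\mapsto\sqrt n\,[\widetilde\Lambda_n^{(d)}(2-2z,2z)-(k/n)(2-2z)2z]$ on $[\iota,1-\iota]$, which requires a maximal inequality or bracketing bound for the underlying indicator class and careful control of the Assumption~\ref{ass:k} rate conditions ($k\to\infty$, $k/n\to0$, $k/\sqrt n\to\infty$) to ensure both that the centering is exact in the limit and that the variance normalization is the classical one. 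The finite-dimensional convergence and the continuous-mapping step are comparatively routine once that tightness is secured.
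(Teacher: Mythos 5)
Your overall architecture matches the paper's proof in Appendix~\ref{Proof of Theorem 2}: a functional CLT for the infeasible, deterministically thresholded bivariate tail-empirical process (Proposition~\ref{prop:base functional convergence}), uniform-in-$(x,y)$ negligibility of both the residual replacement and the random-threshold replacement (Propositions~\ref{prop:Mn unif x xi}--\ref{prop:unif emp an}), and then the continuous mapping theorem. The one genuine problem is your account of \emph{where the Brownian bridge comes from}. You expect the switch from the deterministic quantiles $b(n/[kz])$ to the order statistics to "pick up correction terms that turn the bivariate tail-empirical count into a bridge." That is the right intuition for the univariate tail empirical process at rate $\sqrt{k}$, but it fails here: under $H_0$ the bivariate exceedance probability is $(k/n)^2xy$, so the relevant rate is $\sqrt{n}$, and since $k/\sqrt{n}\to\infty$ (Assumption~\ref{ass:k}) the $O_{\p}(k^{-1/2})$ error in the empirical thresholds induces a drift of order $\sqrt{k}/n=o(n^{-1/2})$. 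Propositions~\ref{prop:Mn unif x xi} and~\ref{prop:unif emp an} show the switch contributes \emph{nothing} in the limit; there is no Vervaat-type correction. If you build your covariance calculation around such a correction you will not land on the stated limit.

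The bridge instead arises purely from restricting the limiting Brownian sheet to the $\ell_1$-sphere. The deterministically thresholded process converges to a sheet $W_d$ with $\Cov\big(W_d(x_1,y_1),W_d(x_2,y_2)\big)=\min(x_1,x_2)\min(y_1,y_2)$, so along the curve $(2-2z,2z)$, taking $z\leq z'$, one gets $\Cov\big(W_d(2-2z,2z),W_d(2-2z',2z')\big)=(2-2z')(2z)=4\big(\min(z,z')-zz'\big)$, i.e.\ $\{W_d(2-2z,2z)\}_z\overset{d}{=}\{2B_d(z)\}_z$; squaring produces the factor $4$. Your sketched covariance "$z\wedge z'-zz'$ up to a factor $4\sqrt{z(1-z)z'(1-z')}$" is not of this form and should be replaced by the computation above. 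A minor presentational difference: the paper proves tightness of the full two-parameter process in $D[0,1]^2$ via the Bickel--Wichura moment criterion (reusing the same fourth-moment bounds, via Shao's $\rho$-mixing inequality for the $d$-dependent indicators, that drive the pointwise results) and only then restricts to the curve, whereas you propose equicontinuity of the one-parameter restriction directly; either route is viable once the two issues above are fixed.
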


Theorem~\ref{thm:mainresult2} is proven in Appendix~\ref{Proof of Theorem 2}. While the limiting distribution in Theorem~\ref{thm:mainresult2} is non-standard, it is free of nuisance parameters. Thus, critical values can be computed to any desired degree of precision by simulating sufficiently often from the limit. Table~\ref{tab:cv} shows some selected critical values for $\iota=0.1$ and different numbers of lags $D$.\footnote{The critical values in Table~\ref{tab:cv} are based on 4,000,000 replications of a Brownian bridge simulated on 100,000 grid points.} 

\begin{rem}
In the spirit of \citet{AD52}, it is also possible to consider a weighted version of the functional test statistic \`{a} la
\begin{equation}\label{eq:lim WF}
	\mathcal{WF}_n^{(D)}=n\sum_{d=1}^{D}\int_{[\iota, 1-\iota]}\psi(z)\Big[\widehat{\Lambda}_n^{(d)}(2-2z,2z)-\frac{k}{n}(2-2z)2z\Big]^2\D z\overset{d}{\longrightarrow}4\sum_{d=1}^{D}\int_{[\iota,1-\iota]}\psi(z)B_d^2(z)\D z,
\end{equation}
where $\psi(z)\geq0$ is a weight function satisfying some regularity conditions (e.g., continuity). The weight function allows to put more weight on certain $z$-values; e.g., $\phi(z)=z(1-z)$ would accentuate $z$ that are further away from the boundaries. We refrain from exploring such possibilities. However, we mention that for $\psi(z)\equiv1/4$, $\iota=0$ and $D=1$, the limit in \eqref{eq:lim WF} equals the limit of the well-known Cram\'{e}r--von Mises goodness-of-fit test. For $\psi(z)=1/[z(1-z)]$, we obtain the limit of the Anderson--Darling test statistic.
\end{rem}

\begin{table}[!t]
	\centering
		\begin{tabular}{lrrrrrrrrrr}
			\toprule
			$\alpha$			& \multicolumn{10}{c}{$D$}\\[0.25ex]
\cline{2-11}\\[-2.25ex] 
										& 1 & 2 & 3 & 4 & 5 & 6 & 7 & 8 & 9 & 10 \\
			\midrule
			10\%          & 1.340 & 2.336 & 3.231 & 4.077 & 4.896 & 5.694 & 6.477 & 7.249 &  8.011 &  8.766 \\
			5\%           & 1.791 & 2.890 & 3.859 & 4.765 & 5.636 & 6.480 & 7.306 & 8.117 &  8.916 &  9.705 \\
			1\%           & 2.905 & 4.178 & 5.273 & 6.286 & 7.248 & 8.178 & 9.082 & 9.964 & 10.832 & 11.683 \\
			\bottomrule
		\end{tabular}
	\caption{Critical values $c_{\alpha}^{(D)}$ for significance level $\alpha\in(0,1)$ and $D$ for $\iota=0.1$ in the limiting distribution in Theorem~\ref{thm:mainresult2}.}
	\label{tab:cv}
\end{table}

\section{Simulations}\label{Simulations}

Here, we investigate the size and power of the specification tests based on $\mathcal{P}_n^{(D)}$ (with the standard choice $x=y=1$) and $\mathcal{F}_n^{(D)}$, and compare the results with those of classical Ljung--Box tests based on the autocorrelations of the squared residuals. We do so for different samples sizes $n\in\{500,\ 1000,\ 2000\}$ and different significance levels $\alpha\in\{1\%,\ 5\%,\ 10\%\}$. Furthermore, we pick $\iota=0.1$. Other choices of $\iota$ (e.g., $\iota=0.05$ or even $\iota=0.01$) do not materially change the results. All simulations use \textit{R version 4.0.3} (\citealp{R4.0.3}) and results are based on 10,000 replications.

The choice of $D$ is common to all tests and we pick $D=5$ in the following. This choice is supported by simulations in Appendix~\ref{Simulation Results for Varying $D$}, where we investigate the influence of $D$ on the tests. Here, we instead explore the effect of different $k$'s on our test statistics $\mathcal{P}_n^{(D)}$ and $\mathcal{F}_n^{(D)}$. In applications of extreme value methods the choice of $k$ is often a tricky issue, because the results may be very sensitive to the specific value of $k$ \citep{HF06}. Therefore, it is of interest to investigate the robustness of our test with respect to $k$. A very popular choice in applications of extreme value theory is to follow \citeauthor{DuM83}'s \citeyearpar{DuM83} rule by setting $k=\lfloor 0.1\cdot n\rfloor$ \citep{QFP01,MF00,MT11,CES14}. However, at least asymptotically, this is not a valid choice (here and elsewhere) since it fails the Assumption~\ref{ass:k} requirement that $k$ be a vanishing fraction of $n$ ($k/n\to0$). Nonetheless, its widespread use suggests some merit in finite samples. Thus, we consider $k=\lfloor\varrho n^{0.99}\rfloor$ for $\varrho\in[0.05,\, 0.15]$, where the choice $\varrho=0.11$ roughly corresponds to \citeauthor{DuM83}'s \citeyearpar{DuM83} rule. Note that $k=\lfloor\varrho n^{0.99}\rfloor$ satisfies both Assumption~\ref{ass:k} requirements.

For simplicity, we only consider model \eqref{eq:ls model} with zero conditional mean, $\mu_t(\vtheta^{\circ})\equiv0$. We do so, because unmodeled dynamics in the tails of the conditional distribution are most likely caused by misspecified volatility dynamics in practice. Throughout, we simulate from APARCH(1,1) models under the null of correct specifications, so that in particular our high-level Assumption~\ref{ass:UA} is satisfied; see Appendix~\ref{APARCH and ARMA--GARCH Models}.

 We compare our tests with classical Ljung--Box tests when volatility dynamics are misspecified (Section~\ref{Misspecified Volatility}) and for non-i.i.d.~$\varepsilon_t$ (Section~\ref{Misspecified Innovations}). In both cases, we estimate our models using the \texttt{rugarch} package (\citealp{rugarch}).

\subsection{Misspecified Volatility}\label{Misspecified Volatility}

We generate time series from an APARCH model with an exogenous covariate $x_t$ in the volatility equation. Specifically, we simulate from the APARCH--X(1,1) model with $\delta^{\circ}=1$ given by
\begin{align}
	Y_t             &= \sigma_t(\vtheta^{\circ}) \varepsilon_t,\qquad\varepsilon_t\overset{\text{i.i.d.}}{\sim}(0,1)\notag\\
	\sigma_t(\vtheta^{\circ}) &=\omega^{\circ} + \alpha_{+,1}^{\circ}(Y_{t-1})_{+}
																	+ \alpha_{-,1}^{\circ}(Y_{t-1})_{-}
																	+ \beta_{1}^{\circ}\sigma_{t-1}(\vtheta^{\circ})
																	+ \pi_{1}^{\circ}x_{t-1}.\label{eq:APARCH-X(1,1)}
\end{align}
Here, the $\varepsilon_t$ follow a (standardized) Student's $t$-distribution with $4.1$ degrees of freedom to ensure $\E|\varepsilon_t^4|<\infty$, which is required for $\sqrt{n}$-consistent QML estimation in Assumption~\ref{ass:estimator} \citep{FT19}. The exogenous $x_t$ are stationary with
\begin{align*}
	x_{t}&=\exp(z_t), \\
	z_t  &=0.9\cdot z_{t-1}+e_t,\qquad e_t\overset{\text{i.i.d.}}{\sim}N(0,1).
\end{align*}
For the parameters, we take $\vtheta^{\circ}:=\vtheta^{\circ}_{s}:=(\omega^{\circ}, \alpha_{+,1}^{\circ}, \alpha_{-,1}^{\circ}, \beta_{1}^{\circ},\pi_{1}^{\circ})^\prime=(0.046, 0.027, 0.092, 0.843, 0)^\prime$ (to compute size) and $\vtheta^{\circ}:=\vtheta^{\circ}_{p}:=(0.046, 0.027, 0.092, 0.843, 0.089)^\prime$ (to compute power).\footnote{The data-generating process is taken from \citet[Sec.~3.1]{FT19}, who obtained the parameters in \eqref{eq:APARCH-X(1,1)} from fitting an APARCH--X model to Boeing returns.} Irrespective of the choice of $\vtheta^{\circ}$, we estimate an APARCH(1,1) model (i.e., we estimate \eqref{eq:APARCH-X(1,1)} imposing $\pi_{1}^{\circ}=0$) via QML. Thus, when $\vtheta^{\circ}=\vtheta^{\circ}_{s}$ we fit a correctly specified model (yielding size), yet when $\vtheta^{\circ}=\vtheta^{\circ}_{p}$ the fitted model is misspecified (yielding power). In the latter case, the exogenous variable serves to introduce some unmodeled dynamics in the variance equation, with infrequent outbursts of $z_t$ inducing some dependence in the tails of the residuals.\footnote{The second-to-top panel of Figure~\ref{fig:SimTS} in Appendix~\ref{Simulation Results for Varying $D$} shows a representative trajectory of $z_t$.} Hence, our simulation setup is well-suited to $\mathcal{P}_n^{(D)}$ and $\mathcal{F}_n^{(D)}$ and, at the same time, relevant in practice as exogenous volatility shocks are often empirically plausible; see \citet{HK14} for references.

For $v=10$, we generate $\{Y_t\}_{t=-v+1,\ldots,n}$ from \eqref{eq:APARCH-X(1,1)}. Since volatility estimates $\widehat{\sigma}_{t}(\widehat{\vtheta})$ may be imprecise for the first few $t$ due to initialization effects in the variance equation \citep[see also][]{HY03}, we discard the first $v$ standardized residuals and only consider $\{\widehat{\varepsilon}_t=Y_t/\widehat{\sigma}_{t}(\widehat{\vtheta})\}_{t=1,\ldots,n}$ in our test statistics $\mathcal{P}_n^{(D)}$ and $\mathcal{F}_n^{(D)}$.

Since we consider $k=\lfloor\varrho n^{0.99}\rfloor$ with $\varrho\in[0.05,\, 0.15]$, the $k$'s are between $[23, 70]$ / $[46, 139]$ / $[92, 278]$ for $n=500$ / $n=1000$ / $n=2000$. These values may be regarded as being sufficiently small relative to the sample size for extreme value methods to be applicable.

For purposes of comparison, we also include the results of a classical Ljung--Box test (with test statistic denoted by $\LB_{n}^{(D)}$) based on the first $D$ autocorrelations of the squared residuals. We employ the corrected Ljung--Box test statistic of \citet{CF11}, which ensures that $\LB_n^{(D)}$ is asymptotically $\chi_D^2$-distributed. Note that this correction is specific to APARCH models estimated via a Gaussian QML estimator.

\begin{figure}[t!]
	\centering
		\includegraphics[width=\textwidth]{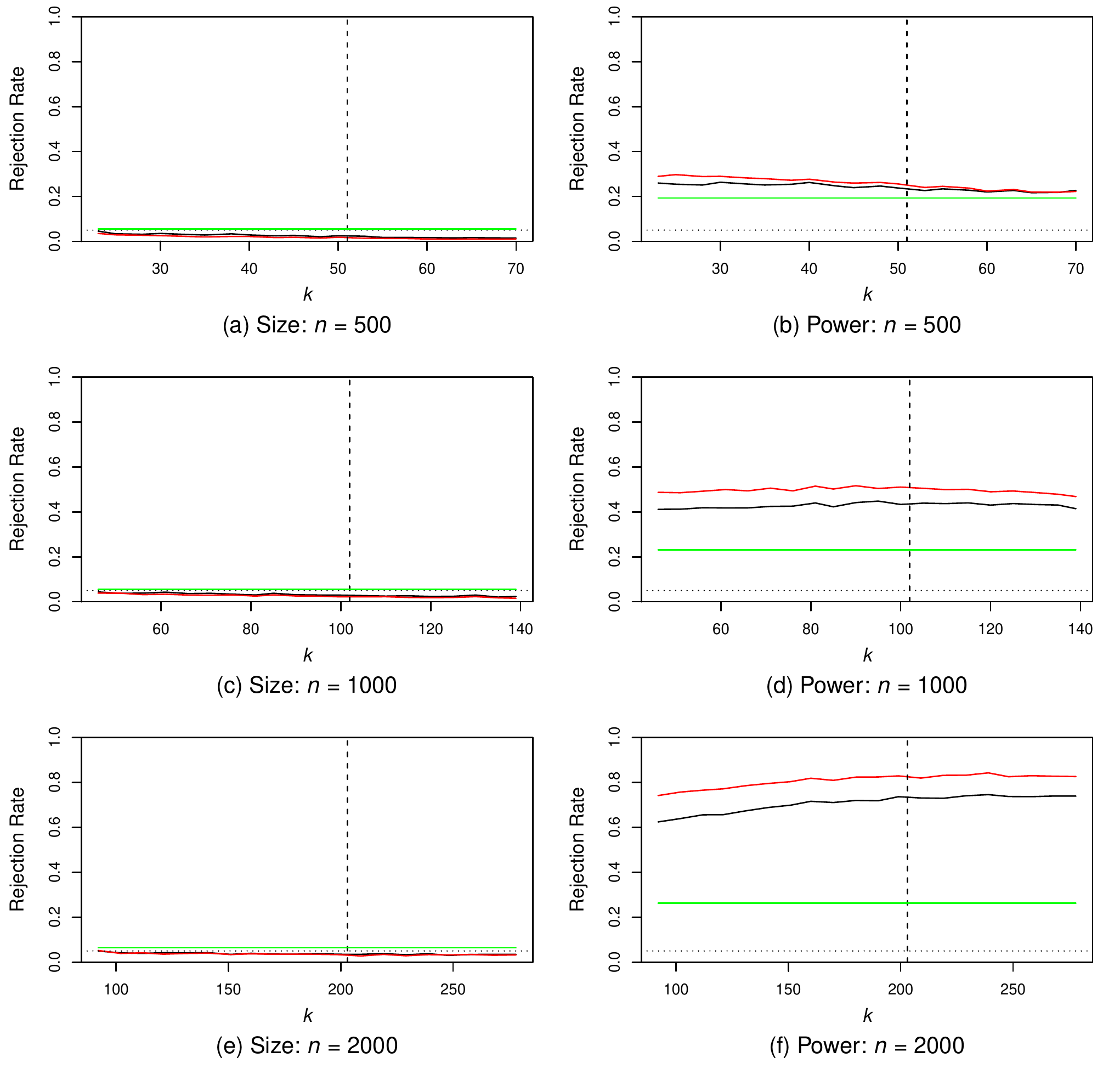}
	\caption{Rejection frequencies at the 5\%-level under the null and alternative for $\mathcal{P}_n^{(D)}$-test (black), $\mathcal{F}_n^{(D)}$-test (red) and $\LB_n^{(D)}$-test (green) for $D=5$ and $k=\lfloor\varrho n^{0.99}\rfloor$ with $\varrho\in[0.05,\, 0.15]$. Nominal level of 5\% indicated by the dotted horizontal line. Dashed vertical line indicates value of $k=\lfloor 0.11\cdot n^{0.99}\rfloor$. Results under the null (alternative) are for model \eqref{eq:APARCH-X(1,1)} with $\vtheta^{\circ}=\vtheta^{\circ}_{s}$ ($\vtheta^{\circ}=\vtheta^{\circ}_{p}$).}
	\label{fig:sev k}
\end{figure}

Figure~\ref{fig:sev k} displays the rejection frequencies of all tests at the 5\%-level. Both the $\mathcal{P}_n^{(D)}$-test (black) and the $\mathcal{F}_n^{(D)}$-test (red) have almost identical size, which---as the sample size increases---converges rapidly to the nominal level for all $k$. Larger differences in power between the two tests only emerge for large $n$. We also see that size and power are reasonably stable across different choices of $k$, suggesting that test results are quite robust to the particular value of $k$. This is encouraging given that extreme value methods can be very sensitive to the choice of the cutoff. Our approximation $k=\lfloor0.11\cdot n^{0.99}\rfloor$ to \citeauthor{DuM83}'s \citeyearpar{DuM83} rule---indicated by the dashed vertical lines in Figure~\ref{fig:sev k}---yields good results in terms of both size and power. Note that for $n=500$, where at first sight our heuristic does not lead to optimal power, size and power decrease with increasing $k$, such that the size-corrected power of our proposal for the choice of $k$ appears very close to optimal.

Finally, we compare our two tests with the $\LB_n^{(D)}$-test (green), which is of course independent of $k$. We find that---despite rejecting more often under the null---the $\LB_n^{(D)}$-test rejects less often under the alternative. Hence, the size-corrected power of our tests is even larger than the differences in the right-hand panels of Figure~\ref{fig:sev k} suggest.

\subsection{Misspecified Innovations}\label{Misspecified Innovations}

Standard location--scale models as in \eqref{eq:ls model} can only dynamically model the conditional mean and the conditional variance. However, higher-order features of the conditional distribution---such as skewness and kurtosis---may sometimes change as well for real data \citep{Han94,JR03a,BMT08}. Here, we assess how our tests perform relative to a standard Ljung--Box test in such a setting. To account for estimation effects, we use the version of the Ljung--Box test based on Theorem~8.2 of \citet{FZ10}. Since no confusion can arise, we denote the corresponding test statistic by $\LB_n^{(D)}$, which again has a standard $\chi_{D}^2$-limit.

To carry out the comparison, we consider the following GARCH(1,1) model
\begin{align}
	Y_t             &= \sigma_t(\vtheta^{\circ}) \varepsilon_t,\qquad\varepsilon_t\sim(0,1),\notag\\
	\sigma_t^2(\vtheta^{\circ}) &=\omega^{\circ} + \alpha^{\circ}Y_{t-1}^2																+ \beta^{\circ}\sigma_{t-1}^2(\vtheta^{\circ})\label{eq:GARCH(1,1)}
\end{align}
with $\vtheta^{\circ}:=(\omega^{\circ}, \alpha^{\circ}, \beta^{\circ})^\prime=(0.046, 0.127, 0.843)^\prime$. We assume that $\varepsilon_t$ is distributed according to \citeauthor{Han94}'s \citeyearpar{Han94} skewed $t$-distribution (written: $\varepsilon_t\sim st_{\lambda,\eta}$), i.e., it has density
\[
	f_{\lambda,\eta}(x)=bc\Bigg[1+\frac{1}{\eta-2}\Big(\frac{bx+a}{1+\sign(x+a/b)\lambda}\Big)^2\Bigg]^{-\frac{\eta+1}{2}},\qquad x\in(-\infty,\infty),\ \eta\in(2,\infty),\ \lambda\in(-1,1),
\]
where
\[
	a=4\lambda c\frac{\eta-2}{\eta-1},\qquad b^2=1+3\lambda^2 - a^2,\qquad c=\frac{\Gamma((\eta+1)/2)}{\sqrt{\pi(\eta-2)} \Gamma(\eta/2)}.
\]
For $\lambda=0$, $f_{\lambda,\eta}(\cdot)$ reduces to the standard $t$-distribution with degrees of freedom equal to $\eta$. \citet[Appendix~A]{JR03a} show that $f_{\lambda,\eta}(\cdot)$ has zero mean and unit variance. In their equations (2) and (3), they also derive formulas for the skewness and kurtosis as functions of $\lambda$ and $\eta$. 

To introduce time variation in the skewness and kurtosis, we let $\lambda=\lambda_t$ and $\eta=\eta_t$ vary over time. To do so, we use Model M4 of \citet[Table~1]{JR03a}. That is, we let
\begin{align*}
	\widetilde{\eta}_t 		&= a_1+b_1Y_{t-1}+c_1\widetilde{\eta}_{t-1},\\
	\widetilde{\lambda}_t &= a_2+b_2Y_{t-1}+c_2\widetilde{\lambda}_{t-1},
\end{align*}
and restrain $\eta_t$ and $\lambda_t$ to the intervals $(2,30)$ and $(-1, 1)$ by the logistic transformations $\eta_t=g_{(2, 30)}(\widetilde{\eta}_t)$ and $\lambda_t=g_{(-1, 1)}(\widetilde{\lambda}_t)$, where $g_{(L,U)}(x)=L+(U-L) / [1+\exp(-x)]$. Under the null, we set $(a_1,b_1,c_1)^\prime=(-3, 0, 0)^\prime$ and $(a_2,b_2,c_2)^\prime=(-1, 0, 0)^\prime$, leading to constant $\eta_t\equiv28.67...$ and $\lambda_t\equiv0.45...$. Thus, the $\varepsilon_t$ are serially independent and $H_0$ is satisfied. Under the alternative, we introduce time-variation in the skewness and kurtosis by using non-zero $b_i$ and $c_i$ ($i=1,2$). Specifically, we let $(a_1,b_1,c_1)^\prime=(-3, -6, 0.6)^\prime$ and $(a_2,b_2,c_2)^\prime=(-1, -2, 0.6)^\prime$. Thus, while $\varepsilon_t\sim(0,1)$, the $\varepsilon_t$ have time-varying skewness and kurtosis and, thus, are not i.i.d., as required under the null. 

To estimate the model parameters, we again use Gaussian QML. \citet[Theorem~2]{Esc09} shows that the QML estimator is asymptotically normal under both the null and the alternative.

\begin{figure}[t!]
	\centering
		\includegraphics[width=\textwidth]{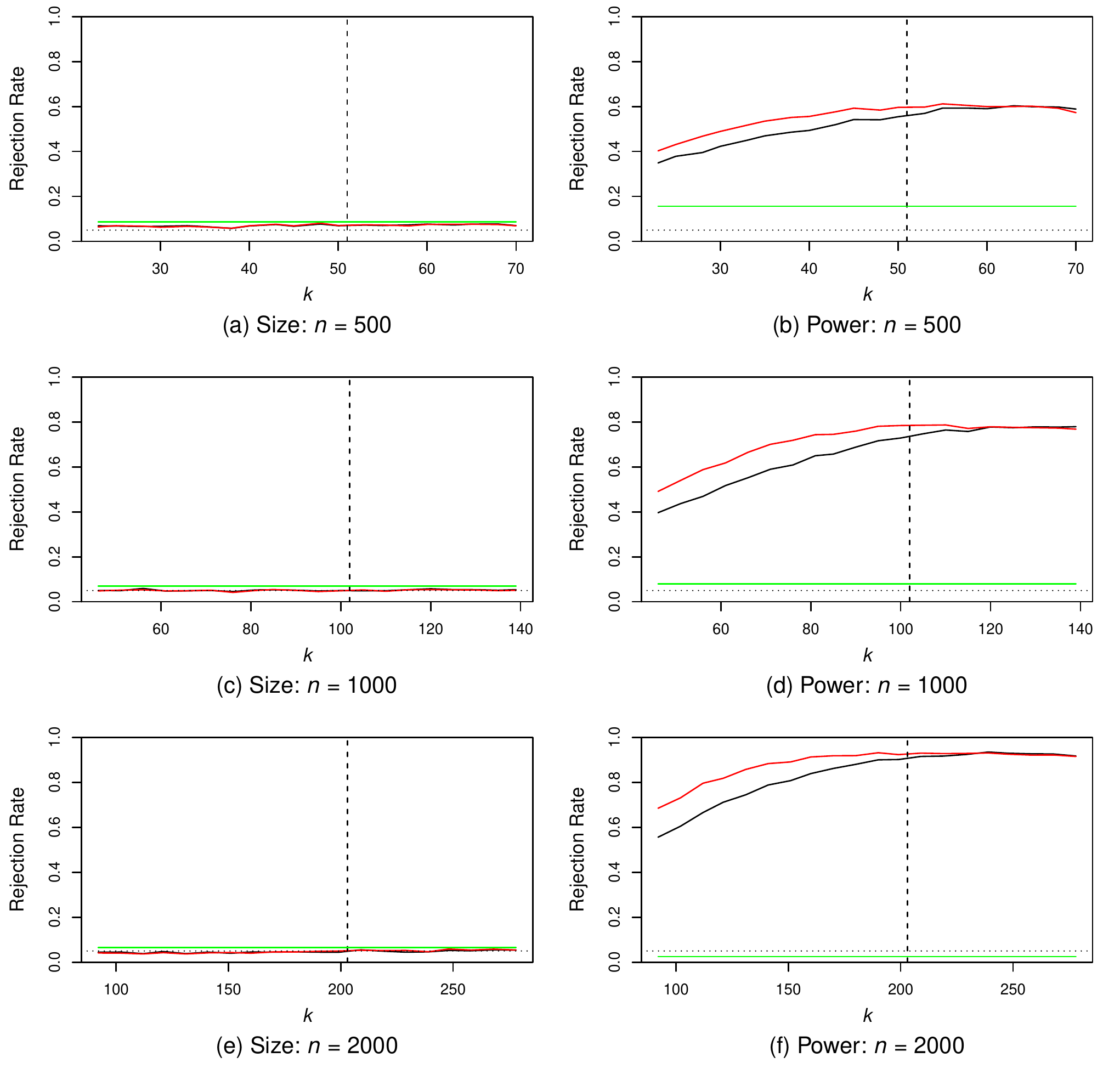}
	\caption{Rejection frequencies at the 5\%-level under the null and alternative for $\mathcal{P}_n^{(D)}$-test (black), $\mathcal{F}_n^{(D)}$-test (red) and $\LB_n^{(D)}$-test (green) for $D=5$ and $k=\lfloor\varrho n^{0.99}\rfloor$ with $\varrho\in[0.05,\, 0.15]$. Nominal level of 5\% indicated by the dotted horizontal line. Dashed vertical line indicates value of $k=\lfloor 0.11\cdot n^{0.99}\rfloor$. Results under the null (alternative) are for model \eqref{eq:GARCH(1,1)} with $(a_1,b_1,c_1)^\prime=(-3, 0, 0)^\prime$ and $(a_2,b_2,c_2)^\prime=(-1, 0, 0)^\prime$ ($(a_1,b_1,c_1)^\prime=(-3, -6, 0.6)^\prime$ and $(a_2,b_2,c_2)^\prime=(-1, -2, 0.6)^\prime$).}
	\label{fig:sev k1}
\end{figure}

Figure~\ref{fig:sev k1} displays the results. For all tests, size is very close to the nominal level, such that power is directly comparable. While for the $\LB_n^{(D)}$-test power is low and even decreases below the nominal level for $n=2000$, the power of our tests is high and approaches one. The huge difference in power may be explained as follows. The $\LB_n^{(D)}$-test only looks at the autocorrelations of the squared residuals $\widehat{\varepsilon}_t^2$. However, the $\varepsilon_t^2$ satisfy $\E[\varepsilon_t^2\mid\mathcal{F}_{t-1}]=1$, such that---despite being serially dependent---they have zero autocorrelations. Hence, the power of the $\LB_n^{(D)}$-test is seriously impaired. Comparing the $\mathcal{P}_n^{(D)}$- and the $\mathcal{F}_n^{(D)}$-test, we find that the latter has higher power for small $k$, yet the difference becomes negligible for larger $k$. For these two tests, the results are now more sensitive to the choice of $k$ than in Figure~\ref{fig:sev k}. Yet, our suggestion $k=\lfloor0.11\cdot n^{0.99}\rfloor$ once again leads to good size and power.

To summarize the simulation results, we find that size and power of our tests may depend somewhat on the choice of $k$. However, setting $k=\lfloor0.11\cdot n^{0.99}\rfloor$, which approximates \citeauthor{DuM83}'s \citeyearpar{DuM83} rule, gives good results---irrespective of the type of misspecification. Although our tests work well for $k=\lfloor0.11\cdot n^{0.99}\rfloor$, in practice we recommend to apply the tests for several values of $k$ as a `robustness check' (see Figure~\ref{fig:Fig6} in the empirical application for an example), much like results for standard Ljung--Box tests are also routinely reported for several lags. Our tests also depend on the number of included lags and, although we find $D=5$ to lead to good results in Appendix~\ref{Simulation Results for Varying $D$}, we again recommend---if possible---to report results for different $D$. We generally find that the $\mathcal{F}_n^{(D)}$-test is more powerful than the $\mathcal{P}_n^{(D)}$-test, yet both have markedly higher power than the $\LB_n^{(D)}$-test for the types of misspecifications considered here. As we have argued, these misspecifications may be empirically plausible, such that our tests should have substantial merit in practice. We investigate this next.

\section{Diagnosing S\&P~500 Constituents}\label{Application}

Consider the 500 components of the Standard \& Poor's 500  (S\&P~500) as of 31/5/2021. We apply the $\mathcal{F}_n^{(D)}$- and $\LB_n^{(D)}$-test to the log-returns of each of the constituents. We do so for the recommended values $D=5$, $k=\lfloor 0.11\cdot n^{0.99}\rfloor$ and $\iota=0.1$. For brevity, we do not report results for $\mathcal{P}_n^{(D)}$, which showed slightly inferior performance to $\mathcal{F}_n^{(D)}$ in the simulations. We consider the 495 components of the S\&P~500 for which we have complete data for our sample period from 1/1/2010 to 31/12/2019. We split the 10-year sample into an `in-sample' period of 8 years (used for specification testing) and an `out-of-sample' period of 2 years (used for backtesting VaR forecasts). The goal of our analysis is twofold. First, we want to illustrate that (similarly as in the simulations) cases may arise where $\mathcal{F}_n^{(D)}$ warrants a rejection, but $\LB_{n}^{(D)}$ does not. Our second goal is to show that in these cases, a rejection by $\mathcal{F}_n^{(D)}$ is not spurious, but---on the contrary---indicative of `out-of-sample' forecast failure of the model. 

Specifically, we fit an APARCH(1,1) model without covariates as in \eqref{eq:APARCH-X(1,1)} (i.e., with $\pi_1^{\circ}=0$) based on the `in-sample' period, and use the fitted model for one-step-ahead VaR forecasting in the `out-of-sample' part. The VaR at level $\theta$ is the loss in $(t+1)$ that is only exceeded with probability $\theta$ given the current state of the market (embodied by some information set $\mathcal{F}_t$). Formally, $\VaR_t$ is the $\mathcal{F}_{t}$-measurable random variable satisfying
\[
	\p\{Y_{t+1}\leq\VaR_t\mid\mathcal{F}_t\} = \theta.
\]
It immediately follows from the multiplicative structure in \eqref{eq:APARCH-X(1,1)} that $\VaR_t=\sigma_{t+1}\VaR_{\varepsilon}$, where $\VaR_{\varepsilon}$ is the (unconditional) $\theta$-quantile of the $\varepsilon_t$. Thus, we can easily compute the `out-of-sample' VaR forecasts from the volatility forecasts of the fitted model (say, $\widehat{\sigma}_{t+1}$) and an estimate of $\VaR_{\varepsilon}$ from the standardized residuals of the `in-sample' period. Then, we backtest the VaR forecasts using the DQ test of \citet{EM04}.\footnote{We use the DQ test as implemented in the \texttt{GAS} package (\citealp{GAS}) with 4 lags.}

\begin{table}[t!]
	\begin{center}
		\begin{tabular}{cccc}
			\toprule
$\LB_{n}^{(D)}$	& $\mathcal{F}_{n}^{(D)}$    &	Total & DQ-test \\
\midrule	
0 & 0   &           181   &   60   \\
0 & 1   &           122   &   69  \\
1 & 0   &           113   &   31  \\
1 & 1   &            79   &   38  \\
	\bottomrule
		\end{tabular}
	\end{center}
\caption{\label{tab:app}Results of specification and DQ tests carried out at 5\%-level.}
\end{table}

As we run two specification tests, there are four potential outcomes because the $\LB_{n}^{(D)}$- and the $\mathcal{F}_{n}^{(D)}$-test can either accept (indicated by a 0 in Table~\ref{tab:app}) or reject (indicated by a 1 in Table~\ref{tab:app}). The column `Total' in Table~\ref{tab:app} indicates the total number of times each of the four cases occurs for the 495 S\&P~500 stocks. In 181 out of 495 cases, none of the tests reject. For these 181 stocks, the `out-of-sample' VaR forecasts are rejected 60 times by the DQ-test. One reason for this rather large number of rejections may be the long out-of-sample period of 2 years. Typically, models are re-estimated at least quarterly \citep{AH14}, yet we only fitted the model once based on the `in-sample' data. Nonetheless, the rather long out-of-sample period of 2 years (yielding roughly 500 daily observations) is needed for the DQ-test to have reasonable power. Table~\ref{tab:app} further shows that for 122 stocks, $\mathcal{F}_{n}^{(D)}$ but not $\LB_{n}^{(D)}$ leads to a rejection. The fact that for these 122 stocks the DQ test rejects in 69 cases suggests some unmodeled dynamics in these stocks, that were picked up by our $\mathcal{F}_{n}^{(D)}$-test, yet not the Ljung--Box test. When both tests agree in rejecting a model (which happens in 79 cases), the model produces inadequate VaR forecasts 38 times. For the remaining 113 stocks, only $\LB_{n}^{(D)}$ rejects the null, yet the DQ-test rejects only 31 times.

\begin{figure}[t!]
	\centering
		\includegraphics[width=\textwidth]{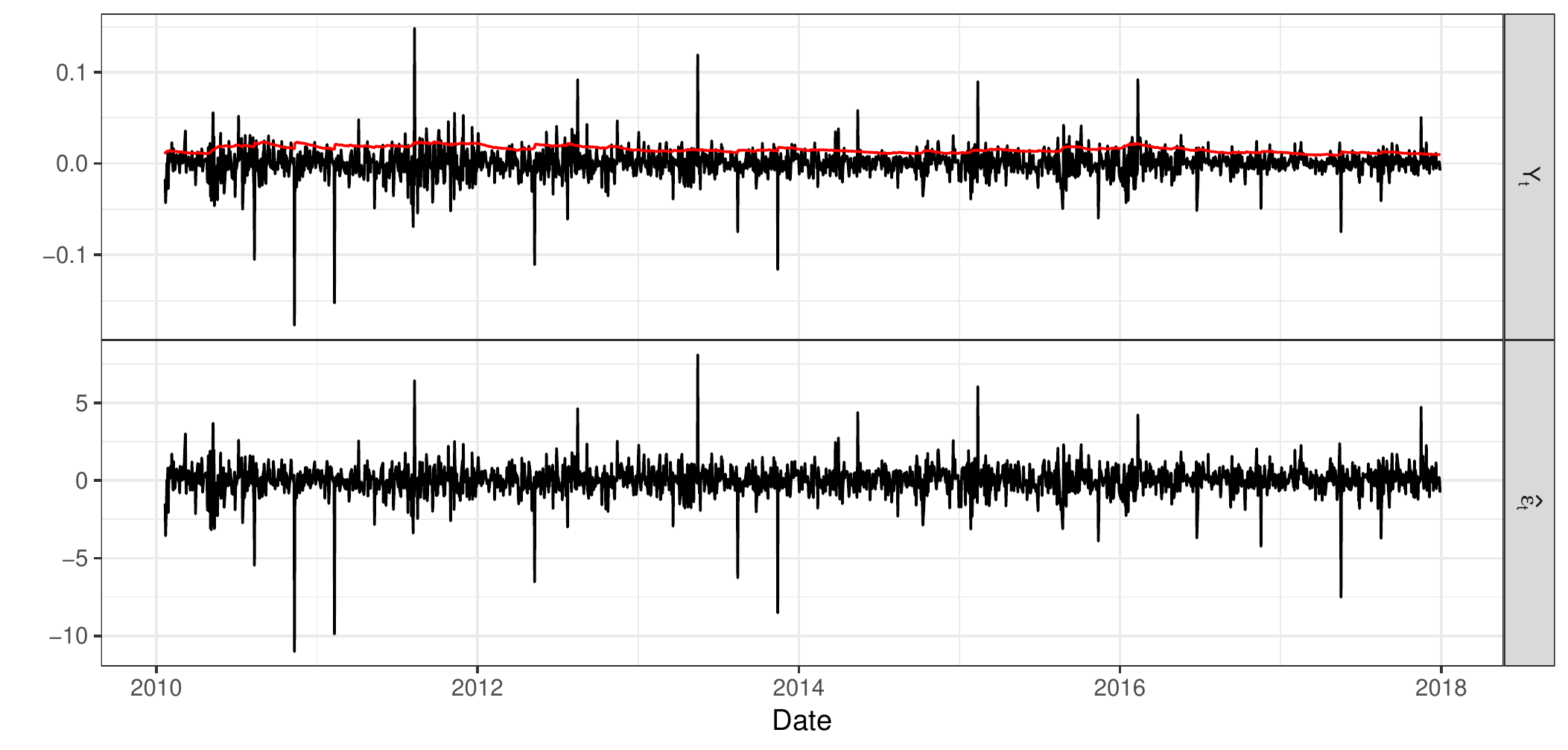}
	\caption{Top: Cisco returns ($Y_t$) in black and estimated volatility ($\widehat{\sigma}_t$) in red. Bottom: Standardized residuals ($\widehat{\varepsilon}_t$).}
	\label{fig:MRKy}
\end{figure}

Overall, we find that in most of the 198 cases when the DQ-test rejects, at least one of the specification test signals problems in advance. Of these 138 cases, 107 cases are identified by our $\mathcal{F}_n^{(D)}$-test, yet the $\LB_n^{(D)}$-test rejects in only 69 of these cases. This suggests that our extreme value-based tests provide complementary information to more classical specification tests.


\begin{figure}[t!]
	\centering
		\includegraphics[width=\textwidth]{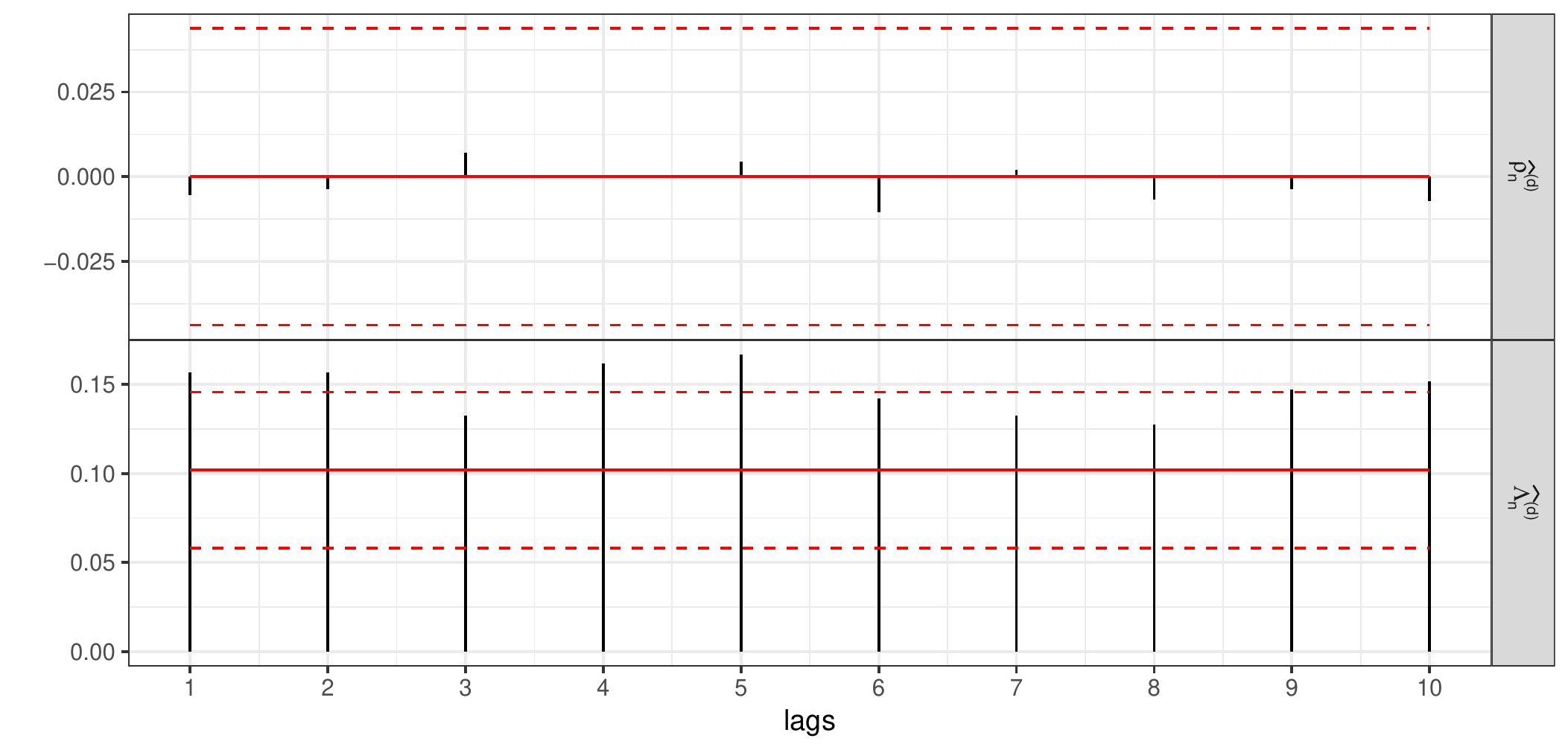}
	\caption{Top: Estimates $\widehat{\rho}_n^{(d)}$ of the autocorrelation at lag $d$ for the squared standardized residuals; bottom: Estimates $\widehat{\Lambda}_n^{(d)}(1,1)$ for the standardized residuals $\widehat{\varepsilon}_t$.}
	\label{fig:MRKacf}
\end{figure}

Next, we illustrate one of the cases where $\LB_{n}^{(D)}$ does not reject, yet $\mathcal{F}_{n}^{(D)}$ does. We do so for Cisco log-returns, shown in the top panel of Figure~\ref{fig:MRKy}. Volatility estimates are superimposed in red. During our in-sample period from 2010-2017 there are only mild signs of volatility clustering. A cursory inspection of the residuals in the lower panel suggests that the model successfully captures this. This is also confirmed by the plots of the squared residual autocorrelations in the top part of Figure~\ref{fig:MRKacf}, which shows that autocorrelations up to lag 10 are insignificant. However, focusing on the extremes, we find that the $\widehat{\Lambda}_n^{(d)}(1,1)$-estimates in Figure~\ref{fig:MRKacf} show strong (positive) serial extremal dependence at all lags. This may be due to occasional exogenous shocks to the share price. For instance, the largest drop on 11/11/2010 followed a profit warning of Cisco. This and other outliers apparent from the plot of the $Y_t$ in Figure~\ref{fig:MRKy} seem to have been unpredictable from past information in $\mathcal{F}_{t-1}$, thus violating the modeling paradigm in \eqref{eq:ls model}. Similarly as for model \eqref{eq:APARCH-X(1,1)} in the simulations, such exogenous shocks may have caused the serial extremal dependence in the standardized residuals.

\begin{figure}[t!]
	\centering
		\includegraphics[width=\textwidth]{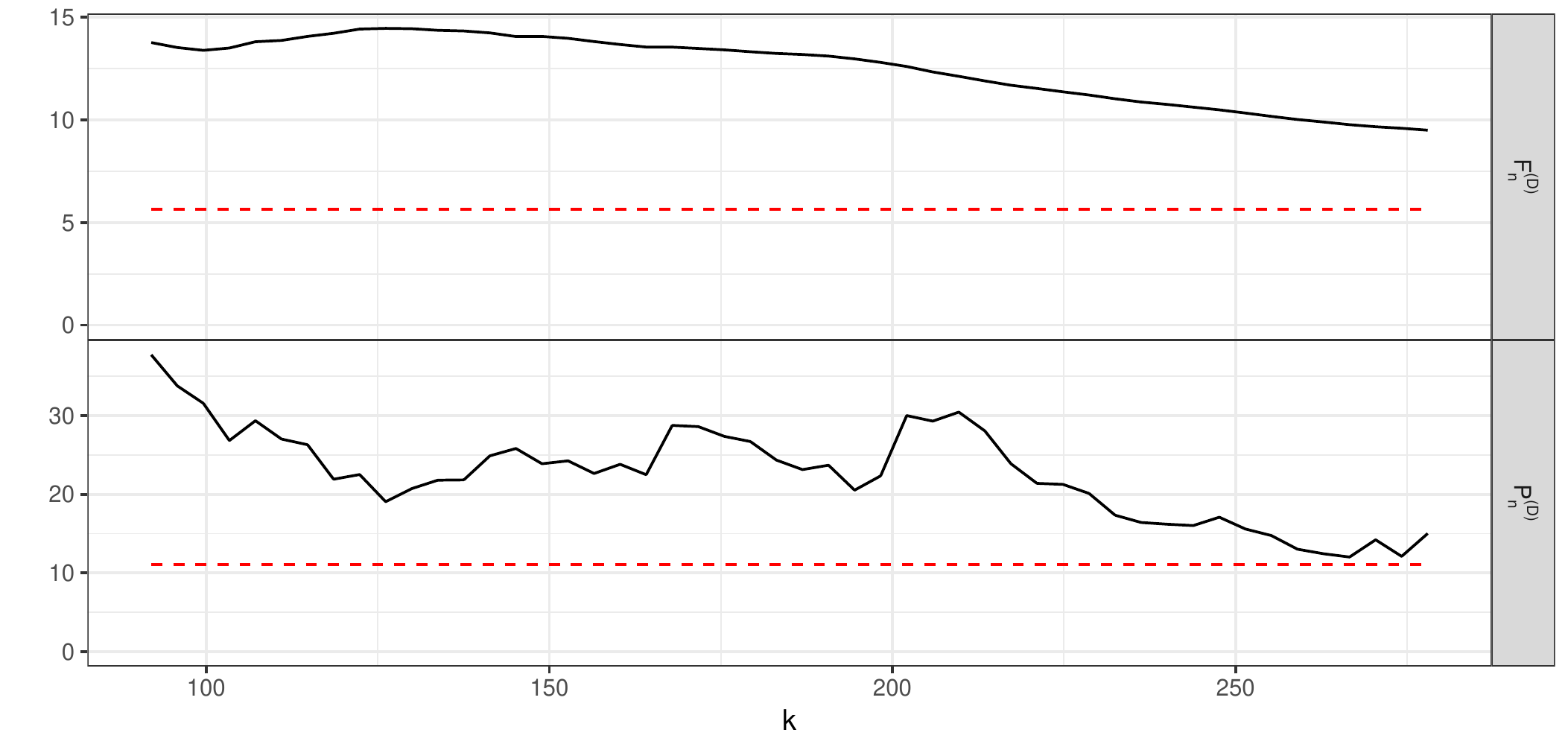}
	\caption{$\mathcal{F}_{n}^{(D)}$ (top) and $\mathcal{P}_{n}^{(D)}$ (top) for Cisco returns as a function of $k$. Horizontal red lines indicate respective 5\%-critical values.}
	\label{fig:Fig6}
\end{figure}

The results for the S\&P~500 components of this section are for $k=\lfloor0.11\cdot n^{0.99}\rfloor$. For a single time series one may want to assess the robustness of the result with respect to $k$. To do so, it is common in extreme value theory to plot the results as a function of $k$. We illustrate this for the Cisco returns in Figure~\ref{fig:Fig6}. There, the test statistics $\mathcal{F}_{n}^{(D)}$ and $\mathcal{P}_{n}^{(D)}$ are plotted as functions of $k$. As in the simulations, the test result is quite robust to $k$ with any reasonable choice leading to a rejection.

\section{Conclusion}\label{Conclusion}

We propose two specification tests based on the tail copula of time series residuals. By relying on the tail copula, our tests direct the power to any remaining serial \textit{extremal} dependence and, thus, complement the evidence by more classical Ljung--Box-type tests. The test statistics are easy to compute and their limiting distributions are free of nuisance parameters. Thus, our tests are simple to implement and enjoy broad applicability. This contrasts with more classical diagnostic tests, which require involved bootstrap procedures for valid inference and/or depend on the specific estimator used to fit the model, hence, limiting practical applications. Simulations demonstrate the good size of our proposals. They also show that when a misspecified model captures well the serial dependence in the body of the distribution but not in the tail, our tests have higher power than Ljung--Box tests. The misspecified models considered in the simulations are realistic in that they display exogenous shocks in the volatility equation or unmodeled higher-order dynamics in the innovations, both of which being empirically relevant sources of misspecification \citep{Han94,HK14}. We also exemplify the better detection properties of our tests on S\&P~500 constituents, where a rejection of our test is a more reliable indicator of poor out-of-sample risk predictions. Thus, our tests help to identify unsuitable risk forecasting models in advance, which is economically desirable under the Basel framework \citep{BCBSBF19}. Consequently, by focusing on the extremes, our tests are useful complements to standard specification tests, where valuable information on the serial extremal dependence may be `washed-out'.

\bigskip
\begin{center}
{\large\bf SUPPLEMENTARY MATERIAL}
\end{center}

\begin{description}

\item[Title:] Appendix containing verification of Assumption~\ref{ass:UA} for APARCH and ARMA--GARCH models (Appendices~\ref{APARCH and ARMA--GARCH Models} and \ref{Proof of Theorem}), the proofs of Theorems~\ref{thm:mainresult}--\ref{thm:mainresult2} (Appendices~\ref{Proof of Theorem 1}
 and \ref{Proof of Theorem 2}), and additional simulations (Appendix~\ref{Simulation Results for Varying $D$}). (pdf-file)

\item[R code:] File containing the \texttt{R} code to reproduce the simulation study and the empirical application. (zipped file)

\end{description}

\singlespacing

\bibliographystyle{jaestyle2}
\bibliography{thebib}

\onehalfspacing

\newpage

\begin{appendices}


\section{APARCH and ARMA--GARCH Models}\label{APARCH and ARMA--GARCH Models}

\renewcommand{\theequation}{A.\arabic{equation}}	
\setcounter{equation}{0}	
\setcounter{page}{1}

We first verify Assumption~\ref{ass:UA} for the asymmetric power ARCH (APARCH) model of \citet{DGE93}; see also \citet{PWT08}. To that end, consider
\begin{align}
	Y_t &= \sigma_t(\vtheta^\circ) \varepsilon_t,\qquad \varepsilon_t\overset{\text{i.i.d.}}{\sim}(0,1),\notag\\
	\sigma_t^{\delta^{\circ}}(\vtheta^\circ)	&= \omega^{\circ}+\sum_{j=1}^{p}\left\{\alpha_{+,j}^{\circ}(Y_{t-j})_{+}^{\delta^{\circ}}+\alpha_{-,j}^{\circ}(Y_{t-j})_{-}^{\delta^{\circ}}\right\}+\sum_{j=1}^{q}\beta_j^{\circ}\sigma_{t-j}^{\delta^{\circ}}(\vtheta^\circ),\label{eq:APARCH}
\end{align}
where $y_{+}=\max\{y,0\}$, $y_{-}=\max\{-y,0\}$, $\omega^{\circ}>0$, $\delta^{\circ}>0$, and $\alpha_{+,j}^{\circ}$, $\alpha_{-,j}^{\circ}$ and $\beta_{j}^{\circ}$ are non-negative. Note that $\mu_t(\vtheta^{\circ})\equiv0$ here. As the likelihood is typically quite flat in the $\delta^{\circ}$-direction, we follow \citet{FT19} and assume the power $\delta^{\circ}$ to be a fixed known constant.\footnote{\citet[Sec.~2.5]{FT19} provide some theoretically-backed guidance on the choice of $\delta^{\circ}$ in practice.} Thus, the vector of the true unknown parameters is $\vtheta^{\circ}=(\omega^{\circ},\alpha_{+,1}^{\circ},\ldots,\alpha_{+,p}^{\circ},\alpha_{-,1}^{\circ},\ldots,\alpha_{-,p}^{\circ},\beta_1^{\circ},\ldots,\beta_{q}^{\circ})^\prime$. A generic parameter vector from the parameter space $\mTheta\subset(0,\infty)\times[0,\infty)^{d-1}$ ($d=2p+q+1$) is denoted by
\[
	\vtheta=(\omega,\alpha_{+,1},\ldots,\alpha_{+,p},\alpha_{-,1},\ldots,\alpha_{-,p},\beta_1,\ldots,\beta_{q})^\prime.
\]

The APARCH($p,q$) model in \eqref{eq:APARCH} nests several popular GARCH variants. We obtain \citeauthor{Zak94}'s \citeyearpar{Zak94} TARCH model for $\delta^{\circ}=1$, and the GJR--GARCH of \citet{GJR93} for $\delta^{\circ}=2$. When $\delta^{\circ}=2$ and $\alpha_{+,j}^{\circ}=\alpha_{-,j}^{\circ}$, \eqref{eq:APARCH} is the classic GARCH specification of \citet{Bol86}.

As volatility depends on the infinite past in the APARCH model, we rely on the truncated recursion
\begin{equation}\label{eq:vola}
	\widehat{\sigma}_t^{\delta^{\circ}}(\vtheta)=\omega+\sum_{j=1}^{p}\left\{\alpha_{+,j}(Y_{t-j})_{+}^{\delta^{\circ}}+\alpha_{-,j}(Y_{t-j})_{-}^{\delta^{\circ}}\right\}+\sum_{j=1}^{q}\beta_j\widehat{\sigma}_{t-j}^{\delta^{\circ}}(\vtheta),\qquad t\geq1,
\end{equation}
with initial values $Y_{1-p}=\ldots=Y_0=0$ and $\widehat{\sigma}_{1-q}^{\delta^{\circ}}(\vtheta)=\ldots=\widehat{\sigma}_{0}^{\delta^{\circ}}(\vtheta)=0$.

\begin{thm}\label{thm:UA}
Suppose for the APARCH model in \eqref{eq:APARCH} that 
\begin{enumerate}
	\item[(a)] $\min\{\omega^{\circ},\alpha_{+,1}^{\circ},\ldots,\alpha_{+,p}^{\circ},\alpha_{-,1}^{\circ},\ldots,\alpha_{-,p}^{\circ},\beta_1^{\circ},\ldots,\beta_{q}^{\circ}\}>0$, $\sum_{i=1}^{q}\beta_{i}^{\circ}<1$;
	\item[(b)] the top Lyapunov exponent $\gamma$ \citep[p.~41]{FT19} satisfies $\gamma<0$.
\end{enumerate}
Suppose further that Assumption~\ref{ass:estimator} is satisfied. Then, Assumption~\ref{ass:UA} holds for $\widehat{\varepsilon}_t(\vtheta)=Y_t/\widehat{\sigma}_t(\vtheta)$ with $\widehat{\sigma}_t(\vtheta)$ defined in \eqref{eq:vola}.
\end{thm}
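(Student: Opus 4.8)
Since $\mu_t(\vtheta^\circ)\equiv0$ in \eqref{eq:APARCH}, the relation $Y_t=\sigma_t(\vtheta^\circ)\varepsilon_t$ gives the identity $|\widehat\varepsilon_t(\vtheta)|=|\varepsilon_t|\,\sigma_t(\vtheta^\circ)/\widehat\sigma_t(\vtheta)$, so the whole argument reduces to controlling the ratio $\sigma_t(\vtheta^\circ)/\widehat\sigma_t(\vtheta)$ uniformly over $\vtheta\in N_n(\eta)$. Both volatilities are bounded away from zero: $\sigma_t^{\delta^\circ}(\vtheta^\circ)\ge\omega^\circ$, and since $N_n(\eta)$ shrinks to $\vtheta^\circ$ and $\omega^\circ>0$ by (a), $\widehat\sigma_t^{\delta^\circ}(\vtheta)\ge\omega\ge\omega^\circ/2$ for all $\vtheta\in N_n(\eta)$ once $n$ is large. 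The plan is to bound
\[
	R_{n,t}:=\sup_{\vtheta\in N_n(\eta)}\big|\log\sigma_t(\vtheta^\circ)-\log\widehat\sigma_t(\vtheta)\big|
\]
uniformly over $t\in\{\ell_n,\dots,n\}$ and then take $m_t\equiv0$ and $s_t:=e^{R_{n,t}}-1$: since $e^{R_{n,t}}+e^{-R_{n,t}}\ge2$ one gets $1-s_t\le e^{-R_{n,t}}$, hence $|\varepsilon_t|(1-s_t)\le|\varepsilon_t|e^{-R_{n,t}}\le|\widehat\varepsilon_t(\vtheta)|\le|\varepsilon_t|e^{R_{n,t}}=|\varepsilon_t|(1+s_t)$ for all $\vtheta\in N_n(\eta)$, and $s_t=o_\p(1)$ whenever $R_{n,t}=o_\p(1)$. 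So it suffices to show $\max_{\ell_n\le t\le n}R_{n,t}=o_\p(1)$ for every $\ell_n\to\infty$.

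I would split $R_{n,t}\le(\delta^\circ)^{-1}(T_{n,t}+P_{n,t})$ into the truncation (initialization) error $T_{n,t}:=\big|\log\sigma_t^{\delta^\circ}(\vtheta^\circ)-\log\widehat\sigma_t^{\delta^\circ}(\vtheta^\circ)\big|$ and the parameter-perturbation error $P_{n,t}:=\sup_{\vtheta\in N_n(\eta)}\big|\log\widehat\sigma_t^{\delta^\circ}(\vtheta)-\log\widehat\sigma_t^{\delta^\circ}(\vtheta^\circ)\big|$. For $T_{n,t}$, subtracting the stationary recursion \eqref{eq:APARCH} at $\vtheta^\circ$ from the truncated one \eqref{eq:vola} at $\vtheta^\circ$, the difference $d_t:=\sigma_t^{\delta^\circ}(\vtheta^\circ)-\widehat\sigma_t^{\delta^\circ}(\vtheta^\circ)$ satisfies $d_t=\sum_{j=1}^q\beta_j^\circ d_{t-j}$ for $t>\max\{p,q\}$, a geometrically stable recursion because $\beta_j^\circ\ge0$ and $\sum_j\beta_j^\circ<1$ by (a); absorbing the a.s.-finite random initial terms $d_1,\dots,d_{\max\{p,q\}}$ into a constant, backward iteration gives $|d_t|\le K\rho^t$ a.s.\ for some a.s.-finite $K$ and some $\rho\in(0,1)$ not depending on $t$. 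Since $|\log a-\log b|\le|a-b|/\min(a,b)$ and $\min\{\sigma_t^{\delta^\circ}(\vtheta^\circ),\widehat\sigma_t^{\delta^\circ}(\vtheta^\circ)\}\ge\omega^\circ$, this yields $T_{n,t}\le K\rho^t/\omega^\circ$, so $\max_{\ell_n\le t\le n}T_{n,t}\le K\rho^{\ell_n}/\omega^\circ=o_\p(1)$ for any $\ell_n\to\infty$.

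For $P_{n,t}$, a mean-value expansion over the convex set $N_n(\eta)$ gives $P_{n,t}\le n^{-1/2}\eta\,Z_{n,t}$, where $Z_{n,t}:=\sup_{\vtheta\in N_n(\eta)}\big\|\nabla_\vtheta\log\widehat\sigma_t^{\delta^\circ}(\vtheta)\big\|$ and $\nabla_\vtheta\log\widehat\sigma_t^{\delta^\circ}=\nabla_\vtheta\widehat\sigma_t^{\delta^\circ}/\widehat\sigma_t^{\delta^\circ}$. Differentiating \eqref{eq:vola}, each component of $\nabla_\vtheta\widehat\sigma_t^{\delta^\circ}$ obeys a recursion of the form ``nonnegative forcing term $+\sum_j\beta_j(\cdot)_{t-j}$''; using $\widehat\sigma_t^{\delta^\circ}\ge\omega$, the $\partial_\omega$- and $\partial_{\beta_j}$-components of $\nabla_\vtheta\log\widehat\sigma_t^{\delta^\circ}$ are bounded by constants, whereas the $\partial_{\alpha_{\pm,i}}$-components require the standard APARCH moment bound --- valid under (b) ($\gamma<0$) --- that $\E\big[\sup_{\vtheta\in\mTheta_0}(Y_t)_\pm^{\delta^\circ}/\sigma_{t+k}^{\delta^\circ}(\vtheta)\big]^{2}\le C\rho_1^{\,k}$ for some $\rho_1\in(0,1)$ on a fixed compact neighborhood $\mTheta_0$ of $\vtheta^\circ$ (with $N_n(\eta)\subseteq\mTheta_0$, and $\mTheta_0$ inside the stationarity region, for $n$ large). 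Such bounds hold for APARCH--X under exactly the conditions behind Assumption~\ref{ass:estimator}; see \citet{FT19}, and \citet{FZ04} and \citet{PWT08} for the GARCH case. They show that, for $n$ large, $Z_{n,t}$ is dominated by a strictly stationary, square-integrable sequence $Z_t$ (up to a geometrically decaying truncation remainder), so that $n\,\p\{Z_1>\epsilon n^{1/2}\}\le\epsilon^{-2}\E\!\big[Z_1^2\mathbf{1}\{Z_1>\epsilon n^{1/2}\}\big]\to0$ for every $\epsilon>0$; hence $\max_{t\le n}Z_{n,t}=o_\p(n^{1/2})$ and $\max_{\ell_n\le t\le n}P_{n,t}=o_\p(1)$. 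Combining the two bounds gives $\max_{\ell_n\le t\le n}R_{n,t}=o_\p(1)$, proving the theorem.

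The single delicate step is the moment bound on the scaled volatility derivatives: the naive lower bound on $\widehat\sigma_t^{\delta^\circ}(\vtheta)$ by a geometrically small (in $k$) multiple of $(Y_{t-k-i})_\pm^{\delta^\circ}$ degrades too fast to produce a summable envelope, so one must instead use strict stationarity together with $\gamma<0$ to obtain finite second moments of the ratios $(Y_{t-k-i})_\pm^{\delta^\circ}/\widehat\sigma_t^{\delta^\circ}(\vtheta)$ with geometrically decaying constants. This is why the hypotheses of Theorem~\ref{thm:UA} essentially coincide with those used to establish $\sqrt n$-consistency of the (Q)ML estimator for APARCH models; everything else is bookkeeping.
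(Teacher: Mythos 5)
Your proof is correct and follows essentially the same strategy as the paper's: reduce everything to a uniform control of the ratio $\sigma_t(\vtheta^\circ)/\widehat\sigma_t(\vtheta)$ with $m_t\equiv 0$ and a multiplicative $s_t$, split the error into a geometrically decaying initialization part (killed by $\ell_n\to\infty$) and an $O(n^{-1/2})$ parameter-perturbation part controlled by a uniform bound on the scaled volatility gradient. The only real difference is bookkeeping: the paper imports both key bounds directly from Lemma~13 of Kim and Lee (2016) (namely $\max_{t\le n}\Delta_{n,t}/\sqrt{n}=o_{\p}(1)$ for the parameter perturbation of the stationary recursion, and a $r_0^tV_0$ bound for the truncation error), whereas you rederive them via a mean-value expansion and the standard Francq--Zako\"{\i}an-type moment bounds on $\nabla_\vtheta\log\sigma_t^{\delta^\circ}(\vtheta)$ --- including, correctly, the fractional-power trick needed for the $\alpha_{\pm,i}$- (and, though you gloss over it, also the $\beta_j$-) derivatives --- so both proofs ultimately delegate the same delicate moment bound to the literature.
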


Appendix~\ref{Proof of Theorem} contains the proof of Theorem~\ref{thm:UA}. Condition~\textit{(b)} ensures strict stationarity of the APARCH model, while \textit{(a)} bounds the true parameters away from the boundary of the parameter space to guarantee the approximability of the innovations.

Next, we consider the ARMA--GARCH model
\begin{equation}\label{eq:ARMA-GARCH}
	Y_t=\sum_{j=1}^{\overline{p}}\phi_j^{\circ}Y_{t-j}+ X_t-\sum_{j=1}^{\overline{q}}\vartheta_j^{\circ}X_{t-j},
\end{equation}
where the $X_t$ follow the GARCH model
\begin{align}
	X_t&=\sigma_t(\vtheta^{\circ})\varepsilon_t,\qquad\varepsilon_t\overset{\text{i.i.d.}}{\sim}(0,1),\notag\\
	\sigma_t^2(\vtheta^{\circ})&=\omega^{\circ}+\sum_{j=1}^{p}\alpha_j^{\circ}X_{t-j}^2+\sum_{j=1}^{q}\beta_j^{\circ}\sigma_{t-j}^2(\vtheta^{\circ}).\label{eq:GARCH part}
\end{align}
Here, $\vtheta^{\circ}=(\phi_1^{\circ}, \ldots, \phi_{\overline{p}}^{\circ}, \vartheta_1^{\circ}, \ldots, \vartheta_{\overline{q}}^{\circ},\omega^{\circ}, \alpha_1^{\circ},\ldots, \alpha_p^{\circ}, \beta_1^{\circ},\ldots, \beta_q^{\circ})^\prime$ is the true parameter vector contained is some parameter space $\mTheta=\mTheta_{\ARMA}\times\mTheta_{\GARCH}$, where $\mTheta_{\ARMA}\subset\mathbb{R}^{\overline{p}+\overline{q}}$ and $\mTheta_{\GARCH}\subset(0,\infty)\times[0,\infty)^{p+q}$. One can show that the model can be rewritten in the form \eqref{eq:ls model} with $\mu_t(\vtheta^{\circ}) = \sum_{j=1}^{\overline{p}}\phi_j^{\circ}Y_{t-j} - \sum_{j=1}^{\overline{q}}\vartheta_j^{\circ}[Y_{t-j}-\mu_{t-j}(\vtheta^{\circ})]$. We estimate the GARCH errors for a generic parameter vector $\vtheta\in\mTheta$ via
\[
	\widehat{X}_t(\vtheta)=Y_t-\sum_{j=1}^{\overline{p}}\phi_j Y_{t-j} + \sum_{j=1}^{\overline{q}}\vartheta_j \widehat{X}_{t-j}(\vtheta),
\]
where we artificially set $\widehat{X}_t(\vtheta)=Y_t=0$ for $t\leq0$. Based on the $\widehat{X}_t(\vtheta)$, we approximate volatility by the truncated recursion
\begin{equation}\label{eq:vola2}
	\widehat{\sigma}_t^2(\vtheta)=\omega+\sum_{j=1}^{p}\alpha_j\widehat{X}_{t-j}^2(\vtheta)+\sum_{j=1}^{q}\beta_j\widehat{\sigma}_{t-j}^2(\vtheta),
\end{equation}
where $\widehat{\sigma}_t^2(\vtheta)=0$ for $t\leq0$. 

\begin{thm}\label{thm:UA2}
Suppose for the ARMA--GARCH model in \eqref{eq:ARMA-GARCH} that
\begin{itemize}
	\item[(a)] the polynomials $\phi(z)=1-\sum_{j=1}^{\overline{p}}\phi_j^{\circ} z^{j}$ and $\vartheta(z)=1-\sum_{j=1}^{\overline{q}}\vartheta_j^{\circ} z^{j}$ have no common roots and no roots on the unit circle;
	\item[(b)] the GARCH parameters satisfy $\sum_{j=1}^{p}\alpha_j^{\circ}+\sum_{j=1}^{q}\beta_j^{\circ}<1$;
	\item[(c)] $\E|X_t|^{2+\delta}<\infty$ for some $\delta>0$.
\end{itemize}
Suppose further that Assumption~\ref{ass:estimator} is satisfied. Then, Assumption~\ref{ass:UA} holds for $\widehat{\varepsilon}_t(\vtheta)=\widehat{X}_t(\vtheta)/\widehat{\sigma}_t(\vtheta)$ with $\widehat{\sigma}_t(\vtheta)$ defined in \eqref{eq:vola2}.
\end{thm}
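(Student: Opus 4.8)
\emph{Proof strategy.} The strategy mirrors that for Theorem~\ref{thm:UA}, the new feature being that the volatility recursion \eqref{eq:vola2} is driven by the \emph{estimated} innovations $\widehat{X}_t(\vtheta)$ rather than by observables, so that the initialisation error in $\widehat{X}_t(\vtheta)$ must also be propagated through the $\widehat{\sigma}_t^2$-recursion. First I would reduce to a fixed compact neighbourhood. Since $\vtheta\in N_n(\eta)$ forces $|\vtheta-\vtheta^{\circ}|\le\eta/\sqrt{n}\to0$, for all large $n$ one has $N_n(\eta)\subset V$ for some fixed compact $V\subset\mTheta$ on which, by (a)--(b) and continuity of the roots of $\phi(\cdot),\vartheta(\cdot)$ in $\vtheta$: $\omega\ge\underline{\omega}>0$; $\sum_j\beta_j\le\bar\beta<1$ and $\sum_j\alpha_j+\sum_j\beta_j\le\bar\rho<1$; and the AR and MA polynomials have all roots outside $\{|z|\le1+\epsilon_0\}$ for a uniform $\epsilon_0>0$. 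On $V$ this yields the geometrically convergent representations $X_t(\vtheta)=\sum_{i\ge0}\psi_i(\vtheta)Y_{t-i}$ and $\sigma_t^2(\vtheta)=\sum_{i\ge0}c_i(\vtheta)\big(\omega+\sum_j\alpha_jX_{t-i-j}^2(\vtheta)\big)$ for the stationary (infinite-past) versions, with $\psi_i(\vtheta)$, their $\vtheta$-derivatives and $c_i(\vtheta)$ all $O(\rho^i)$ uniformly on $V$; since $Y_t$ is itself a geometric linear process in $\{X_s\}$, condition~(c) gives $\E|Y_t|^{2+\delta}<\infty$ and hence $\sup_V\|X_t(\vtheta)\|_{2+\delta}<\infty$. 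Writing $X_t:=X_t(\vtheta^{\circ})$, so that $|\varepsilon_t|=|X_t|/\sigma_t(\vtheta^{\circ})$, and noting $\widehat{\sigma}_t(\vtheta),\sigma_t(\vtheta)\ge\sqrt{\underline\omega}$ (all $\alpha,\beta\ge0$ and zero initial values), I would use
\[
	|\widehat{\varepsilon}_t(\vtheta)|=\frac{|\widehat{X}_t(\vtheta)|}{\widehat{\sigma}_t(\vtheta)}\le\Big(|\varepsilon_t|\,\frac{\sigma_t(\vtheta^{\circ})}{\sigma_t(\vtheta)}+\frac{|\widehat{X}_t(\vtheta)-X_t|}{\sqrt{\underline\omega}}\Big)\frac{\sigma_t(\vtheta)}{\widehat{\sigma}_t(\vtheta)},
\]
and the matching lower bound, which reduces the problem to controlling three quantities: the parameter ratio $\sigma_t(\vtheta^{\circ})/\sigma_t(\vtheta)$, the truncation ratio $\sigma_t(\vtheta)/\widehat{\sigma}_t(\vtheta)$, and the error $|\widehat{X}_t(\vtheta)-X_t|=|\Delta_t(\vtheta)+(X_t(\vtheta)-X_t)|$, where $\Delta_t(\vtheta):=\widehat{X}_t(\vtheta)-X_t(\vtheta)$ is a truncation term and $X_t(\vtheta)-X_t$ a parameter term.

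Next I would show that the \emph{truncation} terms are geometrically small. For $t\ge1$, $\Delta_t(\vtheta)$ obeys an $\mathrm{AR}(\overline{q})$ recursion in its own lags with coefficients $\vartheta_j$, driving term supported on $t\le\overline{p}$, and ``initial values'' $\Delta_s(\vtheta)=-X_s(\vtheta)$ for $s\le0$; by the root condition on $V$ it decays as $\sup_V|\Delta_t(\vtheta)|\le C\kappa^{t}Z_1$ with $\kappa<1$ and $Z_1$ a finite r.v.\ built from finitely many $Y_s$'s and $\sup_V|X_s(\vtheta)|$'s. Feeding $\Delta_t(\vtheta)$ and the zero initial values of $\widehat{\sigma}^2$ into \eqref{eq:vola2}, and using $\widehat{X}_{t-j}^2-X_{t-j}^2(\vtheta)=\Delta_{t-j}(\Delta_{t-j}+2X_{t-j}(\vtheta))$ together with the contraction $\sum_j\beta_j<1$, gives $\sup_V|\widehat{\sigma}_t^2(\vtheta)-\sigma_t^2(\vtheta)|\le C\mu^{t}Z_2$ for some $\mu<1$ and a stationary $Z_2$ with $\E Z_2^{s}<\infty$ for some $s>0$ (here $\E|X_t|^{2+\delta}<\infty$ is more than enough). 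Since $\sum_t\p(\mu^tZ_i>\varepsilon)\le\sum_t\varepsilon^{-s}\mu^{ts}\E Z_i^s<\infty$, Borel--Cantelli yields $\max_{\ell_n\le t\le n}\sup_V(C\mu^tZ_i)\to0$ a.s.; this is exactly where $\ell_n\to\infty$ is used, and is why Assumption~\ref{ass:UA} need not control the first $\ell_n-1$ residuals. Combined with $\widehat{\sigma}_t,\sigma_t\ge\sqrt{\underline\omega}$, this makes $\max_{\ell_n\le t\le n}\sup_V|\sigma_t(\vtheta)/\widehat{\sigma}_t(\vtheta)-1|$ and $\max_{\ell_n\le t\le n}\sup_V|\Delta_t(\vtheta)|$ both $o_{\p}(1)$.

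For the \emph{parameter} terms I would use a mean value bound over the $n^{-1/2}$-ball. For the ARMA error, $|X_t(\vtheta)-X_t|\le n^{-1/2}\eta\sup_V|\partial X_t(\vtheta)/\partial\vtheta|\le n^{-1/2}\eta W_t^{(1)}$ with $W_t^{(1)}\le C\sum_{i\ge0}\rho^i|Y_{t-i}|$, so $\E(W_t^{(1)})^{2+\delta}<\infty$ by condition~(c), and a union bound using $2+\delta>2$ gives $\max_{1\le t\le n}n^{-1/2}W_t^{(1)}=o_{\p}(1)$. For the volatility ratio, the crude bound on $|\sigma_t^2(\vtheta)-\sigma_t^2(\vtheta^{\circ})|$ would involve $\sum_i\rho^iX_{t-i}^2$, which carries only $(2+\delta)/2<2$ moments and need not be $o_{\p}(\sqrt{n})$ in $\max_{t\le n}$; instead I would pass to logarithms, using $|\log\sigma_t^2(\vtheta)-\log\sigma_t^2(\vtheta^{\circ})|\le n^{-1/2}\eta\sup_V\big|\sigma_t^{-2}(\vtheta)\,\partial\sigma_t^2(\vtheta)/\partial\vtheta\big|=:n^{-1/2}\eta V_t$ and the classical fact from the QML theory of GARCH models that $\E V_t^{s}<\infty$ for every $s>0$ (this is where $\omega$ bounded away from $0$ under (b), together with a non-degenerate GARCH part with positive ARCH coefficients, enters). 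Then $\max_{1\le t\le n}n^{-1/2}V_t=o_{\p}(1)$, so $\sigma_t^2(\vtheta)/\sigma_t^2(\vtheta^{\circ})=1+o_{\p}(1)$ uniformly. \textbf{This moment accounting---arranging that each perturbation derivative carries strictly more than two moments so that it is $o_{\p}(\sqrt n)$ uniformly over $t\le n$---is the main obstacle}: it is precisely what forces condition~(c) for the ARMA layer and the logarithmic device for the volatility layer.

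Finally I would assemble. All the $o_{\p}(1)$ quantities above are genuine random variables, as every $\sup_{\vtheta\in N_n(\eta)}$ has been taken, so inserting the bounds for $\sigma_t(\vtheta^{\circ})/\sigma_t(\vtheta)$, $\sigma_t(\vtheta)/\widehat{\sigma}_t(\vtheta)$, $|\Delta_t(\vtheta)|$ and $|X_t(\vtheta)-X_t|$ into the displayed inequality (and its lower-bound counterpart) gives, with probability approaching $1$ and uniformly over $t\in\{\ell_n,\dots,n\}$ and $\vtheta\in N_n(\eta)$,
\[
	|\varepsilon_t|(1-s_t)-m_t\le|\widehat{\varepsilon}_t(\vtheta)|\le|\varepsilon_t|(1+s_t)+m_t
\]
for suitable $s_t,m_t\ge0$ with $\max_{\ell_n\le t\le n}s_t=\max_{\ell_n\le t\le n}m_t=o_{\p}(1)$, which is Assumption~\ref{ass:UA}. (The verification for the APARCH model in Theorem~\ref{thm:UA} follows the same template, with the volatility recursion running directly on the observations.)
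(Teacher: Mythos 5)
Your proposal is correct and follows essentially the same route as the paper: the same three-way decomposition into the initialization (truncation) error of $\widehat{X}_t(\vtheta)$ and $\widehat{\sigma}_t^2(\vtheta)$, the parameter error in the volatility, and the ARMA filtering error, each controlled either by geometric decay plus a Borel--Cantelli/union bound over $t=\ell_n,\ldots,n$ or by a mean-value bound whose derivative carries strictly more than two moments. The only differences are that the paper imports the key auxiliary bounds from Kim and Lee (2016) and from Lemma~A~1 and equation~(A.26) of Hoga (2019) rather than rederiving them, and that it normalizes the volatility derivative by $\sigma_t^2(\vtheta)$ directly instead of passing to logarithms---an equivalent device.
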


The proof of Theorem~\ref{thm:UA2} is also in Appendix~\ref{Proof of Theorem}. Assumption~\textit{(a)} is a standard stationarity, invertibility and identifiability condition for model~\eqref{eq:ARMA-GARCH}. Assumption~\textit{(b)} ensures a strictly stationary solution to the GARCH recurrence in \eqref{eq:GARCH part} with $\E|X_t|^{2}<\infty$. Thus, \textit{(c)} is only a mild additional requirement, for which Theorem~2.1 in \citet{LM02} provides necessary and sufficient conditions.

\section{Proofs of Theorems~\ref{thm:UA} and \ref{thm:UA2}}\label{Proof of Theorem}

\renewcommand{\theequation}{B.\arabic{equation}}	
\setcounter{equation}{0}	

All $o_{(\p)}$- and $O_{(\p)}$-symbols in Appendices~\ref{Proof of Theorem}--\ref{Proof of Theorem 2} are to be understood with respect to $n\to\infty$.

\begin{proof}[{\textbf{Proof of Theorem~\ref{thm:UA}:}}]
To avoid superscripts, we put $h_t(\vtheta)=\sigma_t^{\delta^{\circ}}(\vtheta)$, $h_{t}=h_{t}(\vtheta^{\circ})=\sigma_{t}^{\delta^{\circ}}(\vtheta^{\circ})$ and $\widehat{h}_t(\vtheta)=\widehat{\sigma}_{t}^{\delta^{\circ}}(\vtheta)$, where $\widehat{\sigma}_{t}^{\delta^{\circ}}(\vtheta)$ is defined in \eqref{eq:vola} in the main paper. Hence, $\widehat{h}_t(\vtheta)$ satisfies 
\[
	\widehat{h}_t(\vtheta)=\omega+\sum_{j=1}^{p}\left\{\alpha_{+,j}(Y_{t-j})_{+}^{\delta^{\circ}}+\alpha_{-,j}(Y_{t-j})_{-}^{\delta^{\circ}}\right\}+\sum_{j=1}^{q}\beta_j\widehat{h}_{t-j}(\vtheta),\qquad t\geq1,
\]
for initial values $Y_{1-p}=\ldots=Y_0=0$ and $\widehat{h}_{1-q}(\vtheta)=\ldots=\widehat{h}_{0}(\vtheta)=0$. Thus, while $h_t(\vtheta)$ denotes the true volatility (raised to the power of $\delta^{\circ}$) if $\vtheta$ were the true parameter vector, $\widehat{h}_t(\vtheta)$ approximates $h_t(\vtheta)$ using artificial initial values. Recalling that $h_t=h_t(\vtheta^{\circ})$, write
\begin{align}
\widehat{\varepsilon}_t(\vtheta) &= \frac{Y_t}{\widehat{\sigma}_t(\vtheta)}=\varepsilon_t\frac{h_t^{1/\delta^\circ}}{[\widehat{h}_t(\vtheta)]^{1/\delta^\circ}}\notag\\
&=\varepsilon_t\Big[1+\frac{h_t(\vtheta) - \widehat{h}_t(\vtheta)}{\widehat{h}_t(\vtheta)}\Big]^{1/\delta^\circ}\Big[1+\frac{h_t - h_t(\vtheta)}{h_t(\vtheta)}\Big]^{1/\delta^\circ}.\label{eq:(A.1)}
\end{align}
Note that the quantities in square brackets are positive almost surely and, hence, the power is well defined for $\delta^{\circ}>0$. Finally, define $N_n^{-}(\eta)=N_n(\eta)\setminus\{\vtheta^{\circ}\}$.

We require two results that follow directly from Lemma~13 and its proof in \citet{KL16}. First, for
\[
	\Delta_{n,t}=\sup_{\vtheta\in N_n^{-}(\eta)}\left|\frac{h_t-h_t(\vtheta)}{|\vtheta-\vtheta^{\circ}|h_t(\vtheta)}\right|,
\]
it holds that
\begin{equation}\label{eq:(3.11)}
	\max_{t=1,\ldots,n}\frac{\Delta_{n,t}}{n^{1/2}}=o_{\p}(1).
\end{equation}
Second, there exists $r_0\in[0,1)$, such that for sufficiently large $n$
\begin{equation}\label{eq:(4.11)}
\sup_{\vtheta\in N_n(\eta)}\left|\frac{h_t(\vtheta)-\widehat{h}_t(\vtheta)}{\widehat{h}_t(\vtheta)}\right|\leq r_0^{t}V_0,\qquad t=1,\ldots,n,
\end{equation}
where $V_0=V_0(\eta)\geq0$ with $\E|V_0|^{\nu_0}<\infty$ for some $\nu_0>0$. We have $\max_{t=\ell_{n},\ldots,n}r_0^{t}V_0=r_0^{\ell_n}V_0=o_{\p}(1)$.

Consider the two right-hand side factors in \eqref{eq:(A.1)}. Observe that
\[
	\sup_{\vtheta\in N_n(\eta)}\left|\frac{h_t - h_t(\vtheta)}{h_t(\vtheta)}\right|\leq \sup_{\vtheta\in N_n^{-}(\eta)}\big\{n^{1/2}|\vtheta-\vtheta^{\circ}|\big\}\Delta_{n,t}/n^{1/2}\leq\eta \Delta_{n,t} /n^{1/2}.
\]
Set $s_{1t}=r_0^t V_0$ and $s_{2t}=\eta\Delta_{n,t}/n^{1/2}$, whence $\max_{t=\ell_n,\ldots,n}s_{1t}=o_{\p}(1)$ and $\max_{t=\ell_n,\ldots,n}s_{2t}=o_{\p}(1)$. Then, for sufficiently large $n$,
\begin{align*}
	1-\underline{s}_{t}&:=[1-\min\{s_{1t},1\}]^{1/\delta^{\circ}}[1-\min\{s_{2t},1\}]^{1/\delta^{\circ}}\\
	&\leq \Big[1+\frac{h_t(\vtheta) - \widehat{h}_t(\vtheta)}{\widehat{h}_t(\vtheta)}\Big]^{1/\delta^\circ}\Big[1+\frac{h_t - h_t(\vtheta)}{h_t(\vtheta)}\Big]^{1/\delta^\circ}\\
	&\leq [1+s_{1t}]^{1/\delta^{\circ}}[1+s_{2t}]^{1/\delta^{\circ}}=:1+\overline{s}_t.
\end{align*}
Put $s_t=\max\{\underline{s}_{t},\ \overline{s}_t\}$. Then, $\max_{t=\ell_n,\ldots,n}s_{t}=o_{\p}(1)$ and 
\[
	1-s_{t}\leq \Big[1+\frac{h_t(\vtheta) - \widehat{h}_t(\vtheta)}{\widehat{h}_t(\vtheta)}\Big]^{1/\delta^\circ}\Big[1+\frac{h_t - h_t(\vtheta)}{h_t(\vtheta)}\Big]^{1/\delta^\circ} \leq 1+s_{t}
\]
holds with probability approaching 1 (w.p.a.~1), as $n\to\infty$. Hence, the conclusion follows with our choice of $s_t$ and $m_t\equiv0$.
\end{proof}

\begin{proof}[{\textbf{Proof of Theorem~\ref{thm:UA2}:}}]
Denoting true volatility by $\sigma_t^2=\sigma_t^{2}(\vtheta^{\circ})$, we write
\begin{align}
	\widehat{\varepsilon}_t(\vtheta) &= \frac{\widehat{X}_t(\vtheta)}{\widehat{\sigma}_t(\vtheta)}=\frac{X_t}{\widehat{\sigma}_t(\vtheta)} + \frac{\widehat{X}_t(\vtheta)-X_t}{\widehat{\sigma}_t(\vtheta)}\notag\\
	&= \varepsilon_t\Big[1+\frac{\sigma_t^2(\vtheta)-\widehat{\sigma}_t^2(\vtheta)}{\widehat{\sigma}_t^2(\vtheta)}\Big]^{1/2}\Big[1+\frac{\sigma_t^2-\sigma_t^2(\vtheta)}{\sigma_t^2(\vtheta)}\Big]^{1/2}+\frac{\widehat{X}_t(\vtheta)-X_t}{\widehat{\sigma}_t(\vtheta)}.\label{eq:decomp1}
\end{align}
We again set $N_n^{-}(\eta)=N_n(\eta)\setminus\{\vtheta^{\circ}\}$. \citet[p.~264]{KL16} show that for $\vtheta\in N_n(\eta)$
\begin{equation}\label{eq:2ndterm}
	\frac{|\sigma_t^2-\sigma_t^2(\vtheta)|}{\sigma_t^2(\vtheta)}\leq|\vtheta-\vtheta^{\circ}|\Pi_{1,n,t}^{*} + |\vtheta-\vtheta^{\circ}|^2\Pi_{2,n,t}^{*}
\end{equation}
for some $\Pi_{i,n,t}^{*}=\Pi_{i,n,t}^{*}(\eta)\geq0$ ($i=1,2$), both independent of $\vtheta$. Furthermore, define $X_t(\vtheta)=\sigma_t(\vtheta)\varepsilon_t$ and set
\[
	\Pi_{3,n,t}=\Pi_{3,n,t}(\eta)=\sup_{\vtheta\in N_n^{-}(\eta)}\frac{1}{|\vtheta-\vtheta^{\circ}|}\frac{|X_t(\vtheta)-X_t|}{\widehat{\sigma}_t(\vtheta)}.
\]
By Equation~(A.26) in \citet{Hog18+}, there exists $v_0>2$ such that
\begin{equation}\label{eq:(A.9m)}
	\limsup_{n\to\infty}\E|\Pi_{1,n,t}^{*}|^{v_0}<\infty,\quad\limsup_{n\to\infty}\E|\Pi_{2,n,t}^{*}|^{v_0/2}<\infty,\quad \limsup_{n\to\infty}\E|\Pi_{3,n,t}|^{v_0}<\infty
\end{equation}
uniformly in $t\in\mathbb{N}$. From this it follows that 
\begin{equation}\label{eq:op1 all}
	\max_{t=\ell_{n},\ldots,n}\Pi_{1,n,t}^{*}/\sqrt{n}=o_{\p}(1),\quad \max_{t=\ell_{n},\ldots,n}\Pi_{2,n,t}^{*}/n=o_{\p}(1),\quad \max_{t=\ell_{n},\ldots,n}\Pi_{3,n,t}/\sqrt{n}=o_{\p}(1).
\end{equation}
We only show the first claim in \eqref{eq:op1 all}, as the others follow along similar lines. Using subadditivity, Markov's inequality and \eqref{eq:(A.9m)}, we get
\begin{align*}
	\p\Big\{ \max_{t=\ell_{n},\ldots,n}\Pi_{1,n,t}^{*}/\sqrt{n} \geq\epsilon \Big\} &= \p\Big\{ \bigcup_{t=\ell_{n},\ldots,n}\big\{\Pi_{1,n,t}^{*}/\sqrt{n} \geq\epsilon\big\} \Big\}\\
	&\leq\sum_{t=\ell_{n}}^{n}\p\big\{ \Pi_{1,n,t}^{*}/\sqrt{n} \geq\epsilon \big\}\\
	&\leq\sum_{t=\ell_{n}}^{n}\epsilon^{-v_0}n^{-v_0/2}\E|\Pi_{1,n,t}^{*}|^{v_0}=o(1).
\end{align*}

By Lemma~A~1 in \citet{Hog18+}, there exist $r\in(0,1)$ and r.v.s $V_t\geq0$ with $\sup_{t\in\mathbb{N}}\E|V_t|^{v}<\infty$ for some $v>0$, such that for sufficiently large $n$,
\begin{equation}\label{eq:boundXs}
	\sup_{\vtheta\in N_n(\eta)}\max\big\{|\widehat{X}_t(\vtheta)-X_t(\vtheta)|,\ |\widehat{\sigma}_t^2(\vtheta)-\sigma_t^2(\vtheta)|\big\}\leq r^{t}V_t\qquad\text{for all }t\in\mathbb{N}.
\end{equation}
We easily obtain that $\max_{t=\ell_{n},\ldots,n}r^{t}V_t=o_{\p}(1)$, since
\begin{align*}
	\p\Big\{\max_{t=\ell_{n},\ldots,n}r^{t}V_t\geq\epsilon\Big\} &=\p\Big\{\bigcup_{t=\ell_{n},\ldots,n}\big\{r^{t}V_t\geq\epsilon\big\}\Big\} \\
	&\leq\sum_{t=\ell_{n}}^{n}\p\big\{r^{t}V_t\geq\epsilon\big\} \\
	&\leq\sum_{t=\ell_{n}}^{n}\epsilon^{-v}(r^{t})^v\E[V_t^{v}] \\
	&\leq C\Big\{\sup_{t=\ell_{n},\ldots,n}\E[V_t^{v}]\Big\}\sum_{t=\ell_{n}}^{n}(r^{t})^v=o(1),
\end{align*}
where the last equality follows from properties of the geometric series.

Equipped with these results, we now consider the terms on the right-hand side of \eqref{eq:decomp1} separately. First, by \eqref{eq:vola2} and the fact that $\omega^{\circ}>0$, we have for sufficiently large $n$ that $\widehat{\sigma}_t^2(\vtheta)\geq\underline{\omega}>0$ for all $\vtheta\in N_n(\eta)$ and all $t$. Combine this with \eqref{eq:boundXs} to obtain
\begin{align}
	\max_{t=\ell_{n},\ldots,n}\frac{|\sigma_t^2(\vtheta)-\widehat{\sigma}_t^2(\vtheta)|}{\widehat{\sigma}_t^2(\vtheta)}&\leq\underline{\omega}^{-1}\max_{t=\ell_{n},\ldots,n}r^{t}V_t=:\max_{t=\ell_{n},\ldots,n}s_{1t}=o_{\p}(1),\label{eq:beg1}\\
		\max_{t=\ell_{n},\ldots,n}\frac{|\widehat{X}_t(\vtheta)-X_t|}{\widehat{\sigma}_t(\vtheta)}&\leq\max_{t=\ell_{n},\ldots,n}\Big\{\underline{\omega}^{-1/2}|\widehat{X}_t(\vtheta)-X_t(\vtheta)|+ \frac{|X_t(\vtheta)-X_t|}{\widehat{\sigma}_t(\vtheta)}\Big\}\notag\\
		&\leq \max_{t=\ell_{n},\ldots,n}\Big\{\underline{\omega}^{-1/2}r^{t}V_t + \sup_{\vtheta\in N_n^{-}(\eta)}\big\{\sqrt{n}|\vtheta-\vtheta^{\circ}|\big\}\Pi_{3,n,t}/\sqrt{n}\Big\}\notag\\
		&\leq \max_{t=\ell_{n},\ldots,n}\Big\{\underline{\omega}^{-1/2}r^{t}V_t + \eta\Pi_{3,n,t}/\sqrt{n}\Big\}\notag\\		
		&=:\max_{t=\ell_{n},\ldots,n}m_t=o_{\p}(1).\notag
\end{align}
Finally, we have from \eqref{eq:2ndterm} and \eqref{eq:op1 all} that
\begin{align}
	\max_{t=\ell_{n},\ldots,n}\frac{|\sigma_t^2-\sigma_t^2(\vtheta)|}{\sigma_t^2(\vtheta)}&\leq\max_{t=\ell_{n},\ldots,n}\Big\{ \sup_{\vtheta\in N_n^{-}(\eta)}\big\{\sqrt{n}|\vtheta-\vtheta^{\circ}|\big\}\Pi_{1,n,t}^{*}/\sqrt{n} +  \sup_{\vtheta\in N_n^{-}(\eta)}\big\{n|\vtheta-\vtheta^{\circ}|^2\big\}\Pi_{2,n,t}^{*}/n\Big\}\notag\\
	&\leq  \max_{t=\ell_{n},\ldots,n}\Big\{\eta\Pi_{1,n,t}^{*}/\sqrt{n} + \eta^2\Pi_{2,n,t}^{*}/n\Big\}\notag\\
	&=:\max_{t=\ell_{n},\ldots,n}s_{2t}=o_{\p}(1).\label{eq:beg3}
\end{align}
Combining \eqref{eq:beg1} and \eqref{eq:beg3}, we get
\begin{align*}
	1-\underline{s}_{t}&:=[1-\min\{s_{1t},1\}]^{1/2}[1-\min\{s_{2t},1\}]^{1/2}\\
	&\leq \Big[1+\frac{\sigma_t^2(\vtheta)-\widehat{\sigma}_t^2(\vtheta)}{\widehat{\sigma}_t^2(\vtheta)}\Big]^{1/2}\Big[1+\frac{\sigma_t^2-\sigma_t^2(\vtheta)}{\sigma_t^2(\vtheta)}\Big]^{1/2}\\
	&\leq [1+s_{1t}]^{1/2}[1+s_{2t}]^{1/2}=:1+\overline{s}_t.
\end{align*}
Hence, for $s_t=\max\{\underline{s}_{t},\ \overline{s}_t\}$ with $\max_{t=\ell_n,\ldots,n}s_{t}=o_{\p}(1)$, it holds w.p.a.~1, as $n\to\infty$, that
\[
	1-s_{t}\leq \Big[1+\frac{\sigma_t^2(\vtheta)-\widehat{\sigma}_t^2(\vtheta)}{\widehat{\sigma}_t^2(\vtheta)}\Big]^{1/2}\Big[1+\frac{\sigma_t^2-\sigma_t^2(\vtheta)}{\sigma_t^2(\vtheta)}\Big]^{1/2} \leq 1+s_{t}.
\]
Thus, the conclusion follows with the above choices of $m_t$ and $s_t$.
\end{proof}

\section{Proof of Theorem~\ref{thm:mainresult}}\label{Proof of Theorem 1}

\renewcommand{\theequation}{C.\arabic{equation}}	
\setcounter{equation}{0}	

Let $C>0$ denote a large constant that may change from line to line. For notational brevity, we put 
\[
	U_t=|\varepsilon_t|,\quad \widehat{U}_t=|\widehat{\varepsilon}_t|,\quad \widehat{U}_{(i)}=|\widehat{\varepsilon}|_{(i)}.
\]
We denote the survivor function of $U_t$ by $\overline{F}(\cdot)=1-F(\cdot)$, where $F(\cdot)$ is from Assumption~\ref{ass:U distr 2}. For $\vx=(x,y)^\prime\in(0,\infty)^2$ and $\vxi=(\xi_1, \xi_2)^\prime\in\mathbb{R}^2$, define
\begin{align*}
	M_n(\vxi) &=M_n(\vx,\vxi)= \frac{1}{k}\sum_{t=d+1}^{n}I_{\big\{U_t>e^{\xi_1/\sqrt{k}}b(\frac{n}{kx}),\ U_{t-d}>e^{\xi_2/\sqrt{k}}b(\frac{n}{ky})\big\}},\\
	\widehat{M}_n(\vxi) &=\widehat{M}_n(\vx,\vxi)= \frac{1}{k}\sum_{t=d+1}^{n}I_{\big\{\widehat{U}_t>e^{\xi_1/\sqrt{k}}b(\frac{n}{kx}),\ \widehat{U}_{t-d}>e^{\xi_2/\sqrt{k}}b(\frac{n}{ky})\big\}}.
\end{align*}
If not specified otherwise, we assume the conditions of Theorem~\ref{thm:mainresult} to hold for the following lemmas and propositions, even though some of them may hold under a subset of the assumptions of Theorem~\ref{thm:mainresult}. 

The proof of Theorem~\ref{thm:mainresult} requires the following four propositions, which are proved in Subsection~\ref{Proofs of Propositions} of this Appendix. The first proposition derives an (infeasible) estimate of the PA-tail copula for the (unobserved) innovations. The limit theory developed by \citet{SS06} cannot be applied to prove the next proposition, since it only applies to tail dependent sequences of random variables. Recall for the following that $b(x)=F^{\leftarrow}(1-1/x)$.

\begin{prop}\label{prop:base convergence}
For i.i.d.~$U_t$ with d.f.~$F(\cdot)$ satisfying Assumption~\ref{ass:U distr 2}~(ii), it holds that, as $n\to\infty$,
\begin{equation*}
	\sqrt{\frac{n}{xy}}\begin{pmatrix}
	\frac{1}{k}\sum_{t=2}^{n}\Big[I_{\left\{U_t>b(\frac{n}{kx}),\ U_{t-1}>b(\frac{n}{ky})\right\}}
	-\Big(\frac{k}{n}\Big)^2xy\Big]\\
	\vdots \\
	\frac{1}{k}\sum_{t=D+1}^{n}\Big[I_{\left\{U_t>b(\frac{n}{kx}),\ U_{t-D}>b(\frac{n}{ky})\right\}}
	-\Big(\frac{k}{n}\Big)^2xy\Big]\\
	\end{pmatrix}\overset{d}{\longrightarrow}N(\vzeros,\mI_{D\times D}),
\end{equation*}
where $\mI_{D\times D}$ denotes the $(D\times D)$-identity matrix, and $\vzeros$ is a $(D\times 1)$-vector of zeros.
\end{prop}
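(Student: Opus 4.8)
The plan is to establish a multivariate central limit theorem for the vector of (centered, scaled) indicator sums by applying a CLT for strictly stationary, strongly mixing triangular arrays (e.g.\ a Bernstein-type blocking argument or the classical CLT of \citet{DM09}-style $\beta$-mixing for i.i.d.\ sequences — though here the underlying $U_t$ are i.i.d., the rows of the array couple observations through the common thresholds $b(n/(kx))$ and the lag shift). For each fixed lag $d$, write $Z_{n,t}^{(d)} = I_{\{U_t > b(n/(kx)),\, U_{t-d} > b(n/(ky))\}}$ and consider the normalized partial sum $S_n^{(d)} = \frac{1}{\sqrt{k}}\sum_{t=d+1}^n \big(Z_{n,t}^{(d)} - \E Z_{n,t}^{(d)}\big)$; note $\E Z_{n,t}^{(d)} = \overline F(b(n/(kx)))\,\overline F(b(n/(ky))) = (k/n)^2 xy\,(1+o(1))$ exactly by independence and the defining property $\overline F(b(z)) = 1/z$, so the centering in the statement matches up to a negligible term I would need to control via Assumption~\ref{ass:U distr 2}.

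First I would compute the variance: $\Var(Z_{n,t}^{(d)}) = p_n(1-p_n)$ with $p_n = (k/n)^2 xy$, so $\frac{n}{xy}\cdot\frac{1}{k}\cdot k \cdot p_n(1-p_n) = \frac{n}{xy}\cdot\frac{1}{k}\cdot(k/n)^2 xy (1+o(1)) = (k/n)(1+o(1)) \to 0$ — wait, that is too small. I need to recheck the scaling: the statement multiplies the average $\frac1k\sum(\cdot)$ by $\sqrt{n/(xy)}$, so the variance of the $j$th coordinate is $\frac{n}{xy}\cdot\frac{1}{k^2}\cdot (n-d)\cdot p_n(1-p_n) \approx \frac{n}{xy}\cdot\frac{1}{k^2}\cdot n \cdot (k/n)^2 xy = 1$. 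Good — so the normalization is exactly calibrated to produce unit variance per coordinate, which is why $\mI_{D\times D}$ appears. Second, I would verify that cross-lag covariances vanish: for $d \neq d'$, $\Cov(Z_{n,t}^{(d)}, Z_{n,s}^{(d')})$ is nonzero only when the index sets $\{t, t-d\}$ and $\{s, s-d'\}$ overlap, which happens for $O(n)$ pairs rather than $O(n^2)$; each such covariance is at most $\E[Z_{n,t}^{(d)}Z_{n,s}^{(d')}] \le \overline F(b(n/(k(x\wedge y))))^2$ or $^3$ depending on overlap pattern (a triple intersection gives $p_n^{3/2}$-type terms), and after multiplying by the $n/(k^2 xy)$ prefactor this is $O(k/n) \to 0$. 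So asymptotic orthogonality holds, giving the identity covariance matrix in the limit.

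Third, for asymptotic normality of each coordinate (and then joint normality via the Cramér--Wold device), I would invoke a CLT for sums of $m$-dependent triangular arrays: since $U_t$ is i.i.d., $Z_{n,t}^{(d)}$ is $d$-dependent (in fact for the joint vector across $d=1,\dots,D$ the array is $D$-dependent), so a standard $m$-dependent CLT (e.g.\ Hoeffding--Robbins, or a Bernstein big-block/small-block argument) applies once a Lindeberg or Lyapunov condition is checked. The Lyapunov condition reduces to showing $\frac{\sum \E|Z_{n,t}^{(d)} - p_n|^{2+\delta}}{(\sum\Var)^{1+\delta/2}} \to 0$; since $|Z_{n,t}^{(d)} - p_n| \le 1$ and $\E|Z_{n,t}^{(d)}-p_n|^{2+\delta} \le C p_n$, the ratio is of order $\frac{n p_n}{(n p_n)^{1+\delta/2}} = (n p_n)^{-\delta/2} = (k^2/n \cdot xy)^{-\delta/2} \to 0$ precisely because $k/\sqrt n \to \infty$ (Assumption~\ref{ass:k}) forces $k^2/n \to \infty$. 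Thus the key role of Assumption~\ref{ass:k} here is making the "effective sample size" $k^2/n$ diverge.

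The main obstacle I anticipate is handling the variance/covariance bookkeeping carefully enough: one must track the $o(1)$ discrepancy between $\overline F(b(n/(kx)))\overline F(b(n/(ky)))$ and $(k/n)^2 xy$ (controlled by the regular-variation/von Mises Assumption~\ref{ass:U distr 2}, possibly via the Lemma~\ref{lem:Lem F} referenced in the excerpt), and must verify that the near-diagonal covariance terms within a single lag $d$ — i.e.\ $\Cov(Z_{n,t}^{(d)}, Z_{n,t+d}^{(d)})$, which involves a chain $U_{t-d}, U_t, U_{t+d}$ — are genuinely of order $p_n^{3/2}$ (a triple-threshold exceedance by independence) and hence negligible after scaling, rather than of order $p_n$. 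Once the moment structure is pinned down, the $m$-dependent CLT plus Cramér--Wold delivers the joint convergence to $N(\vzeros, \mI_{D\times D})$ with essentially no further work; I would also remark that the $\sqrt{1/(xy)}$ factors and the fact that $b(n/(kx))$ depends on $x$ only through shifting the quantile level cause no trouble because everything is continuous and monotone in $x,y$.
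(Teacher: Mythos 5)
Your proposal is correct in outline and follows essentially the same route as the paper's proof: Cram\'{e}r--Wold, explicit variance and covariance computations showing each coordinate has asymptotic variance $xy$ after the $n/k$ scaling while all cross-lag and near-diagonal terms (triple exceedances of order $(k/n)^3$) contribute only $O(k/n)=o(1)$, and then a CLT for $D$-dependent triangular arrays -- the paper invokes Theorem~5.20 of \citet{Whi01} for mixing arrays where you propose Hoeffding--Robbins or a Bernstein blocking argument, which is interchangeable here. Your explicit Lyapunov check, which isolates $np_n=(k^2/n)xy\to\infty$ and hence $k/\sqrt{n}\to\infty$ (Assumption~\ref{ass:k}) as the operative condition, is a useful piece of bookkeeping that the paper leaves implicit in the cited theorem.

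One step needs sharpening. You describe the discrepancy between $\E[I_{\{\cdot\}}]=\overline{F}(b(n/[kx]))\,\overline{F}(b(n/[ky]))$ and $(k/n)^2xy$ as ``a negligible term,'' but under the normalization in the statement a relative error $\epsilon_n=o(1)$ in the centering produces a deterministic drift of order $\sqrt{n/(xy)}\cdot k^{-1}\cdot n\cdot (k/n)^2xy\,\epsilon_n=\sqrt{xy}\,(k/\sqrt{n})\,\epsilon_n$, which does \emph{not} vanish for a generic $o(1)$ precisely because $k/\sqrt{n}\to\infty$. The paper's resolution is that Assumption~\ref{ass:U distr 2}~(ii) (continuity of $F$ in the tail) forces $\overline{F}(b(n/[kx]))=kx/n$ to hold \emph{exactly} for all sufficiently large $n$, so the centering term in the statement is the exact mean and no rate on the approximation error is required. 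With that observation inserted, your argument goes through.
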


In the following, we show that in the indicators in Proposition~\ref{prop:base convergence} we can replace the $U_t$ with the $\widehat{U}_t$, and the $b(n/[kz])$ by the empirical analog $\widehat{U}_{(\lfloor kz\rfloor +1)}$ ($z\in\{x,y\}$) without changing the limit. Then, Theorem~\ref{thm:mainresult} follows immediately from an application of the continuous mapping theorem. The following three propositions serve to justify these replacements.

\begin{prop}\label{prop:xi convergence}
For any $K>0$, it holds that
\[
	\sup_{\vxi\in[-K, K]^2}|M_n(\vxi)-M_n(\vzeros)|=o_{\p}(n^{-1/2}).
\]
\end{prop}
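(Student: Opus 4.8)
## Proof proposal for Proposition~\ref{prop:xi convergence}

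\textbf{Overall strategy.} The plan is to bound the oscillation of $M_n(\vxi)$ around $M_n(\vzeros)$ over the compact box $[-K,K]^2$ by a monotone-bracketing argument, exactly the sort of chaining one uses to establish stochastic equicontinuity of tail empirical processes. Observe that $M_n(\vxi)$ is monotone in each coordinate of $\vxi$: decreasing $\xi_1$ can only add indicator terms, since the threshold $e^{\xi_1/\sqrt k}b(n/[kx])$ shrinks. Hence for any $\vxi\in[-K,K]^2$ we have the sandwich $M_n\big((K,K)^\prime\big)\le M_n(\vxi)\le M_n\big((-K,-K)^\prime\big)$, and more usefully $|M_n(\vxi)-M_n(\vzeros)|\le \big|M_n\big((-K,-K)^\prime\big)-M_n\big((K,K)^\prime\big)\big|$ up to the $\vzeros$-centered pieces. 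So it suffices to control the difference of $M_n$ evaluated at the two extreme corners, and more precisely the four ``one-sided'' increments $M_n\big((\pm K,0)^\prime\big)-M_n(\vzeros)$ and $M_n\big((0,\pm K)^\prime\big)-M_n(\vzeros)$, plus mixed corners; a finite number of such terms, each handled identically.

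\textbf{Key steps.} First I would reduce, via the monotonicity sandwich above and the triangle inequality, to showing that each fixed increment such as
\[
	\Delta_n := M_n\big((-K,0)^\prime\big) - M_n(\vzeros) = \frac{1}{k}\sum_{t=d+1}^{n} I_{\big\{e^{-K/\sqrt k}b(\frac{n}{kx}) < U_t \le b(\frac{n}{kx}),\ U_{t-d} > b(\frac{n}{ky})\big\}}
\]
is $o_{\p}(n^{-1/2})$, i.e.\ $\sqrt n\,\Delta_n = o_{\p}(1)$. Second, I would compute its mean: by i.i.d.-ness of the $U_t$ and independence of $U_t,U_{t-d}$ (for $d\ge 1$), $\E[\Delta_n] = \frac{n-d}{k}\,\overline F(b(\tfrac{n}{ky}))\,\big[F(b(\tfrac{n}{kx})) - F(e^{-K/\sqrt k}b(\tfrac{n}{kx}))\big]$. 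Using $\overline F(b(n/[kz])) = k z/n$ (up to the usual $1/n$ rounding in the definition of $b$) and a first-order expansion of $\overline F$ near $b(n/[kx])$ driven by the von Mises condition in Assumption~\ref{ass:U distr 2}(iii) — which gives $\overline F(e^{-K/\sqrt k} b(\tfrac{n}{kx}))/\overline F(b(\tfrac{n}{kx})) \to 1$ with the gap of order $K/\sqrt k$ — one finds $\E[\Delta_n] = O\big((k/n)\cdot K/\sqrt k\big) = O\big(\sqrt k\,K/n\big)$, so $\sqrt n\,\E[\Delta_n] = O(K\sqrt{k/n}) = o(1)$ by Assumption~\ref{ass:k}. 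Third, I would bound the variance: since $\Delta_n$ is an average of $k^{-1}$ times a sum of $1$-dependent (actually $d$-dependent) indicators, $\Var(\Delta_n) \le C k^{-2} n\, \E[\,\text{one indicator}\,] = C k^{-2} n \cdot O(\sqrt k\, K/n) = O(K/k^{3/2})$, whence $\Var(\sqrt n\,\Delta_n) = O(Kn/k^{3/2}) = O\big(K\cdot \tfrac{n}{k^{3/2}}\big)$; and since $k/\sqrt n\to\infty$ forces $k^{3/2}/n \to \infty$, this is $o(1)$. Chebyshev then gives $\sqrt n\,\Delta_n = o_{\p}(1)$. Finally, I would note that the supremum over the continuum $\vxi\in[-K,K]^2$ collapses to the finitely many corner increments by monotonicity, so no genuine chaining/maximal-inequality machinery is needed — the sandwich does all the work.

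\textbf{Main obstacle.} The delicate point is the variance bound and, underneath it, getting the right rate for $\E[\Delta_n]$: one must be careful that the expansion of $\overline F$ under the von Mises condition is uniform enough that $F(b(\tfrac{n}{kx})) - F(e^{-K/\sqrt k} b(\tfrac{n}{kx}))$ is genuinely of order $(k/n)\cdot(K/\sqrt k)$ and not larger. This is where Assumption~\ref{ass:U distr 2}(iii) (regular variation of $\overline F$ with index $-\alpha$, so $\overline F$ behaves like a power near infinity, giving $\overline F(cx)/\overline F(x)\to c^{-\alpha}$ and the needed smoothness of the ratio as $c = e^{-K/\sqrt k}\to 1$) must be invoked carefully; most cleanly, I expect Lemma~\ref{lem:Lem F} in Appendix~\ref{Proofs of Propositions} — which Assumption~\ref{ass:U distr 2} is designed to guarantee — already packages exactly this quantitative statement about $\overline F$ near the $b(n/k)$-scale, in which case the estimate becomes routine. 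The $d$-dependence in the variance is harmless: splitting the indicator sum into $d$ sub-sums of independent terms (or bounding covariances, which vanish for $|t-t'|>d$) only costs a constant factor $C = C(D)$. The rounding discrepancies $\lfloor kx\rfloor$ vs.\ $kx$ in the definitions of $b$ contribute lower-order terms that I would absorb into the $O(\cdot)$'s without comment.
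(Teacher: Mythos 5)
Your proposal follows essentially the same route as the paper: reduce the supremum over $\vxi$ to finitely many corner increments via monotonicity of $M_n(\cdot)$ in each coordinate of $\vxi$, expand $\overline F$ near the threshold $b(n/[kx])$ via the von Mises condition (this is exactly what Lemma~\ref{lem:Lem F} packages), and finish with a moment bound exploiting $d$-dependence. The only structural difference is that you use a mean-plus-Chebyshev (second-moment) bound, whereas the paper applies Markov's inequality to the \emph{fourth} moment and controls it with Lemma~2.3 of \citet{Sha93}; for this pointwise statement your lighter route is adequate (indeed, since the corner increment is a sum of non-negative indicators, a first-moment Markov bound $\sqrt n\,\E[\Delta_n]=O(K\sqrt{k/n})=o(1)$ already suffices), the fourth moment being really needed only for the union bounds over grids in the later uniform-in-$\vx$ versions.

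Two slips in your variance step should be fixed, though neither is fatal. First, in $\Var(\Delta_n)\le Ck^{-2}n\,\E[\text{one indicator}]$ you substitute $O(\sqrt k\,K/n)$, which is the order of $\E[\Delta_n]$, not of the expectation of a single indicator; the latter is $O(K k^{3/2}/n^2)$ (product of a probability of order $(k/n)\cdot K/\sqrt k$ for the $U_t$-slab and $k/n$ for the $U_{t-d}$-tail), which gives $\Var(\sqrt n\,\Delta_n)=O(K/\sqrt k)=o(1)$ --- a better bound than the one you state. Second, your claimed implication that $k/\sqrt n\to\infty$ forces $k^{3/2}/n\to\infty$ is false (take $k=n^{0.6}$: then $k/\sqrt n=n^{0.1}\to\infty$ but $k^{3/2}/n=n^{-0.1}\to 0$), so the bound $O(Kn/k^{3/2})$ you arrive at would \emph{not} be $o(1)$ under Assumption~\ref{ass:k} alone. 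Once the first slip is corrected the second issue disappears and the argument closes.
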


\begin{prop}\label{prop:hat convergence}
For any $K>0$, it holds that
\[
	\sup_{\vxi\in[-K, K]^2}|\widehat{M}_n(\vxi)-M_n(\vxi)|=o_{\p}(n^{-1/2}).
\]
\end{prop}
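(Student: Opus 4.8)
The plan is to bound $|\widehat M_n(\vxi)-M_n(\vxi)|$, uniformly over $\vxi\in[-K,K]^2$, by $k^{-1}$ times the number of indices $t\in\{d+1,\dots,n\}$ at which the indicator appearing in $\widehat M_n(\vxi)$ and the one in $M_n(\vxi)$ disagree, and then to show that the expected number of such \emph{discordant} indices is $o(n^{1/2})$, so that Markov's inequality delivers the $o_{\p}(n^{-1/2})$ rate. The argument is self-contained in the sense that it does not need Proposition~\ref{prop:xi convergence}; it only uses Assumptions~\ref{ass:U distr 2}, \ref{ass:estimator}, \ref{ass:UA} and \ref{ass:k} (the last two together with Lemma~\ref{lem:Lem F}).

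First I would dispose of the low-index terms. By Assumption~\ref{ass:estimator}, $\widehat\vtheta\in N_n(\eta)$ w.p.a.~1 for $\eta$ large, so I fix such $\eta$ and work on that event. I choose $\ell_n\to\infty$ slowly enough that $\ell_n=o(k/\sqrt n)$, which is possible because $k/\sqrt n\to\infty$ (Assumption~\ref{ass:k}); the at most $O(\ell_n)$ indices with $t<\ell_n+d$ then contribute at most $O(\ell_n)/k=o(n^{-1/2})$ to $\sup_{\vxi}|\widehat M_n(\vxi)-M_n(\vxi)|$. For $t\ge\ell_n+d$, Assumption~\ref{ass:UA} (applied at $\widehat\vtheta$, for both $t$ and $t-d$) gives $|\varepsilon_t|(1-s_t)-m_t\le|\widehat\varepsilon_t|\le|\varepsilon_t|(1+s_t)+m_t$. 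Replacing $\max_{t\ge\ell_n}s_t$ and $\max_{t\ge\ell_n}m_t$ by a deterministic $\epsilon_n\to0$ that dominates them with probability tending to $1$, the sandwich
\[
 \{U_t>\tfrac{\tau_1+\epsilon_n}{1-\epsilon_n}\}\subseteq\{\widehat U_t>\tau_1\}\subseteq\{U_t>\tfrac{\tau_1-\epsilon_n}{1+\epsilon_n}\},\qquad \tau_1=e^{\xi_1/\sqrt k}b(\tfrac{n}{kx}),
\]
(and likewise for $\tau_2=e^{\xi_2/\sqrt k}b(n/(ky))$) confines any discordant index to $\{U_t\in(\tau_1^-,\tau_1^+],\,U_{t-d}>\tau_2^-\}\cup\{U_t>\tau_1^-,\,U_{t-d}\in(\tau_2^-,\tau_2^+]\}$ with $\tau_i^\pm=(\tau_i\pm\epsilon_n)/(1\mp\epsilon_n)$. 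Taking the supremum over $\vxi\in[-K,K]^2$ only enlarges these to \emph{fixed} intervals $J_z=\big(e^{-K/\sqrt k}b(\tfrac{n}{kz})(1-2\epsilon_n),\,e^{K/\sqrt k}b(\tfrac{n}{kz})(1+2\epsilon_n)\big]$, $z\in\{x,y\}$, of relative width $\rho_n=O(\epsilon_n+k^{-1/2})\to0$.

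Next I would bound the probabilities. Assumption~\ref{ass:U distr 2}, through Lemma~\ref{lem:Lem F}, makes $\overline F$ regularly varying of index $-\alpha$ and controls $uf(u)/\overline F(u)$ near large quantiles, so that uniformly $\p\{U_t\in J_z\}=O(\rho_n)\,\overline F(b(\tfrac{n}{kz}))=O(\rho_n kz/n)$ and $\p\{U_t>e^{-K/\sqrt k}b(\tfrac{n}{kz})\}=O(kz/n)$. Since the $\varepsilon_t$ are i.i.d.\ and $d\ge1$, $U_t$ and $U_{t-d}$ are independent, and the product structure of the two defining events forces the expected number of discordant indices to be $O\big(n\cdot(\rho_n k/n)\cdot(k/n)\big)=O(\rho_n k^2/n)$. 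Dividing by $k$, the whole bound has mean $O(\rho_n k/n)$, so the claim reduces to $\rho_n k/\sqrt n\to0$: the $k^{-1/2}$-part of $\rho_n$ contributes $\sqrt{k/n}\to0$ by Assumption~\ref{ass:k}, while the $\epsilon_n$-part contributes $\epsilon_n k/\sqrt n$, which vanishes because the approximation errors $s_t,m_t$ behind Assumption~\ref{ass:UA} decay faster than $\sqrt n/k$ (as one verifies from the explicit constructions in Theorems~\ref{thm:UA}--\ref{thm:UA2}). Markov's inequality then gives $o_{\p}(n^{-1/2})$, which together with the boundary contribution and the probability-$\to1$ confinement event proves Proposition~\ref{prop:hat convergence}.

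I expect this last step --- the rate bookkeeping --- to be the main obstacle: the $o_{\p}(1)$ errors $s_t,m_t$ are a priori much larger than the target rate $n^{-1/2}$, and the argument survives only because (i) the two-dimensional, lagged structure supplies an extra factor $k/n\to0$ when one coordinate is deep in the body of the distribution, and (ii) the supremum over the compact $\vxi$-set costs merely an $O(k^{-1/2})$ widening of the bands, again absorbed by $k/n\to0$. Secondary technical points are keeping the regular-variation/density estimates uniform over the whole range of thresholds $e^{\xi_i/\sqrt k}b(n/(kz))$ induced by $\vxi\in[-K,K]^2$ (which is exactly what Lemma~\ref{lem:Lem F} provides), and the routine bookkeeping around the probability-$\to1$ event $\{\max_{t\ge\ell_n}s_t\vee\max_{t\ge\ell_n}m_t\le\epsilon_n\}$ and the excluded low-index block, both of which enter only additively with vanishing probability, respectively vanishing magnitude.
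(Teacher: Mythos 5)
Your overall architecture mirrors the paper's: sandwich $|\widehat U_t|$ via Assumption~\ref{ass:UA} (this is Lemma~\ref{lem:wpa1}), discard the first $\ell_n$ terms using $\ell_n=o(k/\sqrt n)$, confine the discordant indicators to thin bands around $b(n/[kx])$ and $b(n/[ky])$, and control the band probabilities through Lemma~\ref{lem:Lem F} and the independence of $U_t$ and $U_{t-d}$. The gap is in the concentration step, and it is exactly the obstacle you flag at the end. A first-moment Markov bound on the discordance count gives $\p\{\sqrt n\,k^{-1}\#\{\text{discordant }t\}>\varepsilon\}\le C\varepsilon^{-1}\rho_n k/\sqrt n$, so you are forced to make the band width $\rho_n$ shrink fast enough that $\rho_n k/\sqrt n\to0$, i.e.\ $\epsilon_n=o(\sqrt n/k)$. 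Assumption~\ref{ass:UA} only delivers $\max_{t\ge\ell_n}s_t=o_{\p}(1)$ and $\max_{t\ge\ell_n}m_t=o_{\p}(1)$ with \emph{no rate}, so the proposition as stated cannot be proved this way. Nor does the appeal to Theorems~\ref{thm:UA}--\ref{thm:UA2} rescue it: there $s_t$ is of the form $\eta\Pi_{1,n,t}^{*}/\sqrt n$ with only $\E|\Pi_{1,n,t}^{*}|^{v_0}<\infty$ for some $v_0>2$, so $\max_{t\le n}s_t$ is of order $n^{1/v_0-1/2}$ at best; requiring this to be $o(\sqrt n/k)$ amounts to $k=o(n^{1-1/v_0})$, which for $v_0$ near $2$ contradicts $k/\sqrt n\to\infty$ and certainly fails for the recommended $k\approx0.11\,n^{0.99}$ unless $v_0$ is enormous. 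So the claim that the approximation errors ``decay faster than $\sqrt n/k$'' is unsupported and generally false.

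The paper's proof (Lemma~\ref{lem:Bop1}) avoids any rate on $s_t,m_t$ by a different bookkeeping: it fixes $\varepsilon_0>0$, works on the event $\{\max_t s_t\vee\max_t m_t\le\varepsilon_0\}$ (which has probability tending to one), and confines the discordance to a band of \emph{fixed} relative width $\nu=\nu(\varepsilon_0)$. The resulting count is then controlled via its \emph{fourth} moment using Lemma~2.3 of \citet{Sha93} for the $d$-dependent indicators, which yields
\[
\p\Big\{\tfrac{\sqrt n}{k}\textstyle\sum_t I_{2t}>\varepsilon\Big\}\le C\tfrac{n^2}{k^4}\Big[n^2\big\{\E I_{2t}\big\}^2+n\,\E I_{2t}\Big]=O\big(\nu^2+\tfrac{n}{k^2}\nu\big)=O(\nu),
\]
uniformly in $n$, because the $(\sqrt n/k)^4$ prefactor absorbs the $k/\sqrt n\to\infty$ growth that kills the first-moment bound. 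Sending $n\to\infty$ first and then $\nu\downarrow0$ closes the argument. If you want to keep your shrinking-band formulation, you would likewise need to upgrade Markov on the first moment to a fourth-moment (or similar blocking/mixing) inequality; with the first moment alone the proof does not go through under the stated assumptions.
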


\begin{prop}\label{prop:emp an}
For any $x>0$, it holds that
\[
	|\xi_n(x)|=\left|\sqrt{k}\log\Bigg(\frac{\widehat{U}_{(\lfloor kx\rfloor+1)}}{b(n/[kx])}\Bigg)\right|=O_{\p}(1).
\]
\end{prop}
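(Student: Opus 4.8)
The plan is to show that the empirical threshold $\widehat{U}_{(\lfloor kx\rfloor+1)}$ is, on the $\log$-scale and after multiplication by $\sqrt{k}$, within $O_\p(1)$ of the population quantile $b(n/[kx])$. The natural route is to sandwich $\widehat{U}_{(\lfloor kx\rfloor+1)}$ between population quantiles of the form $b(n/[kx e^{\mp c/\sqrt{k}}])$ for a large constant $c$, then use Assumption~\ref{ass:U distr 2}~(iii) (regular variation of $\overline F$ with index $-\alpha$, hence of $b$) to convert the multiplicative perturbation of the argument into a multiplicative perturbation of $b$ of size $1+O(1/\sqrt k)$, which on the $\log$-scale is $O(1/\sqrt k)$, and hence $O_\p(1)$ after multiplying by $\sqrt k$. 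So it suffices to prove a probabilistic statement of the form: for every $\varepsilon>0$ there is a $c>0$ with
\[
	\p\Big\{ b\big(\tfrac{n}{kx}e^{-c/\sqrt k}\big) \le \widehat{U}_{(\lfloor kx\rfloor+1)} \le b\big(\tfrac{n}{kx}e^{c/\sqrt k}\big)\Big\} \ge 1-\varepsilon
\]
for all large $n$.

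First I would handle the genuinely new ingredient: moving from $\widehat{U}_t$ to $U_t$. The event $\{\widehat{U}_{(\lfloor kx\rfloor+1)} \le u\}$ is exactly $\{\sum_{t} I_{\{\widehat U_t > u\}} \le \lfloor kx\rfloor\}$ (the order statistics are in decreasing order, so the $(\lfloor kx\rfloor+1)$-st is $\le u$ iff at most $\lfloor kx\rfloor$ of the $\widehat U_t$ exceed $u$), and similarly for the lower bound. With $u = b(n/[kx]\,e^{\mp c/\sqrt k})$ this is precisely a statement about $\widehat{M}_n$-type sums — indeed $\frac1k\sum_t I_{\{\widehat U_t > b(n/[k\tilde x])\}}$ with $\tilde x = x e^{\mp c/\sqrt k}$, which is the one-dimensional marginal of $\widehat M_n(\vxi)$ with $\xi_1 = \mp c x$ (up to the $O(1/\sqrt k)$ discrepancy between $b(n/[kx e^{a/\sqrt k}])$ and $e^{a'/\sqrt k} b(n/[kx])$, absorbed by enlarging $c$). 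By Proposition~\ref{prop:hat convergence} (with $K = cx$, say) and Proposition~\ref{prop:xi convergence}, this differs from $\frac1k\sum_t I_{\{U_t > b(n/[kx])\}}$ by $o_\p(n^{-1/2}) = o_\p(1)$, so it is enough to establish the sandwich for the \emph{infeasible} order statistic $U_{(\lfloor kx\rfloor+1)}$.

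For $U_{(\lfloor kx\rfloor+1)}$ the argument is the standard intermediate-order-statistic tail bound: $S_n := \sum_{t} I_{\{U_t > b(n/[kx]\,e^{-c/\sqrt k})\}}$ is (up to the negligible edge effect of dropping $d$ terms) a $\mathrm{Bin}(n, p_n)$ count with $n p_n = n\overline F\big(b(n/[kx]e^{-c/\sqrt k})\big) \sim kx e^{-c/\sqrt k} \to kx$, more precisely $np_n = kx(1 + O(c/\sqrt k))$. The event $\{\widehat U_{(\lfloor kx\rfloor+1)} > b(n/[kx]e^{-c/\sqrt k})\}$ — the failure of the lower bound — is $\{S_n \ge \lfloor kx\rfloor + 1\}$ after the $\widehat U \to U$ reduction above; since $\E S_n - \lfloor kx\rfloor \asymp -c\sqrt k \to -\infty$, Chebyshev's inequality (variance $np_n(1-p_n)\asymp k$) bounds this probability by $O(k/(c\sqrt k)^2) = O(c^{-2})$, uniformly in $n$. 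The upper bound is symmetric. Choosing $c$ large makes both small, giving the sandwich; then regular variation of $b$ turns it into $|\xi_n(x)| = O_\p(1)$. The main obstacle is bookkeeping rather than depth: one must carefully track that the multiplicative perturbation $e^{\mp c/\sqrt k}$ of the \emph{argument} of $b$ produces an $O(1/\sqrt k)$ perturbation on the $\log b$-scale — this needs the locally uniform version of regular variation (Potter-type bounds), which is available under Assumption~\ref{ass:U distr 2}~(iii) — and that the $\widehat U \to U$ transfer via Propositions~\ref{prop:xi convergence}–\ref{prop:hat convergence} is applied with a $K$ that does not depend on $n$, which is why $c$ must be fixed before taking $n\to\infty$.
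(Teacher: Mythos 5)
Your skeleton coincides with the paper's: Proposition~\ref{prop:emp an} is obtained there from its uniform version, Proposition~\ref{prop:unif emp an}, whose proof rewrites $\{\xi_n(x)\le\xi\}$ as the count event $\sum_{t}I_{\{\widehat U_t>e^{\xi/\sqrt{k}}b(n/[kx])\}}\le kx$ (your equivalence), transfers $\widehat U_t\to U_t$, and controls the centered count---by a functional CLT there, by Chebyshev in your pointwise version, which is a perfectly good shortcut. The problem is the step you yourself single out as ``the main obstacle'': converting $b(n/[kx\,e^{\mp c/\sqrt{k}}])$ into $e^{O(1)/\sqrt{k}}\,b(n/[kx])$. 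Potter-type bounds do \emph{not} deliver this. Potter's inequality has the form $b(\lambda s)/b(s)\le(1+\varepsilon)\lambda^{1/\alpha+\delta}$ with a multiplicative slack $1+\varepsilon$ that does not shrink as $\lambda\to1$, so $\sqrt{k}\log\big(b(\lambda_n s_n)/b(s_n)\big)\le\sqrt{k}\log(1+\varepsilon)+O(1)\to\infty$ for every fixed $\varepsilon>0$, and $\varepsilon$ cannot be sent to zero along with $n$ because the validity threshold $s_0(\varepsilon)$ is uncontrolled; the locally uniform convergence theorem likewise gives no rate. An $O(1/\sqrt{k})$-scale statement about $b$ is a second-order regular variation property and is false under regular variation of $\overline{F}$ alone. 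What saves it is the full von Mises condition: by Assumption~\ref{ass:U distr 2}~(ii)--(iii), $(\log b)'(s)=[s\,\alpha(b(s))]^{-1}$ with $\alpha(y)=yf(y)/\overline{F}(y)\to\alpha>0$, whence $\log b(\lambda s)-\log b(s)=\alpha^{-1}(1+o(1))\log\lambda$ with error \emph{proportional to} $\log\lambda$. This is exactly the content of Lemma~\ref{lem:Lem F}, and it is why the paper perturbs the threshold multiplicatively as $e^{\xi/\sqrt{k}}b(n/[kx])$ from the outset instead of perturbing the argument of $b$. You must either invoke that lemma (not Potter) or adopt the multiplicative parametrization.

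A second, smaller issue: the $\widehat U\to U$ transfer for the univariate counts $\sum_t I_{\{\widehat U_t>u\}}$ is not a one-dimensional marginal of Propositions~\ref{prop:xi convergence}--\ref{prop:hat convergence}, which concern the lagged bivariate indicators $I_{\{\widehat U_t>\cdot,\ \widehat U_{t-d}>\cdot\}}$; no admissible choice of $(y,\xi_2)$ turns the second indicator into the constant $1$. The paper proves the univariate analogue separately (Lemma~\ref{lem:univ unif}, via Lemmas~\ref{lem:Ct bound} and~\ref{lem:Dt unif}) by the same Markov-plus-moment-bound technique, so nothing substantive is missing, but your reduction should point to that lemma rather than to the bivariate propositions. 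Your bookkeeping of scales is otherwise right: an $o_{\p}(n^{-1/2})$ bound on $k^{-1}\sum(\cdot)$ is $o_{\p}(\sqrt{k/n})=o_{\p}(1)$ on the $k^{-1/2}\sum(\cdot)$ scale, which is what the $\sqrt{k}$-order fluctuations of the count require.
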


Equipped with these four propositions, we proceed to prove Theorem~\ref{thm:mainresult}.

\begin{proof}[{\textbf{Proof of Theorem~\ref{thm:mainresult}:}}]
We show that for each $d\in\{1,\ldots,D\}$,
\begin{equation}\label{eq:to show1}
	\Bigg|\frac{1}{k}\sum_{t=d+1}^{n}I_{\{U_t>b(\frac{n}{kx}),\ U_{t-d}>b(\frac{n}{ky})\}}-\frac{1}{k}\sum_{t=d+1}^{n}I_{\{\widehat{U}_t>\widehat{U}_{(\lfloor kx\rfloor+1)},\ \widehat{U}_{t-d}>\widehat{U}_{(\lfloor ky\rfloor+1)}\}}\Bigg|=o_{\p}(n^{-1/2}).
\end{equation}
Combining this with Proposition~\ref{prop:base convergence}, the desired convergence follows from Lemma~4.7 of \citet{Whi01} and the continuous mapping theorem \citep[e.g.,][Theorem~7.20]{Whi01}. Decompose
\begin{align}
M_n(\vzeros) &= \frac{1}{k}\sum_{t=d+1}^{n}I_{\{U_t>b(\frac{n}{kx}),\ U_{t-d}>b(\frac{n}{ky})\}} \notag\\
&= \frac{1}{k}\sum_{t=d+1}^{n}I_{\{U_t>b(\frac{n}{kx}),\ U_{t-d}>b(\frac{n}{ky})\}}-\frac{1}{k}\sum_{t=d+1}^{n}I_{\{U_t>e^{\xi_1/\sqrt{k}}b(\frac{n}{kx}),\ U_{t-d}>e^{\xi_2/\sqrt{k}}b(\frac{n}{ky})\}}\notag\\
&\hspace{0.4cm} +\frac{1}{k}\sum_{t=d+1}^{n}I_{\{U_t>e^{\xi_1/\sqrt{k}}b(\frac{n}{kx}),\ U_{t-d}>e^{\xi_2/\sqrt{k}}b(\frac{n}{ky})\}} - \frac{1}{k}\sum_{t=d+1}^{n}I_{\{\widehat{U}_t>e^{\xi_1/\sqrt{k}}b(\frac{n}{kx}),\ \widehat{U}_{t-d}>e^{\xi_2/\sqrt{k}}b(\frac{n}{ky})\}}\notag\\
&\hspace{0.4cm} +\frac{1}{k}\sum_{t=d+1}^{n}I_{\{\widehat{U}_t>e^{\xi_1/\sqrt{k}}b(\frac{n}{kx}),\ \widehat{U}_{t-d}>e^{\xi_2/\sqrt{k}}b(\frac{n}{ky})\}}\notag\\
&=o_{\p}(n^{-1/2}) + \widehat{M}_n(\vxi),\label{eq:(2.1.1)}
\end{align}
where the $o_{\p}(n^{-1/2})$-term is uniform on $\vxi\in[-K,K]^2$ by Propositions~\ref{prop:xi convergence} and~\ref{prop:hat convergence}. 

Define $\vxi_n=\big(\xi_n(x),\ \xi_n(y)\big)^\prime$, where $\xi_n(z)=\sqrt{k}\log\big(\widehat{U}_{(\lfloor kz\rfloor+1)}/b(n/[kz])\big)$ ($z\in\{x,y\}$) as in Proposition~\ref{prop:emp an}. Then, $\widehat{M}_n(\vxi_n)=\frac{1}{k}\sum_{t=d+1}^{n}I_{\{\widehat{U}_t>\widehat{U}_{(\lfloor kx\rfloor+1)},\ \widehat{U}_{t-d}>\widehat{U}_{(\lfloor ky\rfloor+1)}\}}$. To show \eqref{eq:to show1} or, equivalently, that $\widehat{M}_n(\vxi_n)-M_n(\vzeros)=o_{\p}(n^{-1/2})$ we have to prove that for any $\varepsilon>0$ and $\delta>0$
\[
	\p\left\{\sqrt{n}|\widehat{M}_n(\vxi_n)-M_n(\vzeros)|>\varepsilon\right\}\leq\delta
\]
for sufficiently large $n$. By Proposition~\ref{prop:emp an}, we can choose $K>0$ such that $\p\{|\vxi_n|>K\}\leq\delta/2$ for all sufficiently large $n$. Furthermore, by \eqref{eq:(2.1.1)}, $\p\{\sqrt{n}\sup_{\vxi\in[-K, K]^2}|\widehat{M}_n(\vxi)-M_n(\vzeros)|>\varepsilon\}\leq\delta/2$ for large $n$. Using these two results, we obtain
\begin{align*}
	\p\left\{\sqrt{n}|\widehat{M}_n(\vxi_n)-M_n(\vzeros)|>\varepsilon\right\} &\leq\p\Big\{|\vxi_n|>K\Big\} + \p\Big\{\sqrt{n}\sup_{\vxi\in[-K, K]^2}|\widehat{M}_n(\vxi)-M_n(\vzeros)|>\varepsilon,\ |\vxi_n|\leq K\Big\}\\
	&\leq\frac{\delta}{2}+\frac{\delta}{2}=\delta,
\end{align*}
i.e., \eqref{eq:to show1}. The conclusion follows.
\end{proof}

\subsection{Proofs of Propositions~\ref{prop:base convergence}--\ref{prop:emp an}}\label{Proofs of Propositions}

\begin{proof}[{\textbf{Proof of Proposition~\ref{prop:base convergence}:}}]
For $d\in\{1,\ldots,D\}$, we define
\begin{align*}
	Z_{n,t}^{(d)}&=Z_{n,t}^{(d)}(x,y)=\begin{cases}0, & t=1,\ldots,d,\\
	\frac{n}{k}\left[I_{\left\{U_t>b(\frac{n}{kx}),\ U_{t-d}>b(\frac{n}{ky})\right\}}
	-\p\left\{U_t>b\Big(\frac{n}{kx}\Big),\ U_{t-d}>b\Big(\frac{n}{ky}\Big)\right\}\right],& t>d,
	\end{cases}\\
	\mZ_{n,t}&=\big(Z_{n,t}^{(1)},\ldots, Z_{n,t}^{(D)}\big)^\prime.
\end{align*}
Note that by Assumption~\ref{ass:U distr 2}~(ii)
\[
	\p\left\{U_t>b\Big(\frac{n}{kx}\Big),\ U_{t-d}>b\Big(\frac{n}{ky}\Big)\right\}=\p\left\{U_t>b\Big(\frac{n}{kx}\Big)\right\}\p\left\{U_{t-d}>b\Big(\frac{n}{ky}\Big)\right\}=\left(\frac{k}{n}\right)^2xy
\]
for sufficiently large $n$.

Let $\vlambda=(\lambda_1,\ldots,\lambda_D)^\prime\in\mathbb{R}^{D}$ be any vector with $\vlambda^\prime\vlambda=1$. Then, by the Cram\'{e}r--Wold device \citep[e.g.,][Proposition~5.1]{Whi01} it suffices to show that $n^{-1/2}\sum_{t=1}^{n}\vlambda^\prime\mZ_{n,t}\overset{d}{\longrightarrow}\sqrt{xy}\vlambda^\prime \mZ$ for $\mZ\sim N(\vzeros,\mI_{D\times D})$. To do so, we verify the conditions of Theorem~5.20 of \citet{Whi01}.

Since the $\vlambda^\prime \mZ_{n,t}$ are (row-wise) $D$-dependent, the array $\{\vlambda^\prime \mZ_{n,t}\}$ is obviously $\phi$-mixing of any rate \citep[Sec.~3]{Whi01}. Furthermore, $\E[\vlambda^\prime \mZ_{n,t}]=0$ and 
\begin{align*}
	\E[(\vlambda^\prime \mZ_{n,t})^2] &= \Var(\vlambda^\prime \mZ_{n,t})=\Var\Big(\sum_{d=1}^{D}\lambda_d Z_{n,t}^{(d)}\Big)\\
	&=\sum_{d=1}^{D}\lambda_d^2\Var( Z_{n,t}^{(d)}) + \sum_{c\neq d}\lambda_c\lambda_d \Cov( Z_{n,t}^{(c)},\ Z_{n,t}^{(d)}).
\end{align*}
Consider the variances and covariances separately. For the variance we get for $t>d$
\begin{align}
\Var\big( Z_{n,t}^{(d)}\big) &= \frac{n^2}{k^2}\Var\Big(I_{\left\{U_t>b(\frac{n}{kx}),\ U_{t-d}>b(\frac{n}{ky})\right\}}\Big)\notag\\
	&= \frac{n^2}{k^2}\Big(\frac{k}{n}\Big)^2xy\left[1-\Big(\frac{k}{n}\Big)^2xy\right]\notag\\
	&=xy+o(1),\label{eq:Var1}
\end{align}
where the second line exploits that $I_{\left\{U_t>b(\frac{n}{kx}),\ U_{t-d}>b(\frac{n}{ky})\right\}}$ is Bernoulli-distributed with success probability $p=(k/n)^2xy$ and, hence,
\begin{equation*}
	\Var\Big(I_{\left\{U_t>b(\frac{n}{kx}),\ U_{t-d}>b(\frac{n}{ky})\right\}}\Big)=\Big(\frac{k}{n}\Big)^2xy\left[1-\Big(\frac{k}{n}\Big)^2xy\right].
\end{equation*}
Using similar arguments, we obtain
\begin{align}
	\Cov\big( Z_{n,t}^{(c)}, Z_{n,t}^{(d)}\big) &= O\left(\frac{n^2}{k^2}\p\left\{U_r>b\Big(\frac{n}{kx}\Big),\ U_{s}>b\Big(\frac{n}{ky}\Big),\ U_{t}>b\Big(\frac{n}{kz}\Big)\right\}\right)\notag\\
	&=O\Big(\frac{n^2}{k^2}\frac{k^3}{n^3}\Big)=O\Big(\frac{k}{n}\Big)=o(1)\label{eq:cov bound in sum},
\end{align}
where $r,s,t$ are pairwise distinct. Overall, we have $\E[(\vlambda^\prime \mZ_{n,t})^2]\leq 2<\infty$ for sufficiently large $n$.

Finally, exploiting $D$-dependence of $\vlambda^\prime\mZ_{n,t}$ again,
\begin{align}
\overline{\sigma}_n^2&:= \Var\Big(n^{-1/2}\sum_{t=1}^{n}\vlambda^\prime \mZ_{n,t}\Big)\notag\\
&=\frac{1}{n}\sum_{t=1}^{n}\Var(\vlambda^\prime \mZ_{n,t})+\frac{1}{n}\sum_{s\neq t}\Cov(\vlambda^\prime \mZ_{n,s}, \vlambda^\prime \mZ_{n,t})\notag\\
&=\frac{1}{n}\sum_{t=1}^{n}\Var(\vlambda^\prime \mZ_{n,t})+\frac{2}{n}\sum_{s=1}^{n}\sum_{t=s+1}^{\min\{s+D, n\}}\Cov(\vlambda^\prime \mZ_{n,s}, \vlambda^\prime \mZ_{n,t}).\label{eq:(A.5-)}
\end{align}
It is easy to check that the sum on the right-hand side of \eqref{eq:(A.5-)} vanishes, because the covariance terms are of order $o(1)$ by \eqref{eq:cov bound in sum}. For the sum involving the variances, we get using \eqref{eq:Var1}, \eqref{eq:cov bound in sum} and $\vlambda^\prime\vlambda=1$ that for $t>D$
\begin{align*}
\Var(\vlambda^\prime \mZ_{n,t}) 	&= \Var\Big(\sum_{d=1}^{D}\lambda_{d} Z_{n,t}^{(d)}\Big)\\
															&=\sum_{d=1}^{D}\lambda_{d}^2\Var\big( Z_{n,t}^{(d)}\big) + \sum_{c\neq d}\lambda_{c}\lambda_{d}\Cov\big(Z_{n,t}^{(c)}, Z_{n,t}^{(d)}\big)\\
															&= \sum_{d=1}^{D}\lambda_{d}^2[xy+o(1)]+o(1)=xy+o(1).							
\end{align*}
Overall, we obtain that $\overline{\sigma}_n^2=xy+o(1)\longrightarrow xy$, as $n\to\infty$.

Thus, we may apply Theorem~5.20 of \citet{Whi01}, yielding that $n^{-1/2}\sum_{t=1}^{n}\vlambda^\prime\mZ_{n,t}\overset{d}{\longrightarrow}\sqrt{xy}\vlambda^\prime \mZ$. This concludes the proof.
\end{proof}

The following lemma bounds the variation in the tail of the $U_t$, and is used throughout the proofs. 

\begin{lem}\label{lem:Lem F}
Suppose Assumption~\ref{ass:U distr 2} holds. Then, for any $\nu\in(-1,\infty)$ there exists $\widetilde{\nu}$ between $1$ and $\nu$, such that
\begin{align*}
	\overline{F}\Big((1+\nu)b\Big(\frac{n}{kx}\Big)\Big) &= \frac{kx}{n}\Big[1-\frac{\alpha\nu}{(1+\widetilde{\nu})^{1+\alpha}}+\nu o(1)\Big],
\intertext{where the $o(1)$-term vanishes uniformly on compact $\nu$-sets in $(-1,\infty)$ and on compact $x$-sets in $(0,\infty)$. In particular,}
	\overline{F}\Big(e^{\xi/\sqrt{k}}b\Big(\frac{n}{kx}\Big)\Big) &= \frac{kx}{n}\Bigg[1-\frac{\alpha\xi}{\sqrt{k}}+o\Big(\frac{1}{\sqrt{k}}\Big)\Bigg]
\end{align*}
uniformly on compact $\xi$-sets in $\mathbb{R}$ and on compact $x$-sets in $(0,\infty)$.
\end{lem}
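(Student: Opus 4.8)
The plan is to base everything on the von Mises condition Assumption~\ref{ass:U distr 2}(iii), which controls the logarithmic derivative of the tail, and to extract the refined first-order expansion with a single mean value theorem. Write $t_n=t_n(x)=b(n/(kx))$. Since the upper endpoint of $F$ is infinite and $n/(kx)\to\infty$, we have $t_n\to\infty$, and this holds uniformly over $x$ in any compact subset of $(0,\infty)$ because $b(\cdot)$ is nondecreasing. By Assumption~\ref{ass:U distr 2}(ii) the density $f$ exists on $[C_F,\infty)$ and is eventually positive there (from (iii) together with $\overline{F}>0$), so $F$ is continuous and strictly increasing on a tail; hence $\overline{F}(t_n)=kx/n$ for all large $n$, which I use freely.

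First I would establish a uniform regular-variation estimate. Write $f(s)/\overline{F}(s)=(\alpha+\epsilon(s))/s$ on $[C_F,\infty)$ with $\epsilon(s)\to 0$. For $r$ ranging over a compact subset of $(0,\infty)$ and $n$ large enough that $rt_n\ge C_F$, integrating the identity $f(s)/\overline{F}(s)=-(\log\overline{F})'(s)$ between $t_n$ and $rt_n$ gives
\[
	\log\frac{\overline{F}(rt_n)}{\overline{F}(t_n)}=-\alpha\log r-\int_{t_n}^{rt_n}\frac{\epsilon(s)}{s}\,\mathrm{d}s ,
\]
and the remainder is bounded in absolute value by $|\log r|\sup_{s\ge c\,t_n}|\epsilon(s)|$ for a constant $c>0$ depending only on the compact $r$-set; since $t_n\to\infty$ uniformly in $x$, this bound is $o(1)$ uniformly in $r$ and $x$. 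Exponentiating and using $\overline{F}(t_n)=kx/n$ gives $\overline{F}(rt_n)=\tfrac{kx}{n}\,r^{-\alpha}(1+o(1))$, uniformly over compact $r$- and $x$-sets.

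Next I would apply the mean value theorem to $\overline{F}$ itself. Fix a compact $\nu$-set in $(-1,\infty)$; it is bounded away from $-1$, so $1+\nu$ lies in a fixed compact subset of $(0,\infty)$, and for large $n$ both $t_n$ and $(1+\nu)t_n$ exceed $C_F$. Then there is an intermediate point, which we write $(1+\widetilde\nu)t_n$ with $\widetilde\nu$ strictly between $0$ and $\nu$ (equivalently $1+\widetilde\nu$ strictly between $1$ and $1+\nu$), such that $\overline{F}((1+\nu)t_n)-\overline{F}(t_n)=-\nu\,t_n\,f((1+\widetilde\nu)t_n)$; here $1+\widetilde\nu$ ranges over a fixed compact subset of $(0,\infty)$. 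Writing
\[
	t_n f\big((1+\widetilde\nu)t_n\big)=\frac{1}{1+\widetilde\nu}\cdot\frac{(1+\widetilde\nu)t_n f\big((1+\widetilde\nu)t_n\big)}{\overline{F}\big((1+\widetilde\nu)t_n\big)}\cdot\overline{F}\big((1+\widetilde\nu)t_n\big),
\]
the middle factor is $\alpha+o(1)$ by (iii) (uniformly, since $(1+\widetilde\nu)t_n\to\infty$ uniformly) and the last factor is $\tfrac{kx}{n}(1+\widetilde\nu)^{-\alpha}(1+o(1))$ by the previous step. Substituting and recalling $\overline{F}(t_n)=kx/n$ gives
\[
	\overline{F}\big((1+\nu)t_n\big)=\frac{kx}{n}\Big[1-\frac{\alpha\nu}{(1+\widetilde\nu)^{1+\alpha}}+\nu\,o(1)\Big],
\]
the cross term being absorbed into $\nu\,o(1)$ because $(1+\widetilde\nu)^{-1-\alpha}$ is bounded; all $o(1)$'s are uniform on compact $\nu$- and $x$-sets, which is the first assertion. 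For the ``in particular'' part, put $\nu=\nu_n:=e^{\xi/\sqrt{k}}-1$: on a compact $\xi$-set, $\nu_n=\xi/\sqrt{k}+O(1/k)\to 0$ uniformly, so eventually $\nu_n$ lies in a fixed compact subset of $(-1,\infty)$ and the display applies; moreover $\widetilde\nu_n$ is then squeezed to $0$, so $(1+\widetilde\nu_n)^{-1-\alpha}=1+o(1)$, whence $\alpha\nu_n(1+\widetilde\nu_n)^{-1-\alpha}=\alpha\xi/\sqrt{k}+o(1/\sqrt{k})$ and $\nu_n\,o(1)=o(1/\sqrt{k})$, both uniform in $\xi$. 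This yields $\overline{F}(e^{\xi/\sqrt{k}}b(n/(kx)))=\tfrac{kx}{n}[1-\alpha\xi/\sqrt{k}+o(1/\sqrt{k})]$.

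The hard part will be bookkeeping the uniformity of the $o(1)$ terms simultaneously over $\nu$ --- which may approach the boundary point $-1$ --- and over $x$ in a compact set. The two facts that make it work are that compactness keeps $1+\nu$, and hence the intermediate value $1+\widetilde\nu$, bounded away from $0$ and $\infty$, and that $t_n=b(n/(kx))\to\infty$ uniformly over compact $x$-sets by monotonicity of $b$; the latter is exactly what forces $\sup_{s\ge c\,t_n}|\epsilon(s)|\to 0$ uniformly. Everything else is elementary.
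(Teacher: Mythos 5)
Your proof is correct and follows essentially the same route as the paper's: a mean value theorem applied to $\overline{F}$ between $b(n/[kx])$ and $(1+\nu)b(n/[kx])$, the three-factor decomposition of $t_nf((1+\widetilde{\nu})t_n)$ into $(1+\widetilde{\nu})^{-1}$ times the von Mises ratio times $\overline{F}((1+\widetilde{\nu})t_n)$, and a uniform regular-variation estimate for the last factor, with the same squeezing of $\widetilde{\nu}$ to $0$ for the ``in particular'' part. The only (harmless) difference is that you derive the uniform regular-variation bound from scratch by integrating the von Mises condition, whereas the paper cites Theorems~1.1.11 and B.1.10 of de Haan and Ferreira (2006); you also correctly read the intermediate point as lying between $0$ and $\nu$ (equivalently, $1+\widetilde{\nu}$ between $1$ and $1+\nu$), which is what the argument actually requires.
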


Lemma~\ref{lem:Lem F} is proved in Appendix~\ref{Proofs of Auxiliary Lemmas}.

\begin{proof}[{\textbf{Proof of Proposition~\ref{prop:xi convergence}:}}]
We only consider the supremum over $\vxi\in[-K,0]^2$; that over the other three quadrants can be dealt with similarly. Observe that
\begin{align*}
	\sup_{\vxi\in[-K,0]^2}|M_n(\vxi)-M_n(\vzeros)|&\leq M_n(-(K,K)^\prime)-M_n(\vzeros)\\
	&=\frac{1}{k}\sum_{t=d+1}^{n}\Big[I_{\{U_t>e^{-K/\sqrt{k}}b(\frac{n}{kx}),\ U_{t-d}>e^{-K/\sqrt{k}}b(\frac{n}{ky})\}}-I_{\{U_t>b(\frac{n}{kx}),\ U_{t-d}>b(\frac{n}{ky})\}}\Big]\\
	&=:\frac{1}{k}\sum_{t=d+1}^{n}I_{1t},
\end{align*}
where $I_{1t}\in\{0,1\}$. Markov's inequality implies
\begin{equation}\label{eq:Cheby}
	\p\left\{\sqrt{n}\sup_{\vxi\in[-K,0]^2}|M_n(\vxi)-M_n(\vzeros)|>\varepsilon\right\}
	\leq\frac{\varepsilon^{-4}n^2}{k^4}\E\Big[\sum_{t=d+1}^{n}I_{1t}\Big]^4.
\end{equation}
Due to $d$-dependence, the $I_{1t}$ are trivially $\rho$-mixing, and thus we may apply Lemma~2.3 of \citet{Sha93} (for, in his notation, $q=4$) to deduce that
\begin{equation}\label{eq:Shao1}
	\E\Big[\sum_{t=d+1}^{n}I_{1t}\Big]^4\leq n^2 C \big\{\E[I_{1t}^2]\big\}^2+n C \E[I_{1t}^4].
\end{equation}
Since $I_{1t}\in\{0,1\}$ and, hence, $\E[I_{1t}^4]=\E[I_{1t}^2]=\E[I_{1t}]$, it suffices to compute $\E[I_{1t}]$. Using Lemma~\ref{lem:Lem F}, we get that
\begin{align*}
	\E[I_{1t}]	&=\p\Big\{U_t>e^{-K/\sqrt{k}}b\Big(\frac{n}{kx}\Big)\Big\} \p\Big\{U_{t-d}>e^{-K/\sqrt{k}}b\Big(\frac{n}{ky}\Big)\Big\}-\p\Big\{U_t>b\Big(\frac{n}{kx}\Big)\Big\}\p\Big\{U_{t-d}>b\Big(\frac{n}{ky}\Big)\Big\}\\
	&=\frac{k^2}{n^2}xy\Big[1+\frac{\alpha K}{\sqrt{k}}+o\Big(\frac{1}{\sqrt{k}}\Big)\Big]^2-\frac{k^2}{n^2}xy\\
	&=\frac{k^2}{n^2}xy\Big[1+\frac{2\alpha K}{\sqrt{k}}+o\Big(\frac{1}{\sqrt{k}}\Big)-1\Big]\\
	&\leq C\frac{k^{3/2}}{n^2}xy,
\end{align*}
whence, from \eqref{eq:Shao1},
\[
	\E\Big[\sum_{t=d+1}^{n}I_{1t}\Big]^4\leq C\frac{k^3}{n^2}+C\frac{k^{3/2}}{n}.
\]
Together with \eqref{eq:Cheby}, this implies that
\[
	\p\Big\{\sqrt{n}\sup_{\vxi\in[-K,0]^2}|M_n(\vxi)-M_n(\vzeros)|>\varepsilon\Big\}=O\Big(\frac{1}{k}+\frac{n}{k^{5/2}}\Big)=o(1)
\]
by Assumption~\ref{ass:k}. Hence, the conclusion follows.
\end{proof}

Define
\begin{align}
	A_t(\vxi,\vtheta) &= I_{\left\{\widehat{U}_t(\vtheta)>e^{\xi_1/\sqrt{k}}b(n/[kx]),\ \widehat{U}_{t-d}(\vtheta)>e^{\xi_2/\sqrt{k}}b(n/[ky])\right\}},\notag\\
	A_t(\vxi) &= I_{\left\{U_t>e^{\xi_1/\sqrt{k}}b(n/[kx]),\ U_{t-d}>e^{\xi_2/\sqrt{k}}b(n/[ky])\right\}},\notag\\
	A_t(\vxi,\eta,\eta_0) &= I_{\left\{U_t[1+\eta_0\s_{t}]+\eta_0 m_t>e^{\xi_1/\sqrt{k}}b(n/[kx]),\ U_{t-d}[1+\eta_0\s_{t-d}]+\eta_0 m_{t-d}>e^{\xi_2/\sqrt{k}}b(n/[ky])\right\}},\qquad \eta_0\in\{-1, 1\},\notag\\
	B_t(\vxi,\eta,\eta_0) &= B_t(\vx,\vxi,\eta,\eta_0)= A_t(\vxi,\eta,\eta_0) - A_t(\vxi),\label{eq:(A.10+)}
\end{align}
where $m_{t}=m_{n,t}(\eta)\geq0$ and $s_{t}=s_{n,t}(\eta)\geq0$ are from Assumption~\ref{ass:UA}.

The proof of Proposition~\ref{prop:hat convergence} requires two further preliminary lemmas, which are both proven in Appendix~\ref{Proofs of Auxiliary Lemmas}.

\begin{lem}\label{lem:wpa1}
Let $\eta>0$. Then, w.p.a.~1, as $n\to\infty$,
\begin{equation*}
	A_t(\vxi,\eta,-1) \leq A_t(\vxi,\vtheta) \leq A_t(\vxi,\eta, 1)
\end{equation*}
for all $\vtheta\in N_n(\eta)$ and $t=\ell_{n},\ldots,n$. 
\end{lem}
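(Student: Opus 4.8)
The plan is to use Assumption~\ref{ass:UA} directly: it gives r.v.s $m_t,s_t\geq0$ with $\max_{t=\ell_n,\ldots,n}m_t=o_\p(1)$ and $\max_{t=\ell_n,\ldots,n}s_t=o_\p(1)$ such that, with probability approaching $1$, $|\varepsilon_t|(1-s_t)-m_t\leq|\widehat\varepsilon_t(\vtheta)|\leq|\varepsilon_t|(1+s_t)+m_t$ for all $t=\ell_n,\ldots,n$ and all $\vtheta\in N_n(\eta)$. In the notation of Appendix~C this reads $U_t(1-s_t)-m_t\leq\widehat U_t(\vtheta)\leq U_t(1+s_t)+m_t$. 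First I would fix this as the event on which we work, say $\Omega_n$, with $\p\{\Omega_n\}\to1$; everything below is argued on $\Omega_n$.

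The key step is monotonicity of the indicators in their arguments. On $\Omega_n$ we have $\widehat U_t(\vtheta)\leq U_t[1+s_t]+m_t$ and $\widehat U_{t-d}(\vtheta)\leq U_{t-d}[1+s_{t-d}]+m_{t-d}$ simultaneously. Hence if the event defining $A_t(\vxi,\vtheta)$ occurs, i.e.\ $\widehat U_t(\vtheta)>e^{\xi_1/\sqrt k}b(n/[kx])$ and $\widehat U_{t-d}(\vtheta)>e^{\xi_2/\sqrt k}b(n/[ky])$, then a fortiori $U_t[1+s_t]+m_t>e^{\xi_1/\sqrt k}b(n/[kx])$ and $U_{t-d}[1+s_{t-d}]+m_{t-d}>e^{\xi_2/\sqrt k}b(n/[ky])$, which is exactly the event defining $A_t(\vxi,\eta,1)$. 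Therefore $A_t(\vxi,\vtheta)\leq A_t(\vxi,\eta,1)$ on $\Omega_n$, for all admissible $t$ and $\vtheta$. The lower bound is symmetric: on $\Omega_n$, $\widehat U_t(\vtheta)\geq U_t[1-s_t]-m_t$ and $\widehat U_{t-d}(\vtheta)\geq U_{t-d}[1-s_{t-d}]-m_{t-d}$, so whenever $U_t[1-s_t]-m_t>e^{\xi_1/\sqrt k}b(n/[kx])$ and $U_{t-d}[1-s_{t-d}]-m_{t-d}>e^{\xi_2/\sqrt k}b(n/[ky])$ hold (the event defining $A_t(\vxi,\eta,-1)$), the corresponding strict inequalities for $\widehat U_t(\vtheta),\widehat U_{t-d}(\vtheta)$ hold as well, giving $A_t(\vxi,\eta,-1)\leq A_t(\vxi,\vtheta)$. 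Since both chains of implications hold on the single event $\Omega_n$, whose probability tends to $1$, the stated two-sided bound holds w.p.a.~1, uniformly over $\vtheta\in N_n(\eta)$ and $t=\ell_n,\ldots,n$, as claimed.

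There is essentially no obstacle here: the lemma is just a bookkeeping consequence of Assumption~\ref{ass:UA} together with the elementary fact that $z\mapsto I_{\{z>c\}}$ is nondecreasing. The only point requiring a word of care is that the indices $t$ and $t-d$ must \emph{both} lie in the range $\{\ell_n,\ldots,n\}$ for the envelope inequalities of Assumption~\ref{ass:UA} to apply at lag $d$; since $d\leq D$ is fixed, one may, if desired, replace $\ell_n$ by $\ell_n+D$ throughout without affecting the asymptotics (or simply note that the lemma, as stated for $t=\ell_n,\ldots,n$, already presupposes the bound at index $t-d$, i.e.\ that $\ell_n$ is taken large enough that $t-d\geq$ the threshold in Assumption~\ref{ass:UA}). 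Either convention suffices, and the rest is immediate.
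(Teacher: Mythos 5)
Your proposal is correct and is essentially identical to the paper's own (very short) proof: both reduce the claim to the two-sided envelope in Assumption~\ref{ass:UA} combined with monotonicity of $z\mapsto I_{\{z>c\}}$ in each coordinate. Your side remark about the index $t-d$ possibly falling below $\ell_n$ is a legitimate bookkeeping point that the paper silently glosses over, and your fix (shifting $\ell_n$ by the fixed lag $D$) is the right one.
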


\begin{lem}\label{lem:Bop1}
Let $\eta>0$, $\eta_0\in\{-1, 1\}$. Then, for any $K>0$,
\begin{equation*}
\sup_{\vxi\in[-K,K]^2}\left|\frac{1}{k}\sum_{t=d+1}^{n}B_t(\vxi,\eta,\eta_0)\right|=o_{\p}(n^{-1/2}).
\end{equation*}
\end{lem}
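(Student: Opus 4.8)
The plan is to follow the template of the proof of Proposition~\ref{prop:xi convergence}, preceded by two reductions that strip out the supremum over $\vxi$ and the randomness in $m_t,s_t$. Since $|B_t(\vxi,\eta,\eta_0)|\le1$ and $k/\sqrt n\to\infty$ by Assumption~\ref{ass:k}, I first fix an auxiliary sequence $\ell_n\to\infty$ with $\ell_n=o(k/\sqrt n)$, so that $\frac1k\big|\sum_{t=d+1}^{\ell_n-1}B_t(\vxi,\eta,\eta_0)\big|\le\ell_n/k=o(n^{-1/2})$ and it suffices to treat the tail block $t=\ell_n,\dots,n$. Next, Assumption~\ref{ass:UA} provides a deterministic $\epsilon_n\to0$ with $\p\{\max_{t=\ell_n,\dots,n}(m_t\vee s_t)>\epsilon_n\}\to0$; on the complement $\Omega_n^c$ of $\Omega_n:=\{m_t,s_t\le\epsilon_n\text{ for all }t\ge\ell_n\}$ the tail block has absolute value at most $n/k$, and since $n^{3/2}/k$ is a deterministic sequence this contribution is $o_{\p}(n^{-1/2})$; on $\Omega_n$ every perturbation factor $1\pm s_t$ and additive shift $m_t/b(n/[kz])$ that occurs is uniformly $o(1)$, which is precisely what is needed for Lemma~\ref{lem:Lem F} to linearize the relevant tail probabilities uniformly in $t$.

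I would then remove the supremum by monotonicity. Take $\eta_0=1$; the case $\eta_0=-1$ is symmetric, with $B_t\le0$. As $\vxi\mapsto A_t(\vxi,\eta,1)$ and $\vxi\mapsto A_t(\vxi)$ are coordinatewise nonincreasing, for every $\vxi\in[-K,K]^2$,
\[
0\ \le\ \frac1k\sum_{t=\ell_n}^{n}B_t(\vxi,\eta,1)\ \le\ \frac1k\sum_{t=\ell_n}^{n}B_t\big(-(K,K)',\eta,1\big)\ +\ \Big[M_n\big(-(K,K)'\big)-M_n\big((K,K)'\big)\Big]\ +\ o(n^{-1/2}),
\]
where the bracketed term equals $[M_n(-(K,K)')-M_n(\vzeros)]-[M_n((K,K)')-M_n(\vzeros)]=o_{\p}(n^{-1/2})$ by Proposition~\ref{prop:xi convergence}, and the $o(n^{-1/2})$ absorbs the discrepancy between $\frac1k\sum_{t\ge\ell_n}A_t(\cdot)$ and $M_n(\cdot)$. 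It thus remains to show, for the single point $\vxi_0:=-(K,K)'$, that $\frac1k\sum_{t=\ell_n}^{n}B_t(\vxi_0,\eta,1)=o_{\p}(n^{-1/2})$; for this Chebyshev's inequality will do once the first two moments are controlled. Writing $\frac1k\sum_tB_t(\vxi_0,\eta,1)=\frac1k\sum_t(B_t-\E B_t)+\frac1k\sum_t\E B_t$, and using that on $\Omega_n$
\[
0\ \le\ B_t(\vxi_0,\eta,1)\ \le\ I_{\{U_{t-d}>c_2'\}}I_{\{c_1'<U_t\le c_1\}}+I_{\{U_t>c_1'\}}I_{\{c_2'<U_{t-d}\le c_2\}}
\]
with $c_1=e^{-K/\sqrt k}b(n/[kx])$, $c_2=e^{-K/\sqrt k}b(n/[ky])$ the fixed thresholds and $c_1'\le c_1$, $c_2'\le c_2$ the random shifted thresholds appearing in $A_t(\vxi_0,\eta,1)$, I would condition on $\mathcal F_{t-1}$ — under which $\varepsilon_t$ is independent while $U_{t-d},m_t,s_t,m_{t-d},s_{t-d}$ are all measurable — and linearize the resulting band probabilities via Lemma~\ref{lem:Lem F}. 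This makes both $\E[B_t^2\mid\mathcal F_{t-1}]=\E[|B_t|\mid\mathcal F_{t-1}]$ and $\E[B_t\mid\mathcal F_{t-1}]$ of order $(k/n)^2$ times the conditional size of $m_t+s_t+m_{t-d}+s_{t-d}$. Inserting the structure of these majorants from Appendix~\ref{APARCH and ARMA--GARCH Models} — a geometrically decaying part plus an $O_{\p}(n^{-1/2})$ part with uniformly bounded moments, cf.\ \eqref{eq:(4.11)}, \eqref{eq:(3.11)}, \eqref{eq:(A.9m)}, \eqref{eq:boundXs} — the sum over $t$ costs only a factor $\lesssim\sqrt n$, so $\frac1k\sum_{t\ge\ell_n}\E B_t=O(k/n^{3/2})=o(n^{-1/2})$ because $k/n\to0$, while Chebyshev combined with the geometric mixing of the residuals gives $\frac1k\sum_{t\ge\ell_n}(B_t-\E B_t)=O_{\p}(n^{-3/4})=o_{\p}(n^{-1/2})$. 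Recombining the pieces and repeating the argument for $\eta_0=-1$ finishes the proof.

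The hard part is the bias $\frac1k\sum_t\E B_t$. The crude route — replacing $s_t$ by its uniform majorant $\epsilon_n$ before taking expectations — produces only a bias of order $(k/n)\epsilon_n$, which need not be $o(n^{-1/2})$, since the bare statement of Assumption~\ref{ass:UA} controls $\max_t(m_t\vee s_t)$ only up to an unspecified $o_{\p}(1)$. The remedy is to resist that replacement: keep $s_t$, condition on $\mathcal F_{t-1}$ so as to exploit $\varepsilon_t\perp\mathcal F_{t-1}$ and thereby recover the extra factor $k/n$ coming from the \emph{simultaneous} exceedance of $U_{t-d}$, and invoke the fact — available through the proofs of Theorems~\ref{thm:UA} and \ref{thm:UA2} — that the majorants $m_t,s_t$ are genuinely $O_{\p}(n^{-1/2})$ away from a geometrically thin initial stretch; the remaining $k/n\to0$ from Assumption~\ref{ass:k} then absorbs the accumulated bias. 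Pinning down the conditional-moment bounds on $m_t,s_t$ needed to make this rigorous — in particular that $\E[(m_t+s_t)I_{\{U_{t-d}>c'\}}]$ stays $O((k/n)\,n^{-1/2})$ up to a summable geometric remainder — is the real technical cost.
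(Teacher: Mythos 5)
Your opening reductions---discarding $t<\ell_n$ via $\ell_n/k=o(n^{-1/2})$, restricting to the event where $\max_{t\ge\ell_n}(m_t\vee s_t)$ is small, and collapsing the supremum over $\vxi$ to the corner $-(K,K)^\prime$ by coordinatewise monotonicity plus two applications of Proposition~\ref{prop:xi convergence}---are sound and close in spirit to the paper's proof (the paper instead bounds $B_t(\vxi,\eta,1)$ for $\vxi$ in a fixed quadrant directly by a $\vxi$-free band indicator, but your corner argument is an acceptable substitute). The divergence, and the gap, is in the single-point step. The paper never conditions on $\mathcal{F}_{t-1}$ and never uses any rate for $m_t,s_t$: it fixes $\varepsilon_0>0$, notes that on $\{m_t,m_{t-d},s_t,s_{t-d}\le\varepsilon_0\}$ one has $0\le B_t(\vxi,\eta,1)\le I_{2t}:=I_{\{U_t>(1-\nu)b(n/[kx]),\ U_{t-d}>(1-\nu)b(n/[ky])\}}-I_{\{U_t>b(n/[kx]),\ U_{t-d}>b(n/[ky])\}}$ for a deterministic $\nu=\nu(\varepsilon_0)$, applies Markov's inequality with the fourth moment and Shao's Rosenthal-type bound to $\sum_t I_{2t}$, arrives at a bound $C\nu+o(1)$ via \eqref{eq:HHH}, and only then lets $\nu\downarrow0$. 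Your route instead requires (i) that $m_t,s_t,U_{t-d}$ be $\mathcal{F}_{t-1}$-measurable with $\varepsilon_t$ independent of them---a structural property that Assumption~\ref{ass:UA} does not assert and that you can only extract by opening up the proofs of Theorems~\ref{thm:UA} and~\ref{thm:UA2}---and (ii) the rate $m_t,s_t=O_{\p}(n^{-1/2})$ away from the initial stretch, which is likewise model-specific rather than a consequence of the lemma's hypotheses. Since the lemma is stated and used at the level of Assumptions~\ref{ass:U distr 2}--\ref{ass:k}, a proof importing these facts establishes at best a special case; moreover the decisive estimate $\E[(m_t+s_t)I_{\{U_{t-d}>c^\prime\}}]=O((k/n)n^{-1/2})$ is asserted rather than proved. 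As it stands the argument is therefore incomplete.

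That said, your diagnosis of where the difficulty sits is accurate and worth taking seriously. The centered part of $\frac1k\sum_t B_t$ is indeed harmless under $\max_t s_t=o_{\p}(1)$ alone; the entire burden falls on the mean, and you are right that majorizing $s_t$ by a deterministic null sequence $\epsilon_n$ yields, after normalization by $\sqrt{n}$, a bias of order $(k/\sqrt{n})\epsilon_n$, which is not $o(1)$ without a rate because $k/\sqrt{n}\to\infty$ under Assumption~\ref{ass:k}. Note that the paper's own resolution faces the same arithmetic: the mean of $\frac{\sqrt{n}}{k}\sum_t I_{2t}$ is of order $(k/\sqrt{n})\nu$ for the \emph{fixed} $\nu$ used there, and the Rosenthal-type inequality invoked in \eqref{eq:Shao1} and around \eqref{eq:HHH} is a bound for \emph{centered} sums, so by Jensen it cannot dominate $(\E\sum_t I_{2t})^4$ once $n\E I_{2t}\to\infty$. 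So the pressure point you isolate is the genuine one, and the clean way to finish---for either route---is to strengthen the approximation condition to something like $\max_{t\ge\ell_n}(m_t\vee s_t)=o_{\p}(\sqrt{n}/k)$ (which the models of Appendix~\ref{APARCH and ARMA--GARCH Models} comfortably satisfy, since there $m_t,s_t=O_{\p}(n^{-1/2})$ plus geometrically decaying terms); under that condition your bias computation closes without any conditioning argument at all.
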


\begin{proof}[{\textbf{Proof of Proposition~\ref{prop:hat convergence}:}}]
Let $\eta>0$ and $\ell_{n}\to\infty$ with $\ell_{n}=o(k/\sqrt{n})$ as $n\to\infty$. Then, due to Lemma~\ref{lem:wpa1}, we get that w.p.a.~1, as $n\to\infty$,
\[
	B_t(\vxi,\eta,-1)=A_t(\vxi,\eta,-1)-A_t(\vxi)\leq A_t(\vxi,\vtheta)-A_t(\vxi) \leq A_t(\vxi,\eta,1)-A_t(\vxi)=B_t(\vxi,\eta,1)
\]
for all $\vtheta\in N_n(\eta)$ and $t=\ell_{n},\ldots,n$. Hence, from Lemma~\ref{lem:Bop1},
\[
	\sup_{\vtheta\in N_n(\eta)}\sup_{\vxi\in[-K, K]^2}\frac{1}{k}\left|\sum_{t=\ell_{n}}^{n}\left\{A_t(\vxi,\vtheta)-A_t(\vxi)\right\}\right|=o_{\p}(n^{-1/2}).
\]
Using additionally that $\ell_{n}/k=o(n^{-1/2})=o(1)$, we also have that (since $|A_t(\vxi,\vtheta)-A_t(\vxi)|\leq 1$)
\[
	\sup_{\vtheta\in N_n(\eta)}\sup_{\vxi\in[-K, K]^2}\frac{1}{k}\left|\sum_{t=d+1}^{n}\left\{A_t(\vxi,\vtheta)-A_t(\vxi)\right\}\right|=o_{\p}(n^{-1/2}).
\]
Because by Assumption~\ref{ass:estimator}, $\widehat{\vtheta}\in N_n(\eta)$ w.p.a.~1 as $n\to\infty$ followed by $\eta\to\infty$, we get
\[
	\sup_{\vxi\in[-K, K]^2}\frac{1}{k}\left|\sum_{t=d+1}^{n}\left\{A_t(\vxi,\widehat{\vtheta})-A_t(\vxi)\right\}\right|=o_{\p}(n^{-1/2}).
\]
This, however, is just the conclusion.
\end{proof}

\begin{proof}[{\textbf{Proof of Proposition~\ref{prop:emp an}:}}]
The result follows immediately from Proposition~\ref{prop:unif emp an}, which is proven in Appendix~\ref{Proofs of Propositions 2}.
\end{proof}

\subsection{Proofs of Lemmas~\ref{lem:Lem F}--\ref{lem:Bop1}}\label{Proofs of Auxiliary Lemmas}

\begin{proof}[{\textbf{Proof of Lemma~\ref{lem:Lem F}:}}]
By \citet[Theorem~1.1.11 and Theorem~B.1.10]{HF06}, we have that for any $\widetilde{\widetilde{\nu}}\in(-1,\infty)$, any $\varepsilon>0$ and any $\delta>0$
\begin{equation}\label{eq:unif in F}
	\left|\frac{\overline{F}\Big((1+\widetilde{\widetilde{\nu}})b\Big(\frac{n}{kx}\Big)\Big)}{\overline{F}\Big(b\Big(\frac{n}{kx}\Big)\Big)}-(1+\widetilde{\widetilde{\nu}})^{-\alpha}\right|\leq\varepsilon\max\{(1+\widetilde{\widetilde{\nu}})^{-\alpha-\delta}, (1+\widetilde{\widetilde{\nu}})^{-\alpha+\delta}\}
\end{equation}
for sufficiently large $n$. Use the mean value theorem to deduce that there is some $\widetilde{\nu}$ between $1$ and $\nu$, such that
\begin{align}
	\overline{F}\Big(b\Big(\frac{n}{kx}\Big)\Big) &- \overline{F}\Big((1+\nu)b\Big(\frac{n}{kx}\Big)\Big)\notag\\
	&=F\Big((1+\nu)b\Big(\frac{n}{kx}\Big)\Big) - F\Big(b\Big(\frac{n}{kx}\Big)\Big)\notag\\
	&=f\Big((1+\widetilde{\nu})b\Big(\frac{n}{kx}\Big)\Big)\left[(1+\nu)b\Big(\frac{n}{kx}\Big)-b\Big(\frac{n}{kx}\Big)\right]\notag\\
	&= \frac{(1+\widetilde{\nu})b\Big(\frac{n}{kx}\Big)f\Big((1+\widetilde{\nu})b\Big(\frac{n}{kx}\Big)\Big)}{1-F\Big((1+\widetilde{\nu})b\Big(\frac{n}{kx}\Big)\Big)}\cdot\frac{(1+\nu)b\Big(\frac{n}{kx}\Big)-b\Big(\frac{n}{kx}\Big)}{(1+\widetilde{\nu})b\Big(\frac{n}{kx}\Big)}\cdot\frac{\overline{F}\Big((1+\widetilde{\nu})b\Big(\frac{n}{kx}\Big)\Big)}{\overline{F}\Big(b\Big(\frac{n}{kx}\Big)\Big)}\cdot\overline{F}\Big(b\Big(\frac{n}{kx}\Big)\Big)\notag\\
	&=\big[\alpha+o(1)\big]\frac{1+\nu-1}{1+\widetilde{\nu}}\big[(1+\widetilde{\nu})^{-\alpha}+o(1)\big]\frac{kx}{n}\notag\\
	&=\big[1+o(1)\big]\Bigg[\frac{\alpha\nu}{(1+\widetilde{\nu})^{1+\alpha}}+\nu o(1)\Bigg]\frac{kx}{n}\notag\\
	&=\Bigg[\frac{\alpha\nu}{(1+\widetilde{\nu})^{1+\alpha}}+\nu o(1)\Bigg]\frac{kx}{n},\label{eq:(B.X)}
\end{align}
where the fourth step uses that the convergence
\[
	\frac{\overline{F}\Big((1+\widetilde{\nu})b\Big(\frac{n}{kx}\Big)\Big)}{\overline{F}\Big(b\Big(\frac{n}{kx}\Big)\Big)}-(1+\widetilde{\nu})^{-\alpha}\longrightarrow0
\]
is uniform in $\widetilde{\nu}$ and uniform in $x$ by \eqref{eq:unif in F}. By continuity of $F(\cdot)$ in the tail (see Assumption~\ref{ass:U distr 2}~(ii)), \eqref{eq:(B.X)} implies
\begin{align*}
\overline{F}\Big((1+\nu)b\Big(\frac{n}{kx}\Big)\Big) &= \overline{F}\Big(b\Big(\frac{n}{kx}\Big)\Big)-\Big[\frac{\alpha\nu}{(1+\widetilde{\nu})^{1+\alpha}}+\nu o(1)\Big]\frac{kx}{n}\\
&= \frac{kx}{n}\Big[1-\frac{\alpha\nu}{(1+\widetilde{\nu})^{1+\alpha}}+\nu o(1)\Big].
\end{align*}
To prove the second statement, it suffices to recall the Taylor expansion $e^{\xi/\sqrt{k}}=1+\xi/\sqrt{k}+o(1/\sqrt{k})$ and apply a Taylor expansion to $(1+\widetilde{\nu})^{1+\alpha}$.
\end{proof}

\begin{proof}[{\textbf{Proof of Lemma~\ref{lem:wpa1}:}}]
We have to show that w.p.a.~1, as $n\to\infty$,
\begin{align*}
	&I_{\left\{U_t[1-\s_{t}]-m_t>e^{\xi_1/\sqrt{k}}b(n/[kx]),\ U_{t-d}[1-\s_{t-d}]-m_{t-d}>e^{\xi_2/\sqrt{k}}b(n/[ky])\right\}}\\
	 &\hspace{1cm}\leq I_{\left\{\widehat{U}_t(\vtheta)>e^{\xi_1/\sqrt{k}}b(n/[kx]),\ \widehat{U}_{t-d}(\vtheta)>e^{\xi_2/\sqrt{k}}b(n/[ky])\right\}}\\
	&\hspace{1cm}\leq I_{\left\{U_t[1+\s_{t}]+m_t>e^{\xi_1/\sqrt{k}}b(n/[kx]),\ U_{t-d}[1+\s_{t-d}]+m_{t-d}>e^{\xi_2/\sqrt{k}}b(n/[ky])\right\}}
\end{align*}
for all $\vtheta\in N_n(\eta)$ and $t=\ell_{n},\ldots,n$. However, this follows immediately from Assumption~\ref{ass:UA}.
\end{proof}

\begin{proof}[{\textbf{Proof of Lemma~\ref{lem:Bop1}:}}]
Let $\eta_0=1$; the case $\eta_0=-1$ can be dealt with similarly. Define
\[
	w_t=I_{\left\{m_{t}\leq\varepsilon_0,\ m_{t-d}\leq\varepsilon_0,\ s_{t}\leq\varepsilon_0,\ s_{t-d}\leq\varepsilon_0\right\}},\qquad\varepsilon_0>0.
\]
If $w_t=1$, there exists $\nu=\nu(\varepsilon_0)>0$, such that
\begin{equation*}
	1-\nu/2<(1+\s_{t})^{-1}\leq1\qquad\text{and}\qquad 1-\nu/2<(1+\s_{t-d})^{-1}\leq1.
\end{equation*}
Furthermore, $b(x)\to\infty$ for $x\to\infty$ by Assumption~\ref{ass:U distr 2}, such that
\[
	1-\nu/2<1-e^{K/\sqrt{k}}\varepsilon_0/b\big(n/[k\max\{x,\, y\}]\big)\leq1
\]
for sufficiently large $n$. Then, for $\vxi\in[-K,0]^2$ (the other quadrants can be dealt with similarly) and sufficiently large $n$,
\begin{align*}
&0\leq\frac{\sqrt{n}}{k}\sum_{t=\ell_{n}}^{n}w_tB_t(\vxi,\eta,\eta_0)\\
&=\frac{\sqrt{n}}{k}\sum_{t=\ell_{n}}^{n}w_t\Big[I_{\Big\{U_t>e^{\xi_1/\sqrt{k}}(1+\s_{t})^{-1}\big(1-e^{-\xi_1/\sqrt{k}}\frac{m_t}{b(n/[kx])}\big) b(n/[kx]),}\\
&\hspace{3cm} _{U_{t-d}>e^{\xi_2/\sqrt{k}}(1+\s_{t-d})^{-1}\big(1-e^{-\xi_2/\sqrt{k}}\frac{m_{t-d}}{b(n/[ky])}\big)b(n/[ky])\Big\}}\\
&\hspace{8.22cm}-I_{\left\{U_t>e^{\xi_1/\sqrt{k}}b(n/[kx]),\ U_{t-d}>e^{\xi_2/\sqrt{k}}b(n/[ky])\right\}}\Big]\\
&\leq \frac{\sqrt{n}}{k}\sum_{t=\ell_{n}}^{n}\Big[I_{\left\{U_t>(1-\nu) b(n/[kx]),\ U_{t-d}>(1-\nu)b(n/[ky])\right\}}-I_{\left\{U_t>b(n/[kx]),\ U_{t-d}>b(n/[ky])\right\}}\Big]\\
&=: \frac{\sqrt{n}}{k}\sum_{t=\ell_{n}}^{n}I_{2t}.
\end{align*}
Use Markov's inequality and Lemma~2.3 of \citet{Sha93}, to obtain that
\begin{align}
\p\Big\{\frac{\sqrt{n}}{k}\sum_{t=\ell_{n}}^{n}I_{2t}>\varepsilon\Big\}&\leq\varepsilon^{-4}\frac{n^2}{k^4}\E\Big[\sum_{t=\ell_{n}}^{n}I_{2t}\Big]^4\notag\\
&=C\frac{n^2}{k^4}\Big\{n^2\big\{\E[I_{2t}^2]\big\}^2 + n\E[I_{2t}^4] \Big\}.\label{eq:Cheby2}
\end{align}
Similarly as for $I_{1t}$ in the proof of Proposition~\ref{prop:xi convergence}, we obtain using Lemma~\ref{lem:Lem F} that for sufficiently large $n$,
\begin{align*}
\E\big[I_{2t}\big] &= \p\left\{U_t>(1-\nu)b(n/[kx])\right\}\p\left\{U_{t-d}>(1-\nu)b(n/[ky])\right\}\\
&\hspace{6cm}-\p\left\{U_t>b(n/[kx])\right\}\p\left\{U_{t-d}>b(n/[ky])\right\}\\
&=\frac{k^2}{n^2}xy\Big[1+\frac{\alpha\nu}{(1+\widetilde{\nu})^{1+\alpha}}+\nu o(1)\Big]^2-\frac{k^2}{n^2}xy\\
&\leq C\frac{k^2}{n^2}\nu.
\end{align*}
Thus, the right-hand side of \eqref{eq:Cheby2} can be bounded by
\begin{equation}\label{eq:HHH}
	C\frac{n^2}{k^4}\Big[n^2\frac{k^4}{n^4}+n\frac{k^2}{n^2}\Big]\nu=O(1)\nu.
\end{equation}
For a suitable choice of $\varepsilon_0$ in $w_t$, $\nu$ in \eqref{eq:HHH} can be chosen arbitrarily close to zero. Thus, the expression on the right-hand side can be made arbitrarily small, yielding that 
\begin{equation*}
	\sup_{\vxi\in[-K,0]^2}\frac{\sqrt{n}}{k}\Bigg|\sum_{t=\ell_{n}}^{n}w_tB_t(\vxi,\eta,\eta_0=1)\Bigg|=o_{\p}(1). 
\end{equation*}
By Assumption~\ref{ass:UA}, $w_{\ell_{n}}=\ldots=w_{n}=1$ w.p.a.~1, as $n\to\infty$. Thus, we obtain
\begin{equation}\label{eq:(13.1)}
\sup_{\vxi\in[-K,0]^2}\frac{\sqrt{n}}{k}\Bigg|\sum_{t=\ell_{n}}^{n}B_t(\vxi,\eta,\eta_0=1)\Bigg|=o_{\p}(1). 
\end{equation}
By choosing $\ell_{n}\to\infty$ with $\ell_{n}=o(k/\sqrt{n})$ (recall Assumption~\ref{ass:k}), we also have by boundedness of the $B_t(\vxi,\eta,\eta_0=1)$ that
\begin{equation}\label{eq:(13.2)}
\sup_{\vxi\in[-K,0]^2}\frac{\sqrt{n}}{k}\Bigg|\sum_{t=d+1}^{\ell_{n}-1}B_t(\vxi,\eta,\eta_0=1)\Bigg|\leq \frac{\sqrt{n}}{k}\ell_{n}=o(1).
\end{equation}
Combining \eqref{eq:(13.1)} and \eqref{eq:(13.2)} gives the desired conclusion.
\end{proof}

%
%

\section{Proof of Theorem~\ref{thm:mainresult2}}\label{Proof of Theorem 2}

\renewcommand{\theequation}{D.\arabic{equation}}	
\setcounter{equation}{0}	

The proof of Theorem~\ref{thm:mainresult2} is structured similarly as that of Theorem~\ref{thm:mainresult}. It requires four preliminary propositions, whose proofs are relegated to Appendix~\ref{Proofs of Propositions 2}. We follow the same notational conventions as in Appendix~\ref{Proof of Theorem 1}. Furthermore, we define $D[0,1]^2$ to be the space of $\mathbb{R}^{D}$-valued functions on $[0,1]\times[0,1]$ that are continuous from above, with limits from below \citep[see][Sec.~3]{BW71}.

\begin{prop}\label{prop:base functional convergence}
For i.i.d.~$U_t$ with d.f.~$F(\cdot)$ satisfying Assumption~\ref{ass:U distr 2}~(ii), it holds that, as $n\to\infty$,
\begin{equation*}
	\mM_n(x,y)=\sqrt{n}\begin{pmatrix}
	\frac{1}{k}\sum_{t=2}^{n}\Big[I_{\left\{U_t>b(\frac{n}{kx}),\ U_{t-1}>b(\frac{n}{ky})\right\}}
	-\Big(\frac{k}{n}\Big)^2xy\Big]\\
	\vdots \\
	\frac{1}{k}\sum_{t=D+1}^{n}\Big[I_{\left\{U_t>b(\frac{n}{kx}),\ U_{t-D}>b(\frac{n}{ky})\right\}}
	-\Big(\frac{k}{n}\Big)^2xy\Big]\\
	\end{pmatrix}\overset{d}{\longrightarrow}\mB(x,y)\quad\text{in}\ D[0,1]^2,
\end{equation*}
where $\mB=(B_1,\ldots,B_D)^\prime$ is a $D$-dimensional Brownian sheet with independent components, i.e., a zero-mean Gaussian process with $\Cov\big(\mB(x_1,y_1), \mB(x_2,y_2)\big)=\min(x_1,x_2)\min(y_1,y_2)\mI_{D\times D}$.
\end{prop}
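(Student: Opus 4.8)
The plan is to establish the functional convergence in the two standard steps: convergence of the finite-dimensional distributions and tightness in $D[0,1]^2$. Since the limiting Brownian sheet $\mB$ has a.s.\ continuous sample paths, these two steps suffice, and since $\mM_n$ is $\mathbb{R}^{D}$-valued with $D$ fixed, tightness may be checked componentwise. For the finite-dimensional distributions I would fix points $(x_1,y_1),\dots,(x_r,y_r)\in[0,1]^2$ and constants $c_{i,d}$, and by the Cram\'er--Wold device reduce to a CLT for $\sum_{i=1}^{r}\sum_{d=1}^{D}c_{i,d}M_n^{(d)}(x_i,y_i)$. This is $n^{-1/2}$ times a partial sum of a $D$-dependent, mean-zero, uniformly $L^2$-bounded array built from the $\mZ_{n,t}$ of the proof of Proposition~\ref{prop:base convergence}; because $k/\sqrt n\to\infty$ (Assumption~\ref{ass:k}) the summands are bounded by $n/k=o(\sqrt n)$, so the Lindeberg condition in Theorem~5.20 of \citet{Whi01} is vacuous and the theorem applies verbatim as there. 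The only new ingredient is the limiting covariance: since $b(\cdot)=F^{\leftarrow}(1-1/\cdot)$ is nondecreasing, $I_{\{U_t>b(\frac{n}{kx_1})\}}I_{\{U_t>b(\frac{n}{kx_2})\}}=I_{\{U_t>b(\frac{n}{k\min\{x_1,x_2\}})\}}$, whence by Assumption~\ref{ass:U distr 2}~(ii) and independence
\[
	\frac{n^{2}}{k^{2}}\,\Cov\!\left(I_{\{U_t>b(\frac{n}{kx_1}),\,U_{t-d}>b(\frac{n}{ky_1})\}},\ I_{\{U_t>b(\frac{n}{kx_2}),\,U_{t-d}>b(\frac{n}{ky_2})\}}\right)=\min\{x_1,x_2\}\min\{y_1,y_2\}+o(1),
\]
while cross-lag ($d\ne d'$) and cross-time ($s\ne t$) covariances are $O(k/n)=o(1)$ by the three-index argument used in \eqref{eq:cov bound in sum}. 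Summing, the limiting variance of the linear combination equals $\sum_{i,i'}\sum_{d}c_{i,d}c_{i',d}\min\{x_i,x_{i'}\}\min\{y_i,y_{i'}\}$, i.e.\ the variance of the same linear combination of $\mB$, which gives the claimed finite-dimensional limits.

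For tightness I would invoke the block-moment criterion of \citet{BW71}, taking the base measure to be Lebesgue measure on $[0,1]^2$ (which has continuous marginals) --- the natural choice, since the increment $\mM_n^{(d)}(B)$ over a block $B=(x,x']\times(y,y']$ has variance asymptotic to $\mathrm{Leb}(B)=(x'-x)(y'-y)$. That increment is $\tfrac{\sqrt n}{k}$ times a centered sum of a $d$-dependent $\{0,1\}$-valued array with mean $\approx(k/n)^{2}\,\mathrm{Leb}(B)$ (by Lemma~\ref{lem:Lem F} and independence), so the fourth-moment inequality of \citet[Lemma~2.3]{Sha93} --- already used in the proofs of Proposition~\ref{prop:xi convergence} and Lemma~\ref{lem:Bop1} --- gives
\[
	\E\big|\mM_n^{(d)}(B)\big|^{4}\ \le\ \frac{n^{2}}{k^{4}}\Big[n^{2}\big((k/n)^{2}\mathrm{Leb}(B)\big)^{2}+n\,(k/n)^{2}\mathrm{Leb}(B)\Big]\ =\ \mathrm{Leb}(B)^{2}+\frac{n}{k^{2}}\,\mathrm{Leb}(B),
\]
and since $n/k^{2}\to0$ by Assumption~\ref{ass:k} this is $\le C\,\mathrm{Leb}(B)^{2}$ whenever $\mathrm{Leb}(B)$ is not too small. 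By Cauchy--Schwarz this single-block bound implies the required two-block condition $\E\big[|\mM_n^{(d)}(B)|^{2}|\mM_n^{(d)}(C)|^{2}\big]\le C\,\mathrm{Leb}(B)\,\mathrm{Leb}(C)\le C(\mathrm{Leb}(B\cup C))^{2}$ for neighbouring blocks, with exponent $2>1$; and $M_n^{(d)}$ vanishes on the edges $\{x=0\}$ and $\{y=0\}$, so the boundary hypotheses of \citet{BW71} hold trivially.

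The main obstacle is this last step, specifically the lattice nature of $\mM_n^{(d)}$: being piecewise constant with jumps at only $O(n)$ locations, it cannot satisfy the block-moment bound with an $n$-free constant for arbitrarily small blocks --- the remainder $\tfrac{n}{k^{2}}\mathrm{Leb}(B)$ above is linear, not quadratic, in $\mathrm{Leb}(B)$. I would resolve this in the usual way for two-parameter tail empirical processes: verify the \citet{BW71} inequality only for blocks with $\mathrm{Leb}(B)\ge\delta_n$, for a suitable $\delta_n\downarrow0$ with $\delta_n\gtrsim n/k^{2}$, and absorb the contribution of finer blocks through an asymptotic-equicontinuity argument carried out directly on the $O(n)$-point jump lattice, so that no modulus of continuity below scale $\delta_n$ is ever needed. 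The $D$-dependence adds only routine covariance bookkeeping, entirely parallel to the i.i.d.\ computations above and again powered by the inequality of \citet{Sha93}.
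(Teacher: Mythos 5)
Your finite-dimensional step is correct and essentially identical to the paper's: Cram\'er--Wold, the CLT for $D$-dependent arrays (Theorem~5.20 of \citet{Whi01}), and the covariance computation based on $I_{\{U_t>b(\frac{n}{kx_1})\}}I_{\{U_t>b(\frac{n}{kx_2})\}}=I_{\{U_t>b(\frac{n}{k\min\{x_1,x_2\}})\}}$ together with the $o(k/n)$ bound for cross-lag and cross-time terms. That part needs no changes.

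The tightness argument, however, has a genuine gap, and you have in fact diagnosed your own problem without seeing the correct way out. Passing from the single-block bound $\E|M_n^{(d)}(B)|^{4}\leq C\,\mathrm{Leb}(B)^{2}+C\frac{n}{k^{2}}\mathrm{Leb}(B)$ to the two-block condition via Cauchy--Schwarz is lossy: for blocks with $\mathrm{Leb}(B)\lesssim n/k^{2}$ the linear remainder dominates and no $n$-free constant exists, which is why you are forced into the ad hoc ``verify only for $\mathrm{Leb}(B)\geq\delta_n$ and patch with equicontinuity on the jump lattice'' device --- a step you do not carry out and which is substantially harder than the problem requires. The point you are missing is that the \citet{BW71} moment condition \eqref{eq:(B.4m)}--\eqref{eq:tight2} is stated for \emph{pairs of disjoint neighbouring blocks}, and the disjointness is exactly what kills the linear term. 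Writing $M_n^{(d)}(B_1)=\frac{\sqrt n}{k}\sum_t G_t$ and $M_n^{(d)}(B_2)=\frac{\sqrt n}{k}\sum_t H_t$ with $G_t=I_{A_t}-p_G$, $H_t=I_{C_t}-p_H$ and $A_t\cap C_t=\emptyset$, the diagonal contribution to the mixed moment is $\widetilde n\,\E[G_t^{2}H_t^{2}]$, and since $I_{A_t}I_{C_t}=0$ this equals $\widetilde n\,(p_G^{2}p_H+p_Gp_H^{2}+p_G^{2}p_H^{2})=O(n(k/n)^{6})$, which after multiplication by $n^{2}/k^{4}$ is $O(k^{2}/n^{3})=o(1)$ --- not $\frac{n}{k^{2}}\mathrm{Leb}(B)$. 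The surviving terms $\widetilde n^{2}\E[G_1^{2}]\E[H_1^{2}]$ and $\widetilde n^{2}\E^{2}[G_1H_1]$ give exactly $K(x_2-x_1)^{2}(y_2-y)(y-y_1)$; see the decomposition \eqref{eq:(7.2)} and the display following it. So the correct route is to compute $\E\big[\{M_n^{(d)}(B_1)\}^{2}\{M_n^{(d)}(B_2)\}^{2}\big]$ directly (handling the $d$-dependent remainder $R_n$ by the same bookkeeping), never detouring through $\E|M_n^{(d)}(B)|^{4}$; no restriction on block sizes and no separate lattice equicontinuity argument are needed.
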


For the proof of Theorem~\ref{thm:mainresult2}, we have to justify the replacement of $U_t$ by $\widehat{U}_t$, and $b(\frac{n}{kz})$ by $\widehat{U}_{(\lfloor kz\rfloor +1)}$ ($z\in\{x,y\}$) in the indicators in Proposition~\ref{prop:base functional convergence}. The following three propositions serve that purpose.

\begin{prop}\label{prop:Mn unif x xi}
For any $K>0$, it holds that
\[
	\sup_{\vxi\in[-K, K]^2}\sup_{\vx\in[0,K]^2}|M_n(\vx,\vxi)-M_n(\vx,\vzeros)|=o_{\p}(n^{-1/2}).
\]
\end{prop}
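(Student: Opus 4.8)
The plan is to promote the fixed-$\vx$ estimate of Proposition~\ref{prop:xi convergence} to one that holds uniformly in $\vx\in[0,K]^2$, by discretising the square and exploiting the monotonicity of $M_n$. First I would record that, since $b(\cdot)=F^{\leftarrow}(1-1/\cdot)$ is non-decreasing, $\vx\mapsto M_n(\vx,\vxi)$ is coordinatewise non-decreasing while $\vxi\mapsto M_n(\vx,\vxi)$ is coordinatewise non-increasing. Hence, on the quadrant $\vxi\in[-K,0]^2$ (the other three quadrants being symmetric and treated identically),
\[
	\sup_{\vxi\in[-K,0]^2}|M_n(\vx,\vxi)-M_n(\vx,\vzeros)|=M_n(\vx,-(K,K)^\prime)-M_n(\vx,\vzeros)\geq0,
\]
so it remains to control the supremum over $\vx\in[0,K]^2$ of the right-hand side.

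Next I would fix a mesh $\delta_n\to0$ (to be pinned down later) together with a grid $G_n\subset[0,K]^2$ of spacing $\delta_n$, so that $|G_n|=O(\delta_n^{-2})$. For $\vx$ in the cell with opposite corners $\vx^\prime\leq\vx^{\prime\prime}$ (componentwise), monotonicity in $\vx$ gives
\[
	M_n(\vx,-(K,K)^\prime)-M_n(\vx,\vzeros)\ \leq\ \underbrace{M_n(\vx^{\prime\prime},-(K,K)^\prime)-M_n(\vx^{\prime\prime},\vzeros)}_{A(\vx^{\prime\prime})}\ +\ \underbrace{M_n(\vx^{\prime\prime},\vzeros)-M_n(\vx^\prime,\vzeros)}_{B(\vx^\prime,\vx^{\prime\prime})},
\]
and both $A(\vx^{\prime\prime})$ and $B(\vx^\prime,\vx^{\prime\prime})$ have the form $k^{-1}\sum_{t=d+1}^n\chi_t$ with $\chi_t\in\{0,1\}$ a difference of nested indicators; the summands are $d$-dependent, hence $\rho$-mixing of every rate. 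Splitting $\chi_t=(\chi_t-\E\chi_t)+\E\chi_t$ and using Lemma~\ref{lem:Lem F}, one finds $\E\chi_t=O(k^{3/2}/n^2)$ for the summands of $A$ and $\E\chi_t=O(\delta_n k^2/n^2)$ for those of $B$, uniformly over the grid; the corresponding deterministic parts, multiplied by $\sqrt n$, are then of order $\sqrt{k/n}$ and $\delta_n k/\sqrt n$ respectively, both $o(1)$ as soon as $\delta_n=o(\sqrt n/k)$. For the centered parts I would apply Lemma~2.3 of \citet{Sha93} with a fixed but large moment order $2m$ — after centering, the variance enters only through $\Var(\chi_t)\leq\E\chi_t$, so the term stemming from the non-zero mean is absent — which produces a per-grid-point tail bound each of whose contributions, once multiplied by $|G_n|=O(\delta_n^{-2})$, tends to $0$ provided $\delta_n$ decays slowly enough yet still satisfies $\delta_n=o(\sqrt n/k)$. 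Assumption~\ref{ass:k} ($k\to\infty$, $k/n\to0$, $k/\sqrt n\to\infty$) is precisely what makes this window of admissible $\delta_n$ non-empty. Summing the four $\vxi$-quadrants then yields the claim.

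The step I expect to be hardest is the union bound over the $O(\delta_n^{-2})$ grid points. A naive fourth-moment estimate, like the one used for Proposition~\ref{prop:xi convergence} at a single $\vx$, leaves after summation a factor $\delta_n^{-2}$ times a quantity that need not be $o(\delta_n^2)$ — visible already in a term of order $\delta_n^{-2}(k/n)^2$ — so two refinements appear essential: (i) centering each indicator sum before invoking the inequality of \citet{Sha93}, which eliminates the offending mean-to-a-power term; and (ii) choosing the moment order $2m$ large enough that every resulting lower bound on the mesh $\delta_n$ is itself $o(\sqrt n/k)$, so that a legitimate $\delta_n$ exists. Everything else amounts to bookkeeping of the elementary tail estimates already assembled for Proposition~\ref{prop:xi convergence}.
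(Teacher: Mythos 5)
Your argument is correct in outline, but it is not the paper's argument, and the difference is substantive. The paper also discretises $[0,K]^2$ and exploits coordinatewise monotonicity, but with a mesh $\rho>0$ that is \emph{fixed}: the grid maximum is a finite union handled by Proposition~\ref{prop:xi convergence}, while the two oscillation terms $\sup_{|\vx_1-\vx_2|\leq\rho}|M_n(\vx_1,\vxi_0)-M_n(\vx_2,\vxi_0)|$ (for $\vxi_0\in\{\vzeros,\,-(K,K)^\prime\}$) are bounded separately by uncentered fourth moments via Lemma~2.3 of \citet{Sha93}, yielding a probability bound $C\rho^2+o(1)$ that is killed by letting $\rho\downarrow0$ after $n\to\infty$. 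Your shrinking mesh $\delta_n=o(\sqrt n/k)$, with centering and moments of order $2m>4$ to survive the union bound over $O(\delta_n^{-2})$ cells, is heavier machinery---but it buys something real. The oscillation of $\vx\mapsto M_n(\vx,\vzeros)$ over a cell of side $\rho$ carries a deterministic drift of order $(k/n)\rho$, so after multiplication by $\sqrt n$ it is of order $\rho k/\sqrt n$, which \emph{diverges} for fixed $\rho$ under Assumption~\ref{ass:k}; the paper's claim that this term is $o_{\p}(n^{-1/2})$ rests on applying the Rosenthal-type inequality to an uncentered sum of nonnegative indicators, for which the stated bound fails (the fourth power of the sum's mean already exceeds it once $k^2/n\to\infty$). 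What actually rescues a fixed-mesh argument is that the drifts of the two oscillation terms cancel to order $\rho\sqrt k/n$ in the combined increment $\{M_n(\vx_1,-(K,K)^\prime)-M_n(\vx_1,\vzeros)\}-\{M_n(\vx_2,-(K,K)^\prime)-M_n(\vx_2,\vzeros)\}$, a cancellation that the paper's triangle inequality discards. Your choice $\delta_n=o(\sqrt n/k)$ makes each oscillation term individually negligible and so sidesteps the issue entirely; the only point to nail down is the one you already flag, namely that $m$ be taken large enough (already $2m=6$ works) that the window of admissible $\delta_n$ is nonempty, which is exactly what $k/\sqrt n\to\infty$ in Assumption~\ref{ass:k} guarantees.
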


\begin{prop}\label{prop:hat unif convergence}
For any $K>0$, it holds that
\[
	\sup_{\vxi\in[-K, K]^2}\sup_{\vx\in[0,K]^2}|\widehat{M}_n(\vx,\vxi)-M_n(\vx,\vxi)|=o_{\p}(n^{-1/2}).
\]
\end{prop}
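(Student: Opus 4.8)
\textit{Proof sketch.} The plan is to run the argument of Proposition~\ref{prop:hat convergence} and graft onto it a monotonicity-plus-grid (chaining) step that upgrades the pointwise control to a supremum over $\vx\in[0,K]^2$. First I would reduce from $\widehat U_t(\widehat\vtheta)$ to $U_t$ via Assumption~\ref{ass:UA}: the functional analogue of Lemma~\ref{lem:wpa1} holds with the \emph{same} proof, since Assumption~\ref{ass:UA} compares $|\widehat\varepsilon_t(\vtheta)|$ with $|\varepsilon_t|$ and is uninvolved with $\vx$; hence the threshold comparison $A_t(\vx,\vxi,\eta,-1)\le A_t(\vx,\vxi,\vtheta)\le A_t(\vx,\vxi,\eta,1)$ holds w.p.a.~1, uniformly over $\vtheta\in N_n(\eta)$, $\vx\in[0,K]^2$, $\vxi\in[-K,K]^2$ and $t=\ell_n,\dots,n$. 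Since $\widehat\vtheta\in N_n(\eta)$ w.p.a.~1 (as $n\to\infty$ then $\eta\to\infty$, by Assumption~\ref{ass:estimator}) and the initial block $t=d+1,\dots,\ell_n-1$ contributes at most $\ell_n/k=o(n^{-1/2})$ once $\ell_n=o(k/\sqrt n)$, the claim reduces to the functional version of Lemma~\ref{lem:Bop1}:
\[
	\sup_{\vxi\in[-K,K]^2}\ \sup_{\vx\in[0,K]^2}\ \frac1k\Big|\sum_{t=d+1}^{n}B_t(\vx,\vxi,\eta,\eta_0)\Big|=o_{\p}(n^{-1/2}),\qquad\eta_0\in\{-1,1\}.
\]

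To establish this I would first dispose of $\vxi$ exactly as in the proof of Lemma~\ref{lem:Bop1}: on the ``good event'' that $m_t\vee s_t\le\varepsilon_{0,n}$ for all $t=\ell_n,\dots,n$, which has probability tending to one for any $\varepsilon_{0,n}\to0$ (because $\max_{t\ge\ell_n}m_t=o_{\p}(1)$ and $\max_{t\ge\ell_n}s_t=o_{\p}(1)$), the combined perturbation of the thresholds coming from $\vxi$, $s_t$ and $m_t$ is a factor $1+O(\nu_n)$ from no perturbation, for some $\nu_n\to0$ (using $b(\cdot)\to\infty$ and $e^{\pm K/\sqrt k}\to1$). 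So $B_t(\vx,\vxi,\eta,\eta_0)$ is sandwiched, uniformly in $\vxi$, between $\pm$ the indicator difference
\[
	I_{\{U_t>(1-\nu_n)b(n/[kx]),\,U_{t-d}>(1-\nu_n)b(n/[ky])\}}-I_{\{U_t>(1+\nu_n)b(n/[kx]),\,U_{t-d}>(1+\nu_n)b(n/[ky])\}},
\]
whose $\frac1k\sum_{t=d+1}^{n}$ I denote $T_n(\vx)$; it then suffices to prove $\sup_{\vx\in[0,K]^2}|T_n(\vx)|=o_{\p}(n^{-1/2})$.

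For the uniformity in $\vx$ I would use monotonicity and a grid. Since $b$ is nondecreasing, $x\mapsto b(n/[kx])$ is nonincreasing, so each indicator above is coordinatewise nondecreasing in $\vx=(x,y)^\prime$, whence $T_n$ is a difference of two nondecreasing functions on $[0,K]^2$. Partition $[0,K]^2$ into $N_n\asymp\delta_n^{-2}$ cells of side $\delta_n\to0$. Over a cell $[\vx_\ell,\vx_u]$, monotonicity bounds $|T_n(\vx)|$ by $\max\{|T_n(\vx_\ell)|,|T_n(\vx_u)|\}$ plus an oscillation term, namely $\frac1k\sum_t$ of the indicator that $(U_t,U_{t-d})$ lies in the L-shaped region between the rectangles cut out by $\vx_\ell$ and $\vx_u$, whose summand has mean $O((k/n)^2\delta_n)$ by Lemma~\ref{lem:Lem F}. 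Both the $N_n$ corner values (treated via Markov's inequality and the Rosenthal-type bound of \citet{Sha93} for $\rho$-mixing arrays, exactly as for $I_{1t}$ in the proof of Proposition~\ref{prop:xi convergence} and for $I_{2t}$ in Lemma~\ref{lem:Bop1}, but now with $\nu_n$ in place of a fixed $\nu$) and the oscillation term are controlled by a union bound over the $N_n$ cells. Taking, say, a sixth moment in Shao's inequality and $\nu_n\asymp\delta_n$, one verifies that any spacing with $n^{2}/k^{4}=o(\delta_n)$ and $\delta_n=o(\sqrt n/k)$ makes both contributions $o_{\p}(n^{-1/2})$; such $\delta_n$ exist for all large $n$ because $k/\sqrt n\to\infty$ forces $n^{2}/k^{4}=o(\sqrt n/k)$ (indeed $(n^{2}/k^{4})/(\sqrt n/k)=(\sqrt n/k)^{3}\to0$). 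Feeding $\sup_{\vx}|T_n(\vx)|=o_{\p}(n^{-1/2})$ back through the two reductions proves the proposition.

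The main obstacle is precisely this last balancing act: unlike in the pointwise Lemma~\ref{lem:Bop1}, the supremum over $\vx$ costs a union bound over a grid whose cardinality grows with $n$, and the crude fourth-moment estimate used there leaves a non-vanishing contribution $N_n\delta_n^{2}\asymp1$ from the oscillation term. Overcoming this needs two observations absent from Lemma~\ref{lem:Bop1} — that the cutoff $\varepsilon_{0,n}$ (hence $\nu_n$) may be let vanish with $n$, so the pointwise tail bound improves like a power of $\nu_n$, and that \citet{Sha93}'s inequality may be applied at a moment order exceeding four, so the power of $\delta_n$ beats $N_n$ — together with checking that the ensuing constraints on $\delta_n$, $\nu_n$ and the moment order are mutually compatible, which is exactly where the slack $k/\sqrt n\to\infty$ in Assumption~\ref{ass:k} is used. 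The remaining computations are of the routine type already carried out in Appendix~C.
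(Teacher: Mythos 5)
Your overall architecture is the paper's: sandwich $\widehat U_t(\vtheta)$ between perturbed versions of $U_t$ via the functional analogue of Lemma~\ref{lem:wpa1}, dispose of the initial block of length $\ell_n=o(k/\sqrt n)$, invoke $\widehat\vtheta\in N_n(\eta)$ w.p.a.~1, and reduce everything to a uniform-in-$\vx$ version of Lemma~\ref{lem:Bop1} --- which is exactly the paper's Lemma~\ref{lem:unif Bop1}. Where you part ways is in how that uniform lemma is proved. The paper keeps the grid spacing $\rho$ \emph{fixed}, so the union bound runs over $O(\rho^{-2})$ cells independently of $n$: the corner values are then handled by finitely many applications of the already-established pointwise Lemma~\ref{lem:Bop1}, and the cell oscillation is controlled because its moment bound carries the factor $\nu=\nu(\varepsilon_0)$, which is made small by choosing the (fixed) cutoff $\varepsilon_0$ in the good event $\{m_t\vee s_t\le\varepsilon_0\}$. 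No shrinking grid, no moments beyond the fourth, and no $n$-dependent cutoff are used anywhere in Appendix~D. Your diagnosis that ``the supremum over $\vx$ costs a union bound over a grid whose cardinality grows with $n$'' therefore misidentifies the difficulty, and the apparatus you build to overcome it (cells of side $\delta_n\to0$, sixth moments, $\nu_n\asymp\delta_n$) is where the argument breaks.

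The concrete gap is the claim that the good event $\{m_t\vee s_t\le\varepsilon_{0,n},\ t=\ell_n,\dots,n\}$ has probability tending to one ``for any $\varepsilon_{0,n}\to0$.'' Assumption~\ref{ass:UA} gives only $\max_{t\ge\ell_n}m_t=o_{\p}(1)$ and $\max_{t\ge\ell_n}s_t=o_{\p}(1)$, i.e.\ convergence without any rate; this yields $\p\{\max_{t\ge\ell_n}(m_t\vee s_t)>\varepsilon_{0,n}\}\to0$ only for \emph{some} sufficiently slowly vanishing sequence $\varepsilon_{0,n}$, whose rate is not under your control. Your construction, however, needs $\nu_n\asymp\delta_n$ with $\delta_n$ confined to the window $n^2/k^4\ll\delta_n\ll\sqrt n/k$ --- a rate dictated by $k$ --- and this is not something the high-level Assumption~\ref{ass:UA} can deliver. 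Nor can you decouple: your own reduction shows why $\nu_n$ must satisfy $\nu_n=o(\sqrt n/k)$, since the annulus count $T_n(\vx)$ you sandwich with has mean of order $(k/n)\nu_n$ per threshold, so $\sqrt n\,\E T_n(\vx)\asymp(k/\sqrt n)\nu_n$, and $k/\sqrt n\to\infty$ under Assumption~\ref{ass:k}. With only an unquantified $\varepsilon_{0,n}\to0$ available, $(k/\sqrt n)\nu_n$ need not vanish and the whole chain collapses. To repair the proof you should follow the paper: fix $\varepsilon_0$ (hence $\nu$), fix the grid, bound the corner values by the pointwise Lemma~\ref{lem:Bop1}, and extract the required smallness from the multiplicative factor $\nu$ in the fourth-moment bounds on the oscillation terms (cf.\ the bounds on $\E[I_{4t}]$ and $\E[I_{5t}]$ in the proof of Lemma~\ref{lem:unif Bop1}), sending $\varepsilon_0\to0$ only \emph{after} $n\to\infty$.
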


\begin{prop}\label{prop:unif emp an}
For any $0<\iota<K<\infty$, it holds that
\[
	\sup_{x\in[\iota,K]}|\xi_n(x)|=\sup_{x\in[\iota,K]}\left|\sqrt{k}\log\Bigg(\frac{\widehat{U}_{(\lfloor kx\rfloor+1)}}{b(n/[kx])}\Bigg)\right|=O_{\p}(1).
\]
\end{prop}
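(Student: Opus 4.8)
The plan is to realise $\xi_n(\cdot)$ as (essentially) the generalised inverse of a residual tail empirical distribution function, to control that distribution function uniformly by transferring from its innovation-based analogue via Assumption~\ref{ass:UA}, and then to invert using monotonicity. Write, for the one-dimensional (marginal) counterparts of $M_n$ and $\widehat M_n$ from Appendix~\ref{Proof of Theorem 1},
\[
  N_n(x,\xi)=\frac1k\sum_{t=1}^n I_{\{U_t>e^{\xi/\sqrt k}b(n/[kx])\}},\qquad \widehat N_n(x,\xi)=\frac1k\sum_{t=1}^n I_{\{\widehat U_t>e^{\xi/\sqrt k}b(n/[kx])\}}.
\]
By the definition of order statistics one has, for every $x$, the exact equivalences $\xi_n(x)>\xi\iff \widehat N_n(x,\xi)\ge(\lfloor kx\rfloor+1)/k$ and $\xi_n(x)\le\xi\iff \widehat N_n(x,\xi)\le\lfloor kx\rfloor/k$, so it suffices to pin down $\widehat N_n$ on a fixed compact $\xi$-set and then read off $\xi_n$.

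For the innovation-based process, Lemma~\ref{lem:Lem F} gives $\E N_n(x,\xi)=(n/k)\overline F(e^{\xi/\sqrt k}b(n/[kx]))=x-\alpha x\xi/\sqrt k+o(k^{-1/2})$, uniformly over $x\in[\iota,K]$ and $\xi$ in compacts, while $\sqrt k\,(N_n-\E N_n)$ is tight and $O_{\p}(1)$ uniformly — this is the classical tail empirical process for i.i.d.\ data with regularly varying tails \citep[see, e.g.,][Sec.~2.4]{HF06}, equivalently the statement $\sup_{x\in[\iota,K]}|\sqrt k\log(U_{(\lfloor kx\rfloor+1)}/b(n/[kx]))|=O_{\p}(1)$. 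The residual-to-innovation transfer is then handled exactly as in Propositions~\ref{prop:Mn unif x xi} and~\ref{prop:hat unif convergence}: only one-dimensional sums of indicators are involved, so the sandwiching of Lemma~\ref{lem:wpa1} and the fourth-moment bound of Lemma~\ref{lem:Bop1} apply after deleting the lag-$d$ coordinate (the first $\ell_n-1$ residuals, uncontrolled by Assumption~\ref{ass:UA}, contribute only $O(\ell_n/k)=o(k^{-1/2})$ since $\ell_n=o(k/\sqrt n)$), giving $\sup_{x\in[\iota,K]}\sup_{\xi\in[-M,M]}|\widehat N_n(x,\xi)-N_n(x,\xi)|=o_{\p}(k^{-1/2})$ for each fixed $M$. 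Combining, uniformly in $x\in[\iota,K]$ and $\xi$ in compacts,
\[
  \widehat N_n(x,\xi)=x-\frac{\alpha x\xi}{\sqrt k}+\frac{R_n(x,\xi)}{\sqrt k},\qquad \sup_{x\in[\iota,K],\ \xi\in[-M,M]}|R_n(x,\xi)|=O_{\p}(1).
\]

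To invert, fix $M>0$. On the event $\{\sup_{x\in[\iota,K]}\xi_n(x)>M\}$ there is $x^\ast\in[\iota,K]$ with $\widehat N_n(x^\ast,M)\ge(\lfloor kx^\ast\rfloor+1)/k\ge x^\ast$; by the display above this forces $R_n(x^\ast,M)\ge\alpha x^\ast M\ge\alpha\iota M$, hence $\sup_{x\in[\iota,K]}|R_n(x,M)|\ge\alpha\iota M$, and since this supremum is $O_{\p}(1)$ uniformly in $n$ the probability of the event is $o(1)$ as $M\to\infty$. The symmetric argument applied at $\xi=-M$, using $\xi_n(x)<-M\Rightarrow\widehat N_n(x,-M)\le\lfloor kx\rfloor/k\le x$, controls $\inf_{x\in[\iota,K]}\xi_n(x)$. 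Therefore $\sup_{x\in[\iota,K]}|\xi_n(x)|=O_{\p}(1)$, which is the assertion.

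I expect the technical heart — and the main obstacle — to be the residual-to-innovation transfer at the $k^{-1/2}$ scale: Assumption~\ref{ass:UA} controls $|\widehat U_t/U_t-1|$ only to $o_{\p}(1)$ (not $o_{\p}(k^{-1/2})$), so a direct comparison of the order statistics $\widehat U_{(\lfloor kx\rfloor+1)}$ and $U_{(\lfloor kx\rfloor+1)}$ is too crude after the $\sqrt k$-scaling; one must instead exploit the cancellation in the indicator sums $\widehat N_n-N_n$ through the truncation-weight argument of Lemma~\ref{lem:Bop1}, precisely as in Proposition~\ref{prop:hat unif convergence}. A secondary point is the uniformity in $x\in[\iota,K]$ (and in $\xi$), which, as in Proposition~\ref{prop:Mn unif x xi}, follows by combining a grid argument with the monotonicity of $N_n$ and $\widehat N_n$ in both arguments.
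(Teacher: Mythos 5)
Your proposal is correct and follows essentially the same route as the paper: the order-statistic/indicator-sum equivalence, Lemma~\ref{lem:Lem F} for the bias at scale $k^{-1/2}$, uniform tightness of the innovation-based tail empirical process, the residual-to-innovation transfer via one-dimensional versions of the sandwiching and fourth-moment arguments (which the paper packages as Lemmas~\ref{lem:Ct bound}, \ref{lem:Dt unif} and \ref{lem:univ unif}), and then inversion. You even correctly flag the main technical point—that the comparison must be done on the indicator sums rather than on the order statistics directly, since Assumption~\ref{ass:UA} is too weak at the $k^{-1/2}$ scale.
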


These four propositions allow us to prove Theorem~\ref{thm:mainresult2}.

\begin{proof}[{\textbf{Proof of Theorem~\ref{thm:mainresult2}:}}]
The outline of the proof is similar to that of Theorem~\ref{thm:mainresult}. We show that for each $d\in\{1,\ldots,D\}$,
\begin{equation}\label{eq:to show}
	\sup_{\vx=(x,y)^\prime\in[\iota,1]^2}\Bigg|\frac{1}{k}\sum_{t=d+1}^{n}I_{\{U_t>b(\frac{n}{kx}),\ U_{t-d}>b(\frac{n}{ky})\}}-\frac{1}{k}\sum_{t=d+1}^{n}I_{\{\widehat{U}_t>\widehat{U}_{(\lfloor kx\rfloor+1)},\ \widehat{U}_{t-d}>\widehat{U}_{(\lfloor ky\rfloor+1)}\}}\Bigg|=o_{\p}(n^{-1/2}).
\end{equation}
Combining this with Proposition~\ref{prop:base functional convergence}, Lemma~4.7 of \citet{Whi01} and the continuous mapping theorem \citep[e.g.,][Theorem~7.20]{Whi01} gives us
\[
	\mathcal{F}_n^{(D)}\overset{d}{\longrightarrow}\sum_{d=1}^{D}\int_{[\iota,1-\iota]}W_d^2(2-2z, 2z)\D z,
\]
where $\{W_d(\cdot,\cdot)\}_{d=1,\ldots,D}$ are mutually independent Brownian sheets, i.e., zero-mean Gaussian processes with covariance function $\Cov\big(W_d(x_1, y_1), W_d(x_2, y_2)\big)=\min(x_1, x_2)\min(y_1, y_2)$. Computing the covariance function of $\{W_d(2-2z, 2z)\}_{z\in[0,1]}$, we find that $\{W_d(2-2z, 2z)\}_{z\in[0,1]}\overset{d}{=}\{2B_d(z)\}_{z\in[0,1]}$, where $B_d(\cdot)$ is a standard Brownian bridge. Due to this,
\[
	\sum_{d=1}^{D}\int_{[\iota,1-\iota]}W_d^2(2-2z, 2z)\D z\overset{d}{=}4\sum_{d=1}^{D}\int_{[\iota,1-\iota]}B_d^2(z)\D z
\]
and, hence, the limit is as claimed.

So it remains to show \eqref{eq:to show}. Decompose
\begin{align}
M_n(\vx,\vzeros) &= \frac{1}{k}\sum_{t=d+1}^{n}I_{\{U_t>b(\frac{n}{kx}),\ U_{t-d}>b(\frac{n}{ky})\}} \notag\\
&= \frac{1}{k}\sum_{t=d+1}^{n}I_{\{U_t>b(\frac{n}{kx}),\ U_{t-d}>b(\frac{n}{ky})\}}-\frac{1}{k}\sum_{t=d+1}^{n}I_{\{U_t>e^{\xi_1/\sqrt{k}}b(\frac{n}{kx}),\ U_{t-d}>e^{\xi_2/\sqrt{k}}b(\frac{n}{ky})\}}\notag\\
&\hspace{0.4cm} +\frac{1}{k}\sum_{t=d+1}^{n}I_{\{U_t>e^{\xi_1/\sqrt{k}}b(\frac{n}{kx}),\ U_{t-d}>e^{\xi_2/\sqrt{k}}b(\frac{n}{ky})\}} - \frac{1}{k}\sum_{t=d+1}^{n}I_{\{\widehat{U}_t>e^{\xi_1/\sqrt{k}}b(\frac{n}{kx}),\ \widehat{U}_{t-d}>e^{\xi_2/\sqrt{k}}b(\frac{n}{ky})\}}\notag\\
&\hspace{0.4cm} +\frac{1}{k}\sum_{t=d+1}^{n}I_{\{\widehat{U}_t>e^{\xi_1/\sqrt{k}}b(\frac{n}{kx}),\ \widehat{U}_{t-d}>e^{\xi_2/\sqrt{k}}b(\frac{n}{ky})\}}\notag\\
&=o_{\p}(n^{-1/2}) + \widehat{M}_n(\vx,\vxi),\label{eq:(2.1)}
\end{align}
where the $o_{\p}(n^{-1/2})$-term is uniform on $\vxi\in[-K,K]^2$ and $\vx\in[\iota,1]^2$ by Propositions~\ref{prop:Mn unif x xi} and~\ref{prop:hat unif convergence}. 

Define $\vxi_n:=\vxi_n(\vx):=\big(\xi_n(x),\ \xi_n(y)\big)^\prime$, such that $\widehat{M}_n(\vx,\vxi_n)=\frac{1}{k}\sum_{t=d+1}^{n}I_{\{\widehat{U}_t>\widehat{U}_{(\lfloor kx\rfloor+1)},\ \widehat{U}_{t-d}>\widehat{U}_{(\lfloor ky\rfloor+1)}\}}$. Then, to show \eqref{eq:to show} or, equivalently, that $\sup_{\vx\in[\iota,1]^2}|\widehat{M}_n(\vx,\vxi_n)-M_n(\vx,\vzeros)|=o_{\p}(n^{-1/2})$, we have to prove that for any $\varepsilon>0$ and $\delta>0$
\[
	\p\left\{\sqrt{n}\sup_{\vx\in[\iota,1]^2}|\widehat{M}_n(\vx,\vxi_n)-M_n(\vx,\vzeros)|>\varepsilon\right\}\leq\delta
\]
for sufficiently large $n$. By Proposition~\ref{prop:unif emp an}, we can choose $K>0$ such that $\p\{\sup_{\vx\in[\iota,1]^2}|\vxi_n|>K\}\leq\delta/2$ for sufficiently large $n$. Furthermore, by \eqref{eq:(2.1)}, $\p\{\sqrt{n}\sup_{\vxi\in[-K, K]^2}\sup_{\vx\in[\iota,1]^2}|\widehat{M}_n(\vx,\vxi)-M_n(\vx,\vzeros)|>\varepsilon\}\leq\delta/2$ for large $n$. Using these two results, we obtain
\begin{align*}
	\p&\Bigg\{\sqrt{n}\sup_{\vx\in[\iota,1]^2}|\widehat{M}_n(\vx,\vxi_n)-M_n(\vx,\vzeros)|>\varepsilon\Bigg\}\\
	&\leq\p\Bigg\{\sup_{\vx\in[\iota,1]^2}|\vxi_n|>K\Bigg\} + \p\Bigg\{\sqrt{n}\sup_{\vxi\in[-K, K]^2}\sup_{\vx\in[\iota,1]^2}|\widehat{M}_n(\vx,\vxi)-M_n(\vx,\vzeros)|>\varepsilon,\ \sup_{\vx\in[\iota,1]^2}|\vxi_n|\leq K\Bigg\}\\
	&\leq\frac{\delta}{2}+\frac{\delta}{2}=\delta,
\end{align*}
i.e., \eqref{eq:to show}. The conclusion follows.
\end{proof}

\subsection{Proofs of Propositions~\ref{prop:base functional convergence}--\ref{prop:unif emp an}}\label{Proofs of Propositions 2}

\begin{proof}[{\textbf{Proof of Proposition~\ref{prop:base functional convergence}:}}]
The corollary on p.~1664 of \citet{BW71} shows that to prove the claim, it suffices to show convergence of the finite-dimensional distributions and tightness. First, to prove convergence of the finite-dimensional distributions, we have to show that for all pairwise distinct $(x_i, y_i)\in[0,1]^2$, $i=1,\ldots,m$,
\begin{equation}\label{eq:(1.1)}
	\big(\mM_n(x_1,y_1),\ldots,\mM_n(x_m, y_m)\big)^\prime\overset{d}{\longrightarrow}\big(\mB(x_1,y_1),\ldots,\mB(x_m,y_m)\big)^\prime.
\end{equation}
Second, to prove tightness it suffices by Theorem~26.23 of \citet{Dav94} to do so for each of the marginals separately. To that end, denote the components of $\mM_n(x,y)$ by $M_n^{(d)}(x,y)$ $(d=1,\ldots,D)$ and define
\begin{align*}
B&= (x_1,x_2]\times (y_1, y_2],\\
M_n^{(d)}(B) &= M_n^{(d)}(x_2,y_2) - M_n^{(d)}(x_1,y_2) - M_n^{(d)}(x_2,y_1) + M_n^{(d)}(x_1,y_1).
\end{align*}
By their Theorem~3 and the subsequent comment together with their Equation~(3), \citet{BW71} show that for tightness it suffices to prove that for $d\in\{1,\ldots,D\}$
\begin{align}
	\E\Big[\big\{M_n^{(d)}(B_1)\big\}^2\big\{M_n^{(d)}(B_2)\big\}^2\Big] &\leq K(x_2-x_1)^2(y_2-y)(y-y_1),\label{eq:(B.4m)}\\
	\E\Big[\big\{M_n^{(d)}(B_3)\big\}^2\big\{M_n^{(d)}(B_4)\big\}^2\Big] &\leq K(x-x_1)(x_2-x)(y_2-y_1)^2,\label{eq:tight2}
\end{align}
where, for $0\leq x_1<x\leq x_2\leq 1$ and $0\leq y_1<y\leq y_2\leq 1$,
\begin{align*}
B_1&= (x_1,x_2]\times (y_1, y],\quad B_2= (x_1,x_2]\times (y, y_2],\\
B_3&= (x_1,x]\times (y_1, y_2],\quad B_4= (x,x_2]\times (y_1, y_2].
\end{align*}

We first show convergence of the finite-dimensional distributions, i.e., \eqref{eq:(1.1)}. We do so for $m=2$; the case $m>2$ being only notationally more complicated. By the Cram\'{e}r--Wold device \citep[Thm.~25.5]{Dav94} it suffices to show that, as $n\to\infty$,
\begin{equation*}
\vlambda_1^\prime\mM_n(x_1,y_1)+\vlambda_2^\prime\mM_n(x_2,y_2)\overset{d}{\longrightarrow}\vlambda_1^\prime\mB(x_1,y_1)+\vlambda_2^\prime\mB(x_2,y_2)
\end{equation*}
for any $\vlambda=(\vlambda_1^\prime, \vlambda_2^\prime)^\prime=(\lambda_{1,1},\ldots,\lambda_{1,D},\lambda_{2,1},\ldots,\lambda_{2,D})^\prime$ with $\vlambda^\prime\vlambda=1$. The proof of this convergence is similar to that of Proposition~\ref{prop:base convergence}, and uses Theorem~5.20 of \citet{Whi01}. We re-define
\begin{align}
	\overline{\sigma}_n^2&=\Var\Big(\vlambda_1^\prime\mM_n(x_1,y_1)+\vlambda_2^\prime\mM_n(x_2,y_2)\Big)\notag\\
	&= \Var\Big(\vlambda_1^\prime\mM_n(x_1,y_1)\Big)+\Var\Big(\vlambda_2^\prime\mM_n(x_2,y_2)\Big)+2\Cov\Big(\vlambda_1^\prime\mM_n(x_1,y_1),\ \vlambda_2^\prime\mM_n(x_2,y_2)\Big).\label{eq:(3.1)}
\end{align}
From the proof of Proposition~\ref{prop:base convergence}, we get that
\begin{equation}\label{eq:(C.14)}
	\Var\Big(\vlambda_i^\prime\mM_n(x_i,y_i)\Big)=\sum_{d=1}^{D}\lambda_{i,d}^2x_iy_i+o(1) \qquad (i=1,2).
\end{equation}
Thus, it remains to compute the covariance term. Recall from the proof of Proposition~\ref{prop:base convergence} that
\[
	Z_{n,t}^{(d)}(x,y)=\begin{cases}0, & t=1,\ldots,d,\\
	\frac{n}{k}\Big[I_{\left\{U_t>b(\frac{n}{kx}),\ U_{t-d}>b(\frac{n}{ky})\right\}}
	-\Big(\frac{k}{n}\Big)^2xy\Big],& t>d.
	\end{cases}
\]
With this
\begin{align}
	&\Cov\Big(\vlambda_1^\prime\mM_n(x_1,y_1),\ \vlambda_2^\prime\mM_n(x_2,y_2)\Big) \notag\\
	&\hspace{1cm}= \Cov\Big(\sqrt{n}\sum_{d_1=1}^{D}\frac{1}{n}\sum_{t=1}^{n}\lambda_{1,d_1}Z_{n,t}^{(d_1)}(x_1,y_1),\ \sqrt{n}\sum_{d_2=1}^{D}\frac{1}{n}\sum_{t=1}^{n}\lambda_{2,d_2}Z_{n,t}^{(d_2)}(x_2,y_2)\Big)\notag\\
	&\hspace{1cm}= \sum_{d_1=1}^{D}\sum_{d_2=1}^{D}\frac{1}{n}\Cov\Big(\sum_{t=1}^{n}\lambda_{1,d_1}Z_{n,t}^{(d_1)}(x_1,y_1),\ \sum_{t=1}^{n}\lambda_{2,d_2}Z_{n,t}^{(d_2)}(x_2,y_2)\Big).\label{eq:(4.1)}
\end{align}
For $d_1\neq d_2$, we obtain that
\begin{align*}
	\frac{1}{n}\Cov&\Big(\sum_{t=1}^{n}\lambda_{1,d_1}Z_{n,t}^{(d_1)}(x_1,y_1),\ \sum_{t=1}^{n}\lambda_{2,d_2}Z_{n,t}^{(d_2)}(x_2,y_2)\Big)\\
	&= \frac{1}{n}\sum_{s=1}^{n}\sum_{t=1}^{n}\lambda_{1,d_1}\lambda_{2,d_2}\Cov\Big(Z_{n,s}^{(d_1)}(x_1,y_1),\ Z_{n,t}^{(d_2)}(x_2,y_2)\Big)\\	
	&=\frac{1}{n}\Bigg\{\sum_{t=1}^{n}\lambda_{1,d_1}\lambda_{2,d_2}\Cov\Big(Z_{n,t}^{(d_1)}(x_1,y_1),\ Z_{n,t}^{(d_2)}(x_2,y_2)\Big)\\
	&\hspace{1cm}+\Bigg(\sum_{s-t=d_1} + \sum_{t-s=d_2} + \sum_{t-s=d_2-d_1} \Bigg)\lambda_{1,d_1}\lambda_{2,d_2}\Cov\Big(Z_{n,s}^{(d_1)}(x_1,y_1),\ Z_{n,t}^{(d_2)}(x_2,y_2)\Big)\Bigg\}\\
	&=O(k/n)=o(1)
\end{align*}
by a relation similar to \eqref{eq:cov bound in sum}. This shows that the right-hand side of \eqref{eq:(4.1)} equals
\begin{align*}
\sum_{d=1}^{D}\frac{1}{n}&\Cov\Big(\sum_{t=1}^{n}\lambda_{1,d}Z_{n,t}^{(d)}(x_1,y_1),\ \sum_{t=1}^{n}\lambda_{2,d}Z_{n,t}^{(d)}(x_2,y_2)\Big)+o(1)\\
&=\sum_{d=1}^{D}\frac{1}{n}\sum_{t=1}^{n}\lambda_{1,d}\lambda_{2,d}\Cov\Big(Z_{n,t}^{(d)}(x_1,y_1),\ Z_{n,t}^{(d)}(x_2,y_2)\Big)\\
&\hspace{1cm}+\sum_{d=1}^{D}\frac{1}{n}\sum_{|t-s|=d}\lambda_{1,d}\lambda_{2,d}\Cov\Big(Z_{n,s}^{(d)}(x_1,y_1),\ Z_{n,t}^{(d)}(x_2,y_2)\Big)+o(1)\\
&=\sum_{d=1}^{D}\lambda_{1,d}\lambda_{2,d}\min(x_1,x_2)\min(y_1,y_2)+o(1),
\end{align*}
where the last line follows from \eqref{eq:cov bound in sum} and the fact that
\begin{align*}
\Cov&\Big(Z_{n,t}^{(d)}(x_1,y_1),\ Z_{n,t}^{(d)}(x_2,y_2)\Big)=\E\big[Z_{n,t}^{(d)}(x_1,y_1) Z_{n,t}^{(d)}(x_2,y_2)\big]-\E\big[Z_{n,t}^{(d)}(x_1,y_1)\big]\E\big[Z_{n,t}^{(d)}(x_2,y_2)\big]\\
&=\Big(\frac{n}{k}\Big)^2\p\Big\{U_t>b\big(n/[k\min(x_1,x_2)]\big),\ U_{t-d}>b\big(n/[k\min(y_1,y_2)]\big)\Big\}-\Big(\frac{n}{k}\Big)^2\Big(\frac{k}{n}\Big)^2x_1y_1 \Big(\frac{k}{n}\Big)^2x_2y_2\\
&= \min(x_1,x_2)\min(y_1,y_2)+o(1).
\end{align*}
Overall, we get that
\[
	\Cov\Big(\vlambda_1^\prime\mM_n(x_1,y_1),\ \vlambda_2^\prime\mM_n(x_2,y_2)\Big)=\sum_{d=1}^{D}\lambda_{1,d}\lambda_{2,d}\min(x_1,x_2)\min(y_1,y_2)+o(1).
\]
Plugging this and \eqref{eq:(C.14)} into \eqref{eq:(3.1)} gives
\begin{equation*}
\overline{\sigma}_n^2 = \sum_{d=1}^{D}\big[\lambda_{1,d}^2 x_1y_1 + \lambda_{2,d}^2x_2y_2+2\lambda_{1,d}\lambda_{2,d}\min(x_1,x_2)\min(y_1,y_2)\big]+o(1).
\end{equation*}
The limit of $\overline{\sigma}_n^2$ then is easily seen to equal
\begin{align*}
\Var&\big(\vlambda_1^\prime\mB(x_1,y_1)+\vlambda_2^\prime\mB(x_2,y_2)\big)\\
&=\Var\big(\vlambda_1^\prime\mB(x_1,y_1)\big)+\Var\big(\vlambda_2^\prime\mB(x_2,y_2)\big)+2\Cov\big(\vlambda_1^\prime\mB(x_1,y_1),\ \vlambda_2^\prime\mB(x_2,y_2)\big)\\
&=\sum_{d=1}^{D}\lambda_{1,d}^2\Var\big(B_d(x_1,y_1)\big) + \sum_{d=1}^{D}\lambda_{2,d}^2\Var\big(B_d(x_1,y_1)\big)\\
&\hspace{5cm}+2\Cov\Big(\sum_{d=1}^{D}\lambda_{1,d}B_d(x_1, y_1),\ \sum_{d=1}^{D}\lambda_{2,d}B_d(x_2, y_2)\Big)\\
&=\sum_{d=1}^{D}\big[\lambda_{1,d}^2 x_1y_1 + \lambda_{2,d}^2x_2y_2+2\lambda_{1,d}\lambda_{2,d}\min(x_1,x_2)\min(y_1,y_2)\big],
\end{align*}
as desired. The remaining conditions of Theorem~5.20 of \citet{Whi01} can be checked easily and, hence, \eqref{eq:(1.1)} follows.  

It remains to show tightness of the marginals. We only show \eqref{eq:(B.4m)}, because \eqref{eq:tight2} can be proved in exactly the same manner. Set
\begin{align*}
G_t &= I_{\Big\{U_t\in\big(b(\frac{n}{kx_2}),\ b(\frac{n}{kx_1})\big],\ U_{t-d}\in\big(b(\frac{n}{ky}),\ b(\frac{n}{ky_1})\big]\Big\}}-\Big(\frac{k}{n}\Big)^2(x_2-x_1)(y-y_1)\\
&= I_{\Big\{U_t\in\big(b(\frac{n}{kx_2}),\ b(\frac{n}{kx_1})\big],\ U_{t-d}\in\big(b(\frac{n}{ky}),\ b(\frac{n}{ky_1})\big]\Big\}}-p_{G},\\
H_t &= I_{\Big\{U_t\in\big(b(\frac{n}{kx_2}),\ b(\frac{n}{kx_1})\big],\ U_{t-d}\in\big(b(\frac{n}{ky_2}),\ b(\frac{n}{ky})\big]\Big\}}-\Big(\frac{k}{n}\Big)^2(x_2-x_1)(y_2-y)\\
&= I_{\Big\{U_t\in\big(b(\frac{n}{kx_2}),\ b(\frac{n}{kx_1})\big],\ U_{t-d}\in\big(b(\frac{n}{ky_2}),\ b(\frac{n}{ky})\big]\Big\}}-p_{H}.
\end{align*}
 It is easy to check that
\begin{align*}
M_n^{(d)}(B_1) &= M_n^{(d)}(x_2,y) - M_n^{(d)}(x_1,y) - M_n^{(d)}(x_2,y_1) + M_n^{(d)}(x_1,y_1)\\
&=\frac{\sqrt{n}}{k}\sum_{t=d+1}^{n}G_t,\\
M_n^{(d)}(B_2)&=\frac{\sqrt{n}}{k}\sum_{t=d+1}^{n}H_t.
\end{align*}
Thus, we can write
\begin{align}
\E&\Big[\big\{M_n^{(d)}(B_1)\big\}^2\big\{M_n^{(d)}(B_2)\big\}^2\Big]\notag\\
&=\frac{n^2}{k^4}\sum_{t_1=d+1}^{n}\sum_{t_2=d+1}^{n}\sum_{t_3=d+1}^{n}\sum_{t_4=d+1}^{n}\E[G_{t_1}G_{t_2}H_{t_3}H_{t_4}]\notag\\
&=\frac{n^2}{k^4}\left\{\sum_{t=d+1}^{n}\E[G_{t}G_{t}H_{t}H_{t}]+ \sum_{t_1\neq t_2}\E[G_{t_1}^2H_{t_2}^2]+2\sum_{t_1\neq t_2}\E[G_{t_1}H_{t_1}G_{t_2}H_{t_2}]\right\}+R_n.\label{eq:(7.2)}
\end{align}
Here, the first term in curly brackets on the right-hand side of \eqref{eq:(7.2)} collects those terms that would arise for serially independent $\{(G_t,H_t)^\prime\}$, whereas the remainder $R_n$ collects those terms that arise due to the $d$-dependence of $\{(G_t,H_t)^\prime\}$. Exploiting the fact that $G_t$ and $H_t$ are centered Bernoulli-distributed random variables with respective success probabilities $p_G$ and $p_H$, the first term on the right-hand side of \eqref{eq:(7.2)} reduces to (with $\widetilde{n}=n-d$)
\begin{align*}
\frac{n^2}{k^4}&\left\{\widetilde{n}\E[G_1^2H_1^2]+\widetilde{n}(\widetilde{n}-1)\E[G_1^2]\E[H_1^2]+2\widetilde{n}(\widetilde{n}-1)\E^2[G_1H_1]\right\}\\
&=\frac{n^2}{k^4}\left\{\widetilde{n}[p_G^2p_{H}+p_G p_H^2+p_G^2 p_H^2]+\widetilde{n}(\widetilde{n}-1)[p_G p_H(1-p_G)(1-p_H)]+2\widetilde{n}(\widetilde{n}-1)p_G^2p_H^2\right\}\\
&\leq K(x_2-x_1)^2(y_2-y)(y-y_1).
\end{align*}
Lengthy but straightforward calculations establish a similar bound for $R_n$. In sum, we obtain
\[
	\E\Big[\big\{M_n^{(d)}(B_1)\big\}^2\big\{M_n^{(d)}(B_2)\big\}^2\Big]\leq K(x_2-x_1)^2(y_2-y)(y-y_1).
\]
This completes the proof.
\end{proof}

\begin{proof}[{\textbf{Proof of Proposition~\ref{prop:Mn unif x xi}:}}]
Consider the supremum over $\vxi\in[-K,0]^2$; the cases where $\vxi$ lies in the other quadrants can be dealt with similarly. By definition of $M_n(\cdot,\cdot)$, it holds that 
\begin{equation}\label{eq:Mn bound}
	\sup_{\vxi\in[-K, 0]^2}\sup_{\vx\in[0,K]^2}|M_n(\vx,\vxi)-M_n(\vx,\vzeros)|\leq\sup_{\vx\in[0,K]^2}|M_n(\vx,-(K,K)^\prime)-M_n(\vx,\vzeros)|.
\end{equation}
Let $\rho>0$. Assume without loss of generality that $K/\rho\in\mathbb{N}$. Put
\[
	\mathcal{L}=\left\{(l_1,l_2)^\prime\in\mathbb{N}_0^2\ :\ (l_1,l_2)^\prime\in[0, K/\rho]^2\right\}.
\]
Then, bound the right-hand side of \eqref{eq:Mn bound} by
\begin{align*}
	&\max_{(l_1,l_2)^\prime\in\mathcal{L}}|M_n((l_1,l_2)^\prime\rho,-(K,K)^\prime)-M_n((l_1,l_2)^\prime\rho,\vzeros)|\\
	&\hspace{2cm}+\sup_{|\vx_1-\vx_2|\leq\rho}\big|\big\{M_n(\vx_1,-(K,K)^\prime)-M_n(\vx_1,\vzeros)\big\}-\big\{M_n(\vx_2,-(K,K)^\prime)-M_n(\vx_2,\vzeros)\}\big|\\
	&\leq\max_{(l_1,l_2)^\prime\in\mathcal{L}}|M_n((l_1,l_2)^\prime\rho,-(K,K)^\prime)-M_n((l_1,l_2)^\prime\rho,\vzeros)|+\sup_{|\vx_1-\vx_2|\leq\rho}|M_n(\vx_1,\vzeros)-M_n(\vx_2,\vzeros)|\\
	&\hspace{2cm}+\sup_{|\vx_1-\vx_2|\leq\rho}|M_n(\vx_1,-(K,K)^\prime)-M_n(\vx_2,-(K,K)^\prime)|\\
	&=:A_{1n}+B_{1n}+C_{1n}.
\end{align*}
Here, $A_{1n}$ is the maximum of the function $\vx\mapsto|M_n(\vx,-(K,K)^\prime)-M_n(\vx,\vzeros)|$ on the grid $\mathcal{L}$, and $B_{1n}+C_{1n}$ bound the function's maximum absolute increase over a grid quadrangle.

By subadditivity and Proposition~\ref{prop:xi convergence}, we obtain that
\begin{align*}
	\p\{\sqrt{n}A_{1n}\geq\varepsilon\} &= \p\Big\{\cup_{(l_1,l_2)^\prime\in\mathcal{L}}\big\{\sqrt{n}\big|M_n((l_1,l_2)^\prime\rho,-(K,K)^\prime)-M_n((l_1,l_2)^\prime\rho,\vzeros)\big|\geq\varepsilon\big\}\Big\}\\
	&\leq\sum_{(l_1,l_2)^\prime\in\mathcal{L}}\p\Big\{\sqrt{n}\big|M_n((l_1,l_2)^\prime\rho,-(K,K)^\prime)-M_n((l_1,l_2)^\prime\rho,\vzeros)\big|\geq\varepsilon\Big\}=o(1).
\end{align*}
Thus, $A_{1n}=o_{\p}(n^{-1/2})$.

Since $B_{1n}=o_{\p}(n^{-1/2})$ can be shown similarly as $C_{1n}=o_{\p}(n^{-1/2})$, we only prove the former. Define
\[
	\widetilde{\mathcal{L}}=\left\{(l_1,l_2)^\prime\in\mathbb{N}_0^2\ :\ [l_1,l_1+2]\times[l_2,l_2+2]\subset[0, K/\rho]^2\right\}.
\]
Use a monotonicity argument to deduce that
\begin{equation*}
B_{1n} \leq \max_{(l_1,l_2)^\prime\in\widetilde{\mathcal{L}}}\big|M_n((l_1+2,l_2+2)^\prime\rho,\vzeros)-M_n((l_1,l_2)^\prime\rho,\vzeros)\big|.
\end{equation*}
Put
\begin{equation*}
	I_{3t}:=I_{\Big\{U_t>b\big(\frac{n}{k[l_1+2]\rho}\big),\ U_{t-d}>b\big(\frac{n}{k[l_2+2]\rho}\big)\Big\}}-I_{\Big\{U_t>b\big(\frac{n}{kl_1\rho}\big),\ U_{t-d}>b\big(\frac{n}{kl_2\rho}\big)\Big\}}.
\end{equation*}
For fixed $(l_1,l_2)^\prime\in\widetilde{\mathcal{L}}$, use Markov's inequality to obtain that
\begin{align}
\p&\left\{\sqrt{n}\big|M_n((l_1+2,l_2+2)^\prime\rho,\vzeros)-M_n((l_1,l_2)^\prime\rho,\vzeros)\big|\geq\varepsilon\right\}\notag\\
&=\p\left\{\frac{\sqrt{n}}{k}\sum_{t=d+1}^{n}I_{3t}\geq\varepsilon\right\}\leq \varepsilon^{-4}\frac{n^2}{k^4}\E\Big[\sum_{t=d+1}^{n}I_{3t}\Big]^4.\label{eq:bound grid deriv}
\end{align}
Due to their $d$-dependence, the $I_{3t}$ are trivially $\rho$-mixing of any rate, and thus we may again apply Lemma~2.3 of \citet{Sha93} (for, in his notation, $q=4$) to deduce that
\begin{equation}\label{eq:Shao}
	\E\Big[\sum_{t=d+1}^{n}I_{3t}\Big]^4\leq n^2 C \big\{\E[I_{3t}^2]\big\}^2+n C \E[I_{3t}^4].
\end{equation}
Since $I_{3t}\in\{0,1\}$ and, hence, $\E[I_{3t}^4]=\E[I_{3t}^2]=\E[I_{3t}]$, it suffices to compute $\E[I_{3t}]$. Use Lemma~\ref{lem:Lem F} to obtain
\begin{align*}
\E[I_{3t}]	&=\p\Big\{U_t>b\Big(\frac{n}{k[l_1+2]\rho}\Big)\Big\} \p\Big\{U_{t-d}>b\Big(\frac{n}{k[l_2+2]\rho}\Big)\Big\}\\
	&\hspace{3cm}-\p\Big\{U_t>b\Big(\frac{n}{kl_1\rho}\Big)\Big\} \p\Big\{U_{t-d}>b\Big(\frac{n}{kl_2\rho}\Big)\Big\}\\
	&=\frac{k^2}{n^2}\big[(l_1+2)(l_2+2)-l_1l_2\big]\rho^2\\
	&\leq C\frac{k^2}{n^2}\rho^2.
\end{align*}
Together with \eqref{eq:Shao}, this implies 
\[
	\E\Big[\sum_{t=d+1}^{n}I_{3t}\Big]^4\leq C \frac{k^4}{n^2}\rho^4+C\frac{k^2}{n}\rho^2.
\]
With \eqref{eq:bound grid deriv}, this gives
\[
	\p\left\{\sqrt{n}\big|M_n((l_1+2,l_2+2)^\prime\rho,\vzeros)-M_n((l_1,l_2)^\prime\rho,\vzeros)\big|\geq\varepsilon\right\}\leq C\rho^4+C\frac{n}{k^2}\rho^2.
\]
Thus, by subadditivity and using that the cardinality of $\widetilde{\mathcal{L}}$ is of the order $\rho^{-2}$, we obtain with Assumption~\ref{ass:k} that
\begin{align*}
	\p\{\sqrt{n}B_{1n}\geq\varepsilon\} &= \p\Big\{\cup_{(l_1,l_2)^\prime\in\mathcal{L}}\big\{\sqrt{n}\big|M_n((l_1+2,l_2+2)^\prime\rho,\vzeros)-M_n((l_1,l_2)^\prime\rho,\vzeros)\big|\geq\varepsilon\big\}\Big\}\\
	&\leq\sum_{(l_1,l_2)^\prime\in\mathcal{L}}\p\Big\{\sqrt{n}\big|M_n((l_1+2,l_2+2)^\prime\rho,\vzeros)-M_n((l_1,l_2)^\prime\rho,\vzeros)\big|\geq\varepsilon\Big\}\\
	&\leq O(\rho^{-2})\Big[C\rho^4+C\frac{n}{k^2}\rho^2\Big]\leq C\rho^2+o(1).
\end{align*}
Since $\rho$ can be chosen arbitrarily small, it follows that $B_{1n}=o_{\p}(n^{-1/2})$. This ends the proof.
\end{proof}

For the following lemma, recall the definition of $B_t(\vx,\vxi,\eta,\eta_0)$ in \eqref{eq:(A.10+)}. The proof of the lemma is in Appendix~\ref{Auxiliary Lemmas for Proof of Theorem 2}.

\begin{lem}\label{lem:unif Bop1}
Let $\eta>0$, $\eta_0\in\{-1, 1\}$. Then, for any $K>0$,
\begin{equation*}
\sup_{\vxi\in[-K,K]^2}\sup_{\vx\in[0,K]^2}\left|\frac{1}{k}\sum_{t=d+1}^{n}B_t(\vx,\vxi,\eta,\eta_0)\right|=o_{\p}(n^{-1/2}).
\end{equation*}
\end{lem}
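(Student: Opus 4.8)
The plan is to run the argument of Lemma~\ref{lem:Bop1} and upgrade its conclusion to hold uniformly in $\vx\in[0,K]^2$, importing the discretization device from the proof of Proposition~\ref{prop:Mn unif x xi}. By symmetry it suffices to treat $\eta_0=1$ and the supremum over $\vxi\in[-K,0]^2$; the remaining three quadrants and $\eta_0=-1$ are analogous. Recall the truncation $w_t=I_{\{m_t\le\varepsilon_0,\,m_{t-d}\le\varepsilon_0,\,s_t\le\varepsilon_0,\,s_{t-d}\le\varepsilon_0\}}$ and $\nu=\nu(\varepsilon_0)\downarrow0$ as $\varepsilon_0\downarrow0$ from that proof. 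The key observation is that the elementary bounds invoked there (of the type $(1+s_t)^{-1}>1-\nu/2$ on $\{w_t=1\}$ and $e^{K/\sqrt k}\varepsilon_0/b(n/[k\max\{x,y\}])<\nu/2$) hold \emph{uniformly} over $\vx\in[0,K]^2$ for all large $n$, because $b(n/[k\max\{x,y\}])\ge b(n/[kK])\to\infty$ by Assumption~\ref{ass:U distr 2}(i). Together with $B_t(\vx,\vxi,\eta,1)\ge0$ this yields, w.p.a.~1 and uniformly in $\vxi\in[-K,0]^2$ and $\vx\in[0,K]^2$ (with $\ell_n\to\infty$, $\ell_n=o(k/\sqrt n)$ as permitted by Assumption~\ref{ass:k}),
\[
0\le\frac{\sqrt n}{k}\sum_{t=\ell_n}^n w_tB_t(\vx,\vxi,\eta,1)\le\frac{\sqrt n}{k}\sum_{t=\ell_n}^n I_{2t}(\vx),
\]
where $I_{2t}(\vx)=I_{\{U_t>(1-\nu)b(n/[kx]),\,U_{t-d}>(1-\nu)b(n/[ky])\}}-I_{\{U_t>b(n/[kx]),\,U_{t-d}>b(n/[ky])\}}\in\{0,1\}$.

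The core step is to show that for every $\varepsilon,\delta>0$ one can choose $\varepsilon_0$ small enough that $\p\{\sup_{\vx\in[0,K]^2}\frac{\sqrt n}{k}\sum_{t=\ell_n}^n I_{2t}(\vx)>\varepsilon\}<\delta$ for all large $n$. I would do this by discretization exactly as for $A_{1n}+B_{1n}+C_{1n}$ in the proof of Proposition~\ref{prop:Mn unif x xi}: fix a grid of mesh $\rho$ on $[0,K]^2$ and split $\sup_{\vx}$ into the maximum over the $O(\rho^{-2})$ grid points and the maximal oscillation over grid cells. For the grid points, subadditivity, Markov's inequality and Lemma~2.3 of \citet{Sha93} (the $I_{2t}(\vx)$ are $d$-dependent, hence $\rho$-mixing) reduce the problem to the moment estimate $\E[I_{2t}(\vx)]\le C\nu k^2/n^2$, valid \emph{uniformly in $\vx$} by Lemma~\ref{lem:Lem F} because $I_{2t}(\vx)$ is a difference of indicators whose thresholds differ only by the multiplicative factor $1-\nu$, so the leading terms $(k/n)^2xy$ cancel; this gives $\p\{\frac{\sqrt n}{k}\sum_{t=\ell_n}^n I_{2t}(\vx_\ell)>\varepsilon\}\le C\nu$ per grid point, hence $\le C\rho^{-2}\nu<\delta/2$ once $\varepsilon_0$ (and thus $\nu$) is taken small enough \emph{after} $\rho$ has been fixed. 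For the oscillation, I would write $\frac1k\sum_t I_{2t}(\vx)$ as the difference of the two empirical tail counts $\frac1k\sum_t I_{\{U_t>(1-\nu)b(n/[kx]),\,U_{t-d}>(1-\nu)b(n/[ky])\}}$ and $\frac1k\sum_t I_{\{U_t>b(n/[kx]),\,U_{t-d}>b(n/[ky])\}}$, each monotone in both coordinates of $\vx$, bound the oscillation of each over a grid block by the increment between opposite corners, and apply Markov's inequality together with Lemma~2.3 of \citet{Sha93} and Lemma~\ref{lem:Lem F} (taking, if needed, a moment of order strictly larger than $4$ in the Markov step so that the contribution summed over the $O(\rho^{-2})$ blocks is still $\le C\rho^{c}+o(1)$ for some $c>0$), which is $<\delta/2$ for $\rho$ small.

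Combining the two pieces gives $\sup_{\vx\in[0,K]^2}\frac{\sqrt n}{k}\sum_{t=\ell_n}^n I_{2t}(\vx)=o_{\p}(1)$ in this $\varepsilon$--$\delta$ sense, and since for the chosen $\varepsilon_0$ Assumption~\ref{ass:UA} forces $w_{\ell_n}=\dots=w_n=1$ w.p.a.~1, the sandwich upgrades to $\sup_{\vxi\in[-K,0]^2}\sup_{\vx\in[0,K]^2}\frac{\sqrt n}{k}|\sum_{t=\ell_n}^n B_t(\vx,\vxi,\eta,1)|=o_{\p}(1)$. Finally, using $B_t(\vx,\vxi,\eta,1)\in\{0,1\}$ and $\ell_n=o(k/\sqrt n)$ one has $\frac{\sqrt n}{k}|\sum_{t=d+1}^{\ell_n-1}B_t(\vx,\vxi,\eta,1)|\le\frac{\sqrt n}{k}\ell_n=o(1)$ uniformly in $(\vx,\vxi)$, and adding the two ranges of $t$ yields the claim for $\eta_0=1$; the case $\eta_0=-1$ is identical. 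The main obstacle is precisely the uniformity in $\vx$: the mean and variance of $\frac{\sqrt n}{k}\sum I_{2t}(\vx)$ are controlled uniformly thanks to the difference structure of $I_{2t}$ (cancellation of the $(k/n)^2xy$ term), but making the oscillation estimate sum to something small over the $O(\rho^{-2})$ cells requires carefully balancing the mesh $\rho$ against the number of cells, which is the one place where a little extra care beyond the proof of Lemma~\ref{lem:Bop1} is needed.
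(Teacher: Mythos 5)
Your proposal is correct and rests on the same two pillars as the paper's proof: the Assumption~\ref{ass:UA} sandwich that replaces $B_t$ by the $\nu$-shifted indicator differences $I_{2t}(\vx)$ (valid uniformly in $\vx\in[0,K]^2$ precisely because $b(n/[kK])\to\infty$, as you note), and a chaining argument over a $\rho$-grid using monotonicity, $d$-dependence and Lemma~2.3 of \citet{Sha93}. The execution differs in two respects. First, you sandwich before discretizing, so the object you chain is the $\vtheta$- and $\vxi$-free process $\tfrac{\sqrt n}{k}\sum_t I_{2t}(\vx)$; the paper discretizes first, disposes of the grid maximum by invoking Lemma~\ref{lem:Bop1} verbatim at each of the finitely many grid points, and only introduces the sandwich inside the oscillation term. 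Your grid-point step therefore re-derives a per-point bound of order $C\nu$ and needs the ordering ``fix $\rho$, then take $\nu\ll\rho^2$,'' which is fine but duplicates what Lemma~\ref{lem:Bop1} already delivers. Second, and more substantively, your oscillation step splits $I_{2t}$ into two monotone empirical counts and controls each by its increment over a cell; these increments carry no $\nu$-factor and have expectation of order $\rho k^2/n^2$ uniformly over the grid, so the fourth-moment bound yields only $C\rho^2+C\rho n/k^2$ per cell, which does not survive the $O(\rho^{-2})$ union bound---your hedge about taking a moment of order strictly larger than $4$ is therefore not optional but necessary (sixth moments give $C\rho^3+C\rho n^2/k^4$ per cell and hence $C\rho+o(1)$ in total, using $k/\sqrt n\to\infty$). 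The paper instead bounds the cell difference of $w_tB_t$ via the triangle inequality by the two $\nu$-shift indicators $I_{4t}+I_{5t}$ evaluated at the cell corners; these retain the factor $\nu$ in their expectations, so fourth moments suffice and the $\rho^{-2}$ cost is again absorbed by choosing $\nu$ small. Both routes close, so there is no gap; the paper's treatment of the oscillation is just slightly more economical.
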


\begin{proof}[{\textbf{Proof of Proposition~\ref{prop:hat unif convergence}:}}]
The proof is identical to that of Proposition~\ref{prop:hat convergence} using Lemma~\ref{lem:unif Bop1} instead of Lemma~\ref{lem:Bop1}.
\end{proof}

The next lemma, whose proof is in Appendix~\ref{Auxiliary Lemmas for Proof of Theorem 2}, is required for the proof of Proposition~\ref{prop:unif emp an}.

\begin{lem}\label{lem:univ unif}
For any $0<\iota<K<\infty$, it holds that, as $n\to\infty$,
\[
	\sup_{x\in[\iota,K]}\Bigg|\frac{1}{\sqrt{k}}\sum_{t=1}^{n}\Big[I_{\big\{\widehat{U}_t>b(n/[kx])\big\}}-I_{\big\{U_t>b(n/[kx])\big\}}\Big]\Bigg|=o_{\p}(1).
\]
\end{lem}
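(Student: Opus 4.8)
The plan is to adapt, essentially verbatim, the argument behind Lemmas~\ref{lem:Bop1} and~\ref{lem:unif Bop1} to the present univariate situation ($U_t$ i.i.d., no lag), with the $n^{-1/2}$-scale there replaced by the $k^{-1/2}$-scale here. Abbreviate $b_x=b(n/[kx])$, and note $b_x\to\infty$ uniformly in $x\in[\iota,K]$ since $b_x\ge b(n/[k\iota])\to\infty$. I would first strip off the initial indices: pick $\ell_n\to\infty$ with $\ell_n=o(\sqrt k)$; then the block $t=1,\dots,\ell_n-1$ contributes at most $(\ell_n-1)/\sqrt k=o(1)$ uniformly in $x$, since every summand is bounded by $1$ in modulus. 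Next, on the event $\{\widehat\vtheta\in N_n(\eta)\}$ --- of probability tending to $1$ after letting $\eta\to\infty$, by Assumption~\ref{ass:estimator} --- Assumption~\ref{ass:UA} gives $U_t(1-s_t)-m_t\le\widehat U_t\le U_t(1+s_t)+m_t$ for $t=\ell_n,\dots,n$, so that, as in Lemma~\ref{lem:wpa1}, for each $\eta_0\in\{-1,1\}$ the univariate analogue
\[
	\widetilde B_t(x,\eta,\eta_0)=I_{\{U_t(1+\eta_0 s_t)+\eta_0 m_t>b_x\}}-I_{\{U_t>b_x\}}
\]
of $B_t$ in \eqref{eq:(A.10+)} sandwiches $I_{\{\widehat U_t>b_x\}}-I_{\{U_t>b_x\}}$ w.p.a.\ 1, uniformly in $x$. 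It therefore suffices to prove $\sup_{x\in[\iota,K]}\bigl|\frac1{\sqrt k}\sum_{t=\ell_n}^{n}\widetilde B_t(x,\eta,\eta_0)\bigr|=o_{\p}(1)$ for $\eta_0=\pm1$.

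To handle the random perturbations $m_t,s_t$, I would insert the truncation weight $w_t=I_{\{m_t\le\varepsilon_0,\ s_t\le\varepsilon_0\}}$ exactly as in the proof of Lemma~\ref{lem:Bop1}. Since $\max_{\ell_n\le t\le n}m_t=o_{\p}(1)$ and $\max_{\ell_n\le t\le n}s_t=o_{\p}(1)$, we have $w_{\ell_n}=\dots=w_n=1$ w.p.a.\ 1, so the truncation is costless; and on $\{w_t=1\}$, using $b_x\to\infty$ uniformly in $x$, the random thresholds $(b_x\mp m_t)/(1\pm s_t)$ lie between $(1-\nu)b_x$ and $(1+\nu)b_x$ for $n$ large, with $\nu$ a fixed multiple of $\varepsilon_0$. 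Monotonicity of the indicator in its threshold then reduces matters to showing, for the one-sided difference indicators and with a final passage $\varepsilon_0\downarrow0$,
\[
	\sup_{x\in[\iota,K]}\frac1{\sqrt k}\sum_{t=\ell_n}^{n}I_{\{(1-\nu)b_x<U_t\le b_x\}}=o_{\p}(1),\qquad \sup_{x\in[\iota,K]}\frac1{\sqrt k}\sum_{t=\ell_n}^{n}I_{\{b_x<U_t\le(1+\nu)b_x\}}=o_{\p}(1),
\]
where the summands are now i.i.d.\ functions of $U_t$ alone.

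These two bounds I would establish by the grid argument of the proof of Proposition~\ref{prop:Mn unif x xi}: cover $[\iota,K]$ by $O(\rho^{-1})$ points, bound the maximum over the grid by a union bound together with Markov's inequality at the fourth power, and control the oscillation between consecutive grid points via monotonicity of $x\mapsto b_x$. Independence of the $U_t$ yields a Rosenthal-type bound $\E\bigl[\sum_t(I_{\{\cdot\}}-\E I_{\{\cdot\}})\bigr]^4\le C\bigl(n^2(\E I_{\{\cdot\}})^2+n\,\E I_{\{\cdot\}}\bigr)$, while Lemma~\ref{lem:Lem F} gives $\E I_{\{(1-\nu)b_x<U_t\le b_x\}}=\overline F((1-\nu)b_x)-\overline F(b_x)\le C\tfrac kn\nu$ uniformly in $x\in[\iota,K]$. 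Plugging these in and using Assumption~\ref{ass:k} ($k/n\to0$, $k/\sqrt n\to\infty$), the fourth moment of the centered part $\frac1{\sqrt k}\bigl(\sum_t I_{\{\cdot\}}-\E\sum_t I_{\{\cdot\}}\bigr)$ is of order $\nu^2+\nu/k$, which --- after multiplication by the $O(\rho^{-1})$ grid cardinality and combination with the oscillation estimate (bounded exactly as in Proposition~\ref{prop:Mn unif x xi}) --- tends to $0$ once $\rho\downarrow0$ and $\nu\downarrow0$ in the right order.

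The main obstacle is the centering term $\frac1{\sqrt k}\sum_{t=\ell_n}^{n}\E I_{\{\cdot\}}\asymp\sqrt k\,\nu$: it is not $o(1)$ for fixed $\nu$, so $\nu$ --- equivalently $\varepsilon_0$ --- cannot be held constant but must be taken to vanish at a rate $\varepsilon_0=\varepsilon_0(n)\downarrow0$ with $\sqrt k\,\varepsilon_0(n)\to0$, \emph{while still} dominating $\max_{t\ge\ell_n}m_t$ and $\max_{t\ge\ell_n}s_t$ w.p.a.\ 1. Matching the truncation level $\varepsilon_0(n)$, the grid mesh $\rho(n)$, the growth of $k$ (only pinned down by $k/\sqrt n\to\infty$, $k/n\to0$), and the decay of the approximation-error maxima (which for the model classes in Appendix~\ref{APARCH and ARMA--GARCH Models} is governed by the explicit geometric-plus-$n^{-1/2}$ bounds used to verify Assumption~\ref{ass:UA} there) is the delicate part of the argument; the remaining ingredients --- the boundary-index estimate, the sandwiching through Assumptions~\ref{ass:UA}--\ref{ass:estimator}, and the discretization in $x$ --- are direct transcriptions of arguments already carried out for Lemmas~\ref{lem:Bop1}, \ref{lem:unif Bop1} and Proposition~\ref{prop:Mn unif x xi}.
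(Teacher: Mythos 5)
Your skeleton coincides with the paper's. The proof there reduces the claim, exactly as you propose, to a sandwich of $I_{\{\widehat U_t>b(n/[kx])\}}$ between $C_t(x,\eta,\pm1)$ via Assumptions~\ref{ass:estimator} and~\ref{ass:UA} (Lemma~\ref{lem:Ct bound}), discards the block $t<\ell_n$ with $\ell_n=o(\sqrt k)$ by boundedness, truncates with $v_t=I_{\{m_t\le\varepsilon_0,\ s_t\le\varepsilon_0\}}$, dominates the resulting differences by interval indicators $I_{6t}(x)=I_{\{(1-\nu)b(n/[kx])<U_t\le b(n/[kx])\}}$, and finishes with a mesh-$\rho$ grid in $x$, the moment inequality of \citet{Sha93} and Markov at the fourth power (Lemma~\ref{lem:Dt unif}). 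Every ingredient you list appears in Appendix~\ref{Auxiliary Lemmas for Proof of Theorem 2}, and your treatment of the centered fluctuation and of the oscillation between grid points is sound.

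The gap is the step you yourself label ``the delicate part,'' and it is not closable as you set it up. Because you center the indicators, you must control the mean $\frac1{\sqrt k}\sum_{t}\E I_{6t}(x)\asymp\sqrt k\,\nu$, which forces $\nu=\nu(n)=o(k^{-1/2})$; but the truncation is costless only if $\max_{t\ge\ell_n}m_t$ and $\max_{t\ge\ell_n}s_t$ are below $\varepsilon_0(n)\asymp\nu(n)$ w.p.a.~1, i.e.\ $o_{\p}(k^{-1/2})$. Assumption~\ref{ass:UA} delivers only $o_{\p}(1)$, so no admissible $\varepsilon_0(n)$ exists at the stated level of generality; your scheme goes through only after importing the explicit $n^{-1/2}$-plus-geometric rates verified for the concrete models of Appendix~\ref{APARCH and ARMA--GARCH Models}, which amounts to strengthening the hypothesis rather than proving the lemma. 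You should also be aware that the paper's own proof of Lemma~\ref{lem:Dt unif} parts ways with yours at precisely this point: it keeps $\varepsilon_0$, hence $\nu$, fixed as $n\to\infty$, applies Lemma~2.3 of \citet{Sha93} directly to the \emph{uncentered} sum $\sum_t I_{6t}(l\rho)$ to obtain $\p\{A_{3n}>\varepsilon\}\le C\{\nu^2\rho^2+\nu\rho/k\}$, and lets $\nu\downarrow0$ only afterwards. Since that inequality is a bound for centered partial sums, the divergent mean you computed is exactly what this step must absorb, and the elementary estimate $k^{-2}\E\big[\sum_tI_{6t}\big]^4\ge k^{-2}\big(\sum_t\E I_{6t}\big)^4\asymp\nu^4k^2\to\infty$ shows it is not absorbed for fixed $\nu$. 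So the point where your write-up stops is the genuine mathematical content of the lemma: you need either an explicit rate condition on $m_t,s_t$ under which the $\varepsilon_0(n)\downarrow0$ matching works, or a different argument for the mean term; a verbatim transcription of Lemmas~\ref{lem:Bop1} and~\ref{lem:unif Bop1} does not supply it.
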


\begin{proof}[{\textbf{Proof of Proposition~\ref{prop:unif emp an}:}}]
Observe that for all $\xi\in\mathbb{R}$
\begin{align}
&\sqrt{k}\log\Bigg(\frac{\widehat{U}_{(\lfloor kx\rfloor+1)}}{b(n/[kx])}\Bigg)\leq\xi\quad\Longleftrightarrow \quad \widehat{U}_{(\lfloor kx\rfloor+1)}\leq e^{\xi/\sqrt{k}}b(n/[kx])\notag\\
&\Longleftrightarrow\quad \sum_{t=1}^{n}I_{\big\{\widehat{U}_t>e^{\xi/\sqrt{k}}b(n/[kx])\big\}}\leq kx \notag\\
&\Longleftrightarrow\quad \frac{1}{\sqrt{k}}\sum_{t=1}^{n}\Big[I_{\big\{\widehat{U}_t>e^{\xi/\sqrt{k}}b(n/[kx])\big\}}-\p\big\{U_t>e^{\xi/\sqrt{k}}b(n/[kx])\big\}\Big]\leq\alpha\xi x+o(1),\label{eq:fin equiv}
\end{align}
where the $o(1)$-term is uniform in $x\in[\iota,K]$ by Lemma~\ref{lem:Lem F}.

Denote by $D[0,K]$ the space of real-valued functions on $[0,K]$ that are right-continuous with limits from below \citep{Dav94}. Then, it can be proven along similar lines as in Proposition~\ref{prop:base functional convergence} that 
\[
	\frac{1}{\sqrt{k}}\sum_{t=1}^{n}\Big[I_{\big\{U_t>b(n/[kx])\big\}}-\p\Big\{U_t>b(n/[kx])\Big\}\Big]\overset{d}{\longrightarrow}B(x)\qquad \text{in }D[0,K],
\]
where $B(\cdot)$ denotes a standard Brownian motion; see also the proof of Proposition~2.1 in \citet{RS97a}. This implies by the continuous mapping theorem that
\[
	\sup_{x\in[\iota,K]}\Bigg|\frac{1}{\sqrt{k}}\sum_{t=1}^{n}\Big[I_{\big\{U_t>b(n/[kx])\big\}}-\p\Big\{U_t>b(n/[kx])\Big\}\Big]\Bigg|=O_{\p}(1).
\]
From this, Lemma~\ref{lem:univ unif} and Assumption~\ref{ass:U distr 2} we deduce that
\[
	\sup_{x\in[\iota,K]}\Bigg
	|\frac{1}{\sqrt{k}}\sum_{t=1}^{n}\Big[I_{\big\{\widehat{U}_t>e^{\xi/\sqrt{k}}b(n/[kx])\big\}}-\p\Big\{U_t>e^{\xi/\sqrt{k}}b(n/[kx])\Big\}\Big]\Bigg|=O_{\p}(1).
\]
Combining this with \eqref{eq:fin equiv}, the conclusion follows.
\end{proof}

\subsection{Proofs of Lemmas~\ref{lem:unif Bop1}--\ref{lem:univ unif}}\label{Auxiliary Lemmas for Proof of Theorem 2}

\begin{proof}[{\textbf{Proof of Lemma~\ref{lem:unif Bop1}:}}]
It suffices to consider the supremum over $\vxi\in[-K,0]^2$; the cases where $\vxi$ lies in the other quadrants can be dealt with similarly. Using the same grid $\mathcal{L}$ as in the proof of Proposition~\ref{prop:Mn unif x xi}, we get that
\begin{align*}
\sup_{\vxi\in[-K,0]^2}&\sup_{\vx\in[0,K]^2}\left|\frac{1}{k}\sum_{t=d+1}^{n}B_t(\vx,\vxi,\eta,\eta_0)\right|\leq\sup_{\vx\in[0,K]^2}\left|\frac{1}{k}\sum_{t=d+1}^{n}B_t(\vx,-(K,K)^\prime,\eta,\eta_0)\right|\\
 &\leq\max_{(l_1,l_2)^\prime\in\mathcal{L}}\left|\frac{1}{k}\sum_{t=d+1}^{n}B_t((l_1, l_2)^\prime\rho,-(K,K)^\prime,\eta,\eta_0)\right|\\
&\hspace{2cm}+\sup_{|\vx_1-\vx_2|\leq\rho}\left|\frac{1}{k}\sum_{t=d+1}^{n}B_t(\vx_1,-(K,K)^\prime,\eta,\eta_0)-B_t(\vx_2,-(K,K)^\prime,\eta,\eta_0)\right|\\
&=:A_{2n}+B_{2n}.
\end{align*}
By subadditivity and Lemma~\ref{lem:Bop1}, we obtain that
\begin{align*}
	\p\{\sqrt{n}A_{2n}\geq\varepsilon\} &= \p\Bigg\{\bigcup_{(l_1,l_2)^\prime\in\mathcal{L}}\Big\{\sqrt{n}\Big|\frac{1}{k}\sum_{t=d+1}^{n}B_t\big((l_1,l_2)^\prime\rho,-(K,K)^\prime,\eta,\eta_0\big)\Big|\geq\varepsilon\Big\}\Bigg\}\\
	&\leq\sum_{(l_1,l_2)^\prime\in\mathcal{L}}\p\Bigg\{\sqrt{n}\Big|\frac{1}{k}\sum_{t=d+1}^{n}B_t\big((l_1,l_2)^\prime\rho,-(K,K)^\prime,\eta,\eta_0\big)\Big|\geq\varepsilon\Bigg\}=o(1).
\end{align*}
Thus, $A_{2n}=o_{\p}(n^{-1/2})$.

To show that $B_{2n}=o_{\p}(n^{-1/2})$, use a monotonicity argument to deduce that
\begin{equation*}
B_{2n} \leq \max_{(l_1,l_2)^\prime\in\widetilde{\mathcal{L}}}\left|\frac{1}{k}\sum_{t=d+1}^{n}B_t\big((l_1+2,l_2+2)^\prime\rho,-(K,K)^\prime,\eta,\eta_0\big)-B_t\big((l_1,l_2)^\prime\rho,-(K,K)^\prime,\eta,\eta_0\big)\right|,
\end{equation*}
where $\widetilde{\mathcal{L}}$ is again defined as in the proof of Proposition~\ref{prop:Mn unif x xi}. Let $\eta_0=1$; the case $\eta_0=-1$ can be treated analogously. Define $w_t$ and $\nu=\nu(\varepsilon_0)>0$ as in the proof of Lemma~\ref{lem:Bop1}. In particular, if $w_t=1$ then
\begin{equation*}
	1-\nu/2<(1+\s_{t})^{-1}\leq1\qquad\text{and}\qquad 1-\nu/2<(1+\s_{t-d})^{-1}\leq1.
\end{equation*}
For sufficiently large $n$, such that $(1-\nu/2)<\big[1-e^{K/\sqrt{k}}\varepsilon_0/b(n/[k K])\big]\leq1$,
\begin{align*}
	&w_t\big|B_t\big((l_1+2,l_2+2)^\prime\rho,-(K,K)^\prime,\eta,\eta_0=1\big)-B_t\big((l_1,l_2)^\prime\rho,-(K,K)^\prime,\eta,\eta_0=1\big)\big| \\
	&=w_t\Big|I_{\Big\{U_t> e^{-K/\sqrt{k}}(1+\s_{t})^{-1}\big(1-e^{K/\sqrt{k}}\frac{m_{t}}{b(n/[k(l_1+2)\rho])}\big)b\big(\frac{n}{k[l_1+2]\rho}\big),}\\ 
	& \hspace{1.5cm} _{U_{t-d}> e^{-K/\sqrt{k}}(1+\s_{t-d})^{-1}\big(1-e^{K/\sqrt{k}}\frac{m_{t-d}}{b(n/[k(l_2+2)\rho])}\big)b\big(\frac{n}{k[l_2+2]\rho}\big)\Big\}}\\
	&\hspace{8.4cm}-I_{\Big\{U_t> e^{-K/\sqrt{k}}b\big(\frac{n}{k[l_1+2]\rho}\big),\ U_{t-d}> e^{-K/\sqrt{k}}b\big(\frac{n}{k[l_2+2]\rho}\big)\Big\}}\\
	&\hspace{0.5cm}-I_{\Big\{U_t> e^{-K/\sqrt{k}}(1+\s_{t})^{-1}\big(1-e^{K/\sqrt{k}}\frac{m_{t}}{b(n/[kl_1\rho])}\big)b\big(\frac{n}{kl_1\rho}\big),}\\
	&\hspace{1.5cm} _{U_{t-d}> e^{-K/\sqrt{k}}(1+\s_{t-d})^{-1}\big(1-e^{K/\sqrt{k}}\frac{m_{t-d}}{b(n/[kl_2\rho])}\big)b\big(\frac{n}{kl_2\rho}\big)\Big\}}+I_{\Big\{U_t> e^{-K/\sqrt{k}}b\big(\frac{n}{kl_1\rho}\big),\ U_{t-d}> e^{-K/\sqrt{k}}b\big(\frac{n}{kl_2\rho}\big)\Big\}}\Big|\\
	&\leq \Big|I_{\Big\{U_t> (1-\nu)b\big(\frac{n}{k[l_1+2]\rho}\big),\ U_{t-d}> (1-\nu)b\big(\frac{n}{k[l_2+2]\rho}\big)\Big\}}-I_{\Big\{U_t> b\big(\frac{n}{k[l_1+2]\rho}\big),\ U_{t-d}> b\big(\frac{n}{k[l_2+2]\rho}\big)\Big\}}\Big|\\
	&\hspace{1cm}+\Big|I_{\Big\{U_t> (1-\nu)b\big(\frac{n}{kl_1\rho}\big),\ U_{t-d}> (1-\nu)b\big(\frac{n}{kl_2\rho}\big)\Big\}}-I_{\Big\{U_t> b\big(\frac{n}{kl_1\rho}\big),\ U_{t-d}> b\big(\frac{n}{kl_2\rho}\big)\Big\}}\Big|\\
	&=:I_{4t}(l_1,l_2)+I_{5t}(l_1,l_2).
\end{align*}
Since $w_{\ell_{n}}=\ldots=w_n=1$ w.p.a.~1 as $n\to\infty$, to prove $B_{2n}=o_{\p}(n^{-1/2})$, it suffices to show that
\begin{equation}\label{eq:(C.21)}
	\max_{(l_1,l_2)^\prime\in\widetilde{\mathcal{L}}}\frac{1}{k}\sum_{t=d+1}^{n}I_{jt}(l_1,l_2)=o_{\p}(n^{-1/2})\qquad(j=4,5).
\end{equation}
We only show this claim for $j=5$, as that for $j=4$ can be proven along similar lines. By subadditivity and Markov's inequality,
\begin{align*}
\p\Bigg\{\sqrt{n}\max_{(l_1,l_2)^\prime\in\widetilde{\mathcal{L}}}\frac{1}{k}\sum_{t=d+1}^{n}I_{5t}(l_1,l_2)\geq\varepsilon\Bigg\}&=\p\Bigg\{\bigcup_{(l_1,l_2)^\prime\in\widetilde{\mathcal{L}}}\Big\{\frac{\sqrt{n}}{k}\sum_{t=d+1}^{n}I_{5t}(l_1,l_2)\geq\varepsilon\Big\}\Bigg\}\\
&=\sum_{(l_1,l_2)^\prime\in\widetilde{\mathcal{L}}}\p\Big\{\frac{\sqrt{n}}{k}\sum_{t=d+1}^{n}I_{5t}(l_1,l_2)\geq\varepsilon\Big\}\\
&\leq\sum_{(l_1,l_2)^\prime\in\widetilde{\mathcal{L}}}\varepsilon^{-4}\frac{n^2}{k^4}\E\Big[\sum_{t=d+1}^{n}I_{5t}(l_1,l_2)\Big]^4.
\end{align*}
Use Lemma~\ref{lem:Lem F} to conclude that for sufficiently small $\nu>0$,
\begin{align*}
	\E[I_{5t}(l_1,l_2)]&=\p\Big\{U_t> (1-\nu)b\big(\frac{n}{kl_1\rho}\big)\Big\}\p\Big\{U_{t-d}> (1-\nu)b\big(\frac{n}{kl_2\rho}\big)\Big\}\\
	&\hspace{5cm}-\p\Big\{U_t> b\big(\frac{n}{kl_1\rho}\big)\Big\}\p\Big\{U_{t-d}> b\big(\frac{n}{kl_2\rho}\big)\Big\}\\
	&=\frac{k^2}{n^2}\rho^2 l_1 l_2\Big[1+\frac{\alpha\nu}{(1+\widetilde{\nu})^{1+\alpha}}+\nu o(1)\Big]^2-\frac{k^2}{n^2}\rho^2 l_1 l_2\\
	&\leq C\frac{k^2}{n^2}\rho^2 l_1 l_2\Big[\frac{\alpha\nu}{(1+\widetilde{\nu})^{1+\alpha}}+\nu o(1)\Big]\\
	&\leq C\frac{k^2}{n^2}\rho^2\nu.
\end{align*}
Arguing similarly as after \eqref{eq:Shao}, we obtain
\[
	\E\Big[\sum_{t=d+1}^{n}I_{5t}(l_1,l_2)\Big]^4\leq C \frac{k^4}{n^2}\rho^4\nu+C\frac{k^2}{n}\rho^2\nu.
\]
Using Assumption~\ref{ass:k} and that the cardinality of $\widetilde{\mathcal{L}}$ is of the order $\rho^{-2}$, we get from this that
\[
\p\Bigg\{\sqrt{n}\max_{(l_1,l_2)^\prime\in\widetilde{\mathcal{L}}}\frac{1}{k}\sum_{t=d+1}^{n}I_{5t}(l_1,l_2)\geq\varepsilon\Bigg\} \leq C\rho^2\nu + C\frac{n}{k^2}\nu\leq C\rho^2\nu,
\]
which can be made arbitrarily small by a suitable choice of $\nu$. This establishes \eqref{eq:(C.21)}, concluding the proof.
\end{proof}

The following lemmas are required for the proof of Lemma~\ref{lem:univ unif}. Set
\begin{align*}
C_t(x,\vtheta)     &= I_{\big\{\widehat{U}_t(\vtheta)>b(n/[kx])\big\}},\\
C_t(x)     				 &= I_{\big\{U_t>b(n/[kx])\big\}},\\
C_t(x,\eta,\eta_0) &= I_{\left\{U_t[1+\eta_0\s_{t}]+\eta_0m_t > b(n/[kx])\right\}},\qquad\eta_0\in\{-1, 1\},\\
D_t(x,\eta,\eta_0) &= C_t(x,\eta,\eta_0) - C_t(x),
\end{align*}
where $m_t=m_{n,t}(\eta)\geq0$ and $s_t=s_{n,t}(\eta)\geq0$ are from Assumption~\ref{ass:UA}.

\begin{lem}\label{lem:Ct bound}
Let $\eta>0$. Then, w.p.a.~1, as $n\to\infty$,
\begin{equation*}
	C_t(x,\eta,-1) \leq C_t(x,\vtheta) \leq C_t(x, \eta,1)
\end{equation*}
for all $\vtheta\in N_n(\eta)$ and $t=\ell_{n},\ldots,n$. 
\end{lem}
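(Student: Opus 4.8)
The plan is to read the claim off directly from the uniform approximability in Assumption~\ref{ass:UA}, in exactly the way Lemma~\ref{lem:wpa1} was obtained, the only difference being that here a single (absolute) residual is compared to a threshold rather than a lagged pair. First I would fix $\eta>0$ and invoke Assumption~\ref{ass:UA} to obtain an event $\Omega_n$ with $\p(\Omega_n)\to1$ on which
\[
	U_t(1-s_t)-m_t\ \leq\ \widehat{U}_t(\vtheta)\ \leq\ U_t(1+s_t)+m_t
\]
holds simultaneously for every $t\in\{\ell_{n},\ldots,n\}$ and every $\vtheta\in N_n(\eta)$, where $m_t=m_{n,t}(\eta)\geq0$ and $s_t=s_{n,t}(\eta)\geq0$ are the random variables supplied by that assumption. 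Everything below is argued on $\Omega_n$, and the conclusions are inherited verbatim for all admissible $t$ and $\vtheta$ because the inequality above is itself uniform in $t$ and $\vtheta$ on $\Omega_n$.

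For the upper bound $C_t(x,\vtheta)\leq C_t(x,\eta,1)$ I would argue pointwise: if $C_t(x,\vtheta)=0$ there is nothing to prove since $C_t(x,\eta,1)\geq0$; if $C_t(x,\vtheta)=1$, i.e.\ $\widehat{U}_t(\vtheta)>b(n/[kx])$, then the right-hand inequality gives $U_t(1+s_t)+m_t\geq\widehat{U}_t(\vtheta)>b(n/[kx])$, which is precisely the event defining $C_t(x,\eta,1)$, so $C_t(x,\eta,1)=1$. Equivalently, $\{\widehat{U}_t(\vtheta)>b(n/[kx])\}\subseteq\{U_t(1+s_t)+m_t>b(n/[kx])\}$. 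The lower bound $C_t(x,\eta,-1)\leq C_t(x,\vtheta)$ is the mirror image: if $C_t(x,\eta,-1)=1$, i.e.\ $U_t(1-s_t)-m_t>b(n/[kx])$, then the left-hand inequality gives $\widehat{U}_t(\vtheta)\geq U_t(1-s_t)-m_t>b(n/[kx])$, so $C_t(x,\vtheta)=1$; again this is the set inclusion $\{U_t(1-s_t)-m_t>b(n/[kx])\}\subseteq\{\widehat{U}_t(\vtheta)>b(n/[kx])\}$. Chaining the two inclusions on $\Omega_n$ yields the sandwich, and letting $n\to\infty$ gives the "w.p.a.~1" conclusion.

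There is essentially no obstacle: the statement is the one-coordinate analogue of Lemma~\ref{lem:wpa1} and follows immediately from Assumption~\ref{ass:UA}. The only point deserving a word of care is that the bounds in Assumption~\ref{ass:UA} are required to hold on a single event whose probability tends to one, uniformly over $t=\ell_{n},\ldots,n$ and over $\vtheta\in N_n(\eta)$; since the indicator comparison is a deterministic consequence of those bounds on that event, the uniformity over $t$ and $\vtheta$ transfers automatically, and no separate handling of the (at most $o_{\p}(1)$) sizes of $m_t$ and $s_t$ is needed for this particular lemma—those magnitudes only enter later, when $D_t(x,\eta,\eta_0)$ is shown to be negligible.
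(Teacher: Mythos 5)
Your argument is correct and is exactly the route the paper takes: the paper omits the proof of Lemma~\ref{lem:Ct bound}, noting only that it closely resembles that of Lemma~\ref{lem:wpa1}, which itself is obtained by reading the indicator inequalities directly off Assumption~\ref{ass:UA} on the high-probability event where the uniform sandwich bounds hold. Your write-up simply makes that deduction explicit for the single-coordinate case, so there is nothing to add.
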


\begin{proof}
The proof closely resembles that of Lemma~\ref{lem:wpa1} and, hence, is omitted.
\end{proof}

\begin{lem}\label{lem:Dt unif}
Let $\eta>0$ and $\eta_0\in\{-1, 1\}$. Then, for any $0<\iota<K<\infty$,
\[
	\sup_{x\in[\iota,K]}\Bigg|\frac{1}{\sqrt{k}}\sum_{t=1}^{n}D_t(x,\eta,\eta_0)\Bigg|=o_{\p}(1).
\]
\end{lem}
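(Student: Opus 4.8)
The plan is to follow the template of the proofs of Lemmas~\ref{lem:Bop1} and~\ref{lem:unif Bop1}, adapted to the single-threshold increments $D_t(x,\eta,\eta_0)=C_t(x,\eta,\eta_0)-C_t(x)$. First I would introduce the weight $w_t=I_{\{m_t\le\varepsilon_0,\ s_t\le\varepsilon_0\}}$ for a small $\varepsilon_0>0$. Using that $b(n/[kx])\to\infty$ uniformly in $x\in[\iota,K]$ by Assumption~\ref{ass:U distr 2}, on the event $\{w_t=1\}$ and for $n$ large there exists $\nu=\nu(\varepsilon_0)>0$, with $\nu\downarrow0$ as $\varepsilon_0\downarrow0$, such that, for both values $\eta_0\in\{-1,1\}$,
\[
|D_t(x,\eta,\eta_0)|\le E_t(x,\nu):=I_{\{(1-\nu)b(n/[kx])<U_t\le(1+\nu)b(n/[kx])\}}.
\]
By Assumption~\ref{ass:UA}, $w_{\ell_n}=\cdots=w_n=1$ w.p.a.~1, so for $t=\ell_n,\ldots,n$ the $D_t$ are dominated (in absolute value) by the i.i.d.\ shell indicators $E_t(x,\nu)$, while the initial block $t=d+1,\ldots,\ell_n-1$ contributes at most $\ell_n/\sqrt k=o(1)$ once $\ell_n=o(k/\sqrt n)$ (Assumption~\ref{ass:k}). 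It then remains to show $\sup_{x\in[\iota,K]}\frac{1}{\sqrt k}\sum_{t}E_t(x,\nu)=o_{\p}(1)$ for a suitable choice of $\varepsilon_0$ (and hence $\nu$).

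For the supremum over $x$ I would mimic the grid argument used for $B_{1n}$ in the proof of Proposition~\ref{prop:Mn unif x xi}. Cover $[\iota,K]$ by a $\rho$-grid $\iota=x_0<\cdots<x_N=K$ with $N=O(\rho^{-1})$. Since $b(\cdot)$ is non-decreasing, $E_t(x,\nu)\le I_{\{U_t>(1-\nu)b(n/[kx])\}}-I_{\{U_t>(1+\nu)b(n/[kx])\}}$ is a difference of two indicators monotone in $x$, so for $x\in[x_j,x_{j+1}]$ we get $E_t(x,\nu)\le I_{\{(1-\nu)b(n/[kx_{j+1}])<U_t\le(1+\nu)b(n/[kx_j])\}}$, and hence $\sup_{x\in[\iota,K]}\frac{1}{\sqrt k}\sum_t E_t(x,\nu)\le\max_{0\le j<N}\frac{1}{\sqrt k}\sum_t I_{\{(1-\nu)b(n/[kx_{j+1}])<U_t\le(1+\nu)b(n/[kx_j])\}}$. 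By Lemma~\ref{lem:Lem F}, each of these intervals has probability at most $C(k/n)(\rho+\nu)$, uniformly in $j$. Markov's inequality, the fourth-moment bound for sums of independent indicators (Lemma~2.3 of \citet{Sha93} with $q=4$), and a union bound over the $N=O(\rho^{-1})$ grid intervals then yield a bound on $\p\{\sqrt k\sup_{x\in[\iota,K]}\frac1k\sum_t E_t(x,\nu)\ge\varepsilon\}$ of the order $\varepsilon^{-4}\rho^{-1}[(\rho+\nu)^2+k^{-1}(\rho+\nu)+(\sqrt k(\rho+\nu))^4]$, after using Assumption~\ref{ass:k}.

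The principal obstacle — and the point where the argument is more delicate than in the two-lag Lemma~\ref{lem:Bop1}, where the relevant exceedance probability is of order $(k/n)^2$ — is the bias term $(\sqrt k(\rho+\nu))^4$: a single-threshold shell indicator has probability of order $k/n$, so the centered-sum part is harmless but the mean $n\cdot C(k/n)(\rho+\nu)\asymp k(\rho+\nu)$, divided by $\sqrt k$, is of order $\sqrt k(\rho+\nu)$, which does not vanish for fixed $\rho,\nu$. This is handled by letting $\varepsilon_0=\varepsilon_0(n)\downarrow0$ and $\rho=\rho(n)\downarrow0$: since $\max_{t=\ell_n,\ldots,n}m_t=o_{\p}(1)$ and $\max_{t=\ell_n,\ldots,n}s_t=o_{\p}(1)$ by Assumption~\ref{ass:UA}, one may choose $\varepsilon_0(n)\downarrow0$ with $\max_t m_t\le\varepsilon_0(n)$ and $\max_t s_t\le\varepsilon_0(n)$ w.p.a.~1, so that $\nu=\nu(n)\downarrow0$, and then take $\rho=\rho(n)\downarrow0$ slowly relative to $\nu(n)$ and $k$ so that $\rho^{-1}(\rho+\nu)^2\to0$ and $\rho^{-1}(\sqrt k(\rho+\nu))^4\to0$; this drives the probability bound to zero. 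The delicate point is checking that $\nu(n)$, inherited from the $o_{\p}(1)$-rate of the residual perturbations, shrinks quickly enough relative to $\sqrt k$; this holds comfortably in the APARCH and ARMA--GARCH models of Appendix~\ref{APARCH and ARMA--GARCH Models}, where the perturbations are $O_{\p}(n^{-1/2})=o_{\p}(k^{-1/2})$ since $k/n\to0$. Together with the negligible initial block this gives $\sup_{x\in[\iota,K]}|\frac{1}{\sqrt k}\sum_{t=1}^n D_t(x,\eta,\eta_0)|=o_{\p}(1)$, as claimed.
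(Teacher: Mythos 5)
Your construction coincides with the paper's own proof up to the very last step: the truncation indicator $w_t$ (the paper's $v_t$), the domination of $|D_t(x,\eta,\eta_0)|$ by a shell indicator of relative width $\nu=\nu(\varepsilon_0)$, the $\rho$-grid with monotone bracketing in $x$, Markov's inequality combined with the fourth-moment bound of Shao, and the $O(\ell_n/\sqrt{k})$ treatment of the initial block are all exactly the paper's ingredients. The genuine gap is in your proposed resolution of the ``bias'' term. You are right that each bracketing indicator has success probability of order $(k/n)(\rho+\nu)$, so that $\tfrac{1}{\sqrt{k}}\sum_t E_t$ has mean of order $\sqrt{k}(\rho+\nu)$, and that killing it requires $\nu=\nu(n)=o(k^{-1/2})$ (in fact $o(k^{-2/3})$ once you take fourth powers and union-bound over $O(\rho^{-1})$ cells with $\rho\asymp\nu$). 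But Assumption~\ref{ass:UA} only delivers $\max_{t=\ell_n,\ldots,n}s_t=o_{\p}(1)$ and $\max_t m_t=o_{\p}(1)$ with no rate: a diagonal argument yields \emph{some} deterministic $\varepsilon_0(n)\downarrow0$ with $\p\{\max_t s_t>\varepsilon_0(n)\}\to0$, but you cannot additionally force $\varepsilon_0(n)=o(k^{-1/2})$. The appeal to Appendix~\ref{APARCH and ARMA--GARCH Models} does not close this either: there each \emph{individual} perturbation is $O_{\p}(n^{-1/2})$ (e.g.\ $s_{2t}=\eta\Delta_{n,t}/\sqrt{n}$ with $\Delta_{n,t}=O_{\p}(1)$ for fixed $t$), but what enters $w_t$ is $\max_{t\le n}s_t$, which under a $v_0$-th moment bound on $\Delta_{n,t}$ is only $O_{\p}(n^{1/v_0-1/2})$ and need not be $o_{\p}(k^{-1/2})$. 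So the step you label ``delicate'' is in fact unproved, and it is the crux of your argument.

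For comparison, the paper does not send $\nu$ to zero with $n$ at all: it keeps $\varepsilon_0$ (hence $\nu$) fixed, applies Shao's Lemma~2.3 to the un-centered sum to get $\E[\sum_t I_{6t}]^4\le Cn^2\{\E[I_{6t}^2]\}^2+Cn\E[I_{6t}^4]$, concludes $\p\{A_{3n}>\varepsilon\}\le C\{\nu^2\rho+\nu/k\}$, and only then lets $\varepsilon_0\downarrow0$; your observation that the omitted mean contribution $(n\E[I_{6t}])^4/k^2\asymp\nu^4\rho^4k^2$ is not negligible for fixed $\nu$ is a legitimate concern about that step, so your instinct to confront the bias head-on is sound. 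The workable repair, however, runs through the \emph{average} rather than the maximum of the perturbations: the relevant quantity is the aggregate shell mass $\tfrac{k}{n}\sum_t\big(s_t+m_t/b(n/[kx])\big)$, which in the verified models is $O_{\p}(k/\sqrt{n})$ because $\tfrac1n\sum_t s_t=O_{\p}(n^{-1/2})$ even though $\max_t s_t$ is not; dividing by $\sqrt{k}$ then gives $O_{\p}(\sqrt{k/n})=o_{\p}(1)$. That control is an assumption beyond Assumption~\ref{ass:UA}, and your proof as written neither states nor establishes it.
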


\begin{proof}
The outline of the proof is similar to that of Lemma~\ref{lem:Bop1}. Let $\ell_{n}\to\infty$ with $\ell_{n}=o(\sqrt{k})$. Consider the case $\eta_0=1$; the case $\eta_0=-1$ can be dealt with similarly. Define
\[
	v_t=I_{\left\{m_{t}\leq\varepsilon_0,\ s_{t}\leq\varepsilon_0\right\}},\qquad\varepsilon_0>0.
\]
If $v_t=1$, there exists $\nu=\nu(\varepsilon_0)>0$, such that for $\s_{t}$ from Assumption~\ref{ass:UA} it holds that
\begin{equation*}
	1-\nu/2<(1+s_{t})^{-1}\leq1.
\end{equation*}
Then, for sufficiently large $n$, such that $1-\nu/2<1-\varepsilon_0/b(n/[kx])\leq1$ for all $x\in[\iota,K]$, we can bound
\begin{align*}
\Big|\frac{1}{\sqrt{k}}\sum_{t=\ell_{n}}^{n}v_tD_t(x,\eta,\eta_0=1)\Big|&=\Big|\frac{1}{\sqrt{k}}\sum_{t=\ell_{n}}^{n}v_t\big[I_{\left\{U_t>(1+\s_{t})^{-1}(1-m_t/b(n/[kx])) b(n/[kx])\right\}}-I_{\left\{U_t>b(n/[kx])\right\}}\big]\\
&\leq\frac{1}{\sqrt{k}}\sum_{t=\ell_{n}}^{n}\big[I_{\left\{U_t>(1-\nu) b(n/[kx])\right\}}-I_{\left\{U_t>b(n/[kx])\right\}}\big]\\
&=\frac{1}{\sqrt{k}}\sum_{t=\ell_{n}}^{n}I_{\left\{U_t\in\big( (1-\nu)b(n/[kx]),\ b(n/[kx])\big]\right\}}\\
&=: \frac{1}{\sqrt{k}}\sum_{t=\ell_{n}}^{n}I_{6t}(x).
\end{align*}
For $\rho>0$, define $\mathcal{M}=\{l\in\mathbb{N}_0\ :\ l\in[0, K/\rho]\}$. Then,
\begin{align*}
\frac{1}{\sqrt{k}}\sum_{t=\ell_{n}}^{n}I_{6t}(x) &= \max_{l\in\mathcal{M}}\frac{1}{\sqrt{k}}\sum_{t=\ell_{n}}^{n}I_{6t}(l\rho)+\sup_{|x_1-x_2|\leq\rho}\Big|\frac{1}{\sqrt{k}}\sum_{t=\ell_{n}}^{n}[I_{6t}(x_1)-I_{6t}(x_2)]\Big|\\
&=:A_{3n} + B_{3n}.
\end{align*}
Using in turn subadditivity, Markov's inequality, and Lemma~2.3 of \citet{Sha93}, we get that
\begin{align}
	\p\{A_{3n}>\varepsilon\} &\leq \sum_{l\in\mathcal{M}}\p\Big\{\frac{1}{\sqrt{k}}\sum_{t=\ell_{n}}^{n}I_{6t}(l\rho)>\varepsilon\Big\}\notag\\
	&\leq \sum_{l\in\mathcal{M}}\frac{1}{k^2}\E\Bigg[\sum_{t=\ell_{n}}^{n}I_{6t}(l\rho)\Bigg]^4\notag\\
	&\leq \sum_{l\in\mathcal{M}}\frac{C}{k^2}\Big\{n^2\big\{\E[I_{6t}^{2}(l\rho)]\big\}^2 + n\E[I_{6t}^4(l\rho)]\Big\}.\label{eq:(10.1)}
\end{align}
Since $I_{6t}(l\rho)\in\{0,1\}$, Lemma~\ref{lem:Lem F} implies
\begin{align}
 \E[I_{6t}^{4}(l\rho)] &= \E[I_{6t}^{2}(l\rho)] = \E[I_{6t}(l\rho)]\notag\\
&= \p\Big\{U_t\in\big((1-\nu)b(n/[kl\rho]),\ b(n/[kl\rho])\big]\Big\}\notag\\
&= \p\Big\{U_t>(1-\nu)b(n/[kl\rho])\Big\} - \p\Big\{U_t>b(n/[kl\rho])\Big\}\notag\\
&= \frac{kl\rho}{n}\Big\{1+\frac{\alpha\nu}{(1+\widetilde{\nu})^{1+\alpha}}+\nu o(1)-1\Big\}\notag\\
&= \frac{kl\rho}{n}\frac{\alpha\nu}{(1+\widetilde{\nu})^{1+\alpha}}+\nu l\rho o\Big(\frac{k}{n}\Big)\label{eq:(10.2)}\\
&\leq C\nu\rho\frac{k}{n}.\notag
\end{align}
Plugging this into \eqref{eq:(10.1)} gives
\begin{align*}
\p\{A_{3n}>\varepsilon\} &\leq \sum_{l\in\mathcal{M}}C\left\{\frac{n^2}{k^2}[\nu\rho]^2\frac{k^2}{n^2}+\frac{n}{k^2}\nu\rho\frac{k}{n}\right\}\\
&\leq C \left\{\nu^2\rho^2 + \frac{\nu}{k}\rho\right\},
\end{align*}
which can be made arbitrarily small by a suitable choice of $\nu$. Thus, $A_{3n}=o_{\p}(1)$.

To show $B_{3n}=o_{\p}(1)$, define
\[
	\widetilde{\mathcal{M}} = \{l\in\mathbb{N}_0\ :\ [l, l+2]\subset[0,K/\rho]\}.
\]
Use a monotonicity argument to deduce that 
\[
	B_{3n}\leq\max_{l\in\widetilde{\mathcal{M}}}\Big|\frac{1}{\sqrt{k}}\sum_{t=\ell_{n}}^{n}\big[I_{6t}([l+2]\rho)-I_{6t}(l\rho)\big]\Big|=:\max_{l\in\widetilde{\mathcal{M}}}\Big|\frac{1}{\sqrt{k}}\sum_{t=1}^{n}I_{7t}(l,\rho)\Big|.
\]
We get by Markov's inequality that
\begin{equation}\label{eq:(11.1)}
	\p\Bigg\{\Big|\frac{1}{\sqrt{k}}\sum_{t=\ell_{n}}^{n}I_{7t}(l,\rho)\Big|>\varepsilon\Bigg\}\leq\frac{\varepsilon^{-4}}{k^2}\E\Big[\sum_{t=\ell_{n}}^{n}I_{7t}(l,\rho)\Big]^4.
\end{equation}
Since, by \eqref{eq:(10.2)},
\begin{align*}
\E\Big[I_{7t}(l,\rho)\Big] &= 2\rho\frac{k}{n}\frac{\alpha\nu}{(1+\widetilde{\nu})^{1+\alpha}}+2\rho\nu o(k/n)\leq C\rho\nu\frac{k}{n},
\end{align*}
we obtain from Lemma~2.3 of \citet{Sha93} that
\begin{align*}
 \E\Big[\sum_{t=\ell_{n}}^{n}I_{7t}(l,\rho)\Big]^4 &\leq C\left\{n^2  \big\{\E[I_{7t}(l,\rho)]^2\big\}^2 + n  \E[I_{7t}(l,\rho)]^4 \right\} \\
&\leq C \Big\{ n^2\Big[\rho\nu\frac{k}{n}\Big]^2 + n\Big[C\rho\nu\frac{k}{n}\Big]\Big\}. 
\end{align*}
Thus, the right-hand side of \eqref{eq:(11.1)} can be bounded by $C\rho^2\nu^2+\frac{C}{k}\rho\nu$. Using this, we obtain
\begin{align*}
\p\{B_{3n}>\varepsilon\} &\leq \sum_{l\in\widetilde{\mathcal{M}}}\p\Big\{\Big|\frac{1}{\sqrt{k}}\sum_{t=\ell_n}^{n}I_{7t}(l,\rho)\Big|>\varepsilon\Big\}\\
&= \sum_{l\in\widetilde{\mathcal{M}}} \Big[C\rho^2\nu^2+\frac{C}{k}\rho\nu\Big]\\
&\leq C\{\rho\nu^2+\nu/k\},
\end{align*}
where we have used in the final step that the cardinality of $\widetilde{\mathcal{M}}$ is of the order $1/\rho$. This proves 
\[
	\Big|\frac{1}{\sqrt{k}}\sum_{t=\ell_{n}}^{n}v_tD_t(x,\eta,\eta_0)\Big| = o_{\p}(1),
\]
since $\nu>0$ can be chosen arbitrarily small. By Assumption~\ref{ass:UA}, we have $v_{\ell_{n}}=\ldots=v_n=1$ w.p.a.~1, as $n\to\infty$. Hence, the above display also implies that 
\[
	\Big|\frac{1}{\sqrt{k}}\sum_{t=\ell_{n}}^{n}D_t(x,\eta,\eta_0)\Big| = o_{\p}(1).
\]
Moreover, by boundedness of the $D_t(x,\eta,\eta_0)$, we easily get $\Big|\frac{1}{\sqrt{k}}\sum_{t=1}^{\ell_{n}-1}D_t(x,\eta,\eta_0)\Big|=O(\ell_{n}/\sqrt{k})=o(1)$. Putting these two results together, the conclusion follows.
\end{proof}

Now, we are in a position to prove Lemma~\ref{lem:univ unif}.

\begin{proof}[{\textbf{Proof of Lemma~\ref{lem:univ unif}:}}]
The proof is again similar to that of Proposition~\ref{prop:hat convergence}, where now we use Lemma~\ref{lem:Ct bound} (instead of Lemma~\ref{lem:wpa1}) and Lemma~\ref{lem:Dt unif} (instead of Lemma~\ref{lem:Bop1}).
\end{proof}

\section{Simulation Results for Varying $D$}\label{Simulation Results for Varying $D$}

In light of the simulation results in Section~\ref{Simulations}, we pick $k=\lfloor 0.11\cdot n^{0.99}\rfloor$. We do so to focus here on the sensitivity of our tests with respect to the number of included lags $D\in\{1,\ldots,10\}$. We reconsider the models of Section~\ref{Simulations} with misspecified volatility dynamics (in Section~\ref{App: Misspecified Volatility}) and with misspecified innovations (in Section~\ref{App: Misspecified Innovations}).

\subsection{Misspecified Volatility}\label{App: Misspecified Volatility}

\begin{table}[t!]
	\begin{center}
		\begin{tabular}{lllrrrrrrrrrr}
			\toprule
$n$	& Test    &	$\alpha$&\multicolumn{10}{c}{$D$}  \\[0.25ex]
\cline{4-13}\\[-2.25ex] 
		&         & 				& 1 & 2 & 3 & 4 & 5 & 6 & 7 & 8 & 9 & 10 \\
\midrule	
500 & $\mathcal{P}_n^{(D)}$	& 1\%		  &	0.3   &  0.5   &  0.6   &   0.5   &   0.5   &    0.6   &  0.6   &   0.5   &   0.4   &   0.4	\\
		&          							& 5\%	    &	2.6   &  2.5   &  2.5   &   2.5   &   2.2   &    2.3   &  2.1   &   1.9   &   1.8   &   1.7  \\
		&          							& 10\%    &	8.1   &  5.4   &  5.3   &   4.8   &   5.0   &    4.9   &  4.5   &   4.2   &   3.8   &   3.7  \\[0.25ex]
			\cline{2-13}\\[-2.25ex]                                                                                                      
		& $\mathcal{F}_n^{(D)}$	& 1\%		  &	0.5   &  0.4   &  0.4   &   0.3   &   0.3   &    0.4   &  0.3   &   0.2   &   0.2   &   0.2	\\
		&          							& 5\%	    &	2.0   &  1.7   &  1.8   &   1.8   &   1.6   &    1.6   &  1.4   &   1.2   &   1.0   &   0.8  \\
		&          							& 10\%    &	4.6   &  3.8   &  3.4   &   3.4   &   3.1   &    3.2   &  2.9   &   2.5   &   2.1   &   1.9  \\[0.25ex]
			\cline{2-13}\\[-2.25ex]                                                                                                      
		& $\LB_n^{(D)}$					& 1\%		  &	1.2   &  1.7   &  2.3   &   2.8   &   3.0   &    3.2   &  3.4   &   3.5   &   3.4   &   3.5	\\
		&          							& 5\%	    &	3.2   &  3.9   &  4.8   &   5.3   &   5.8   &    6.1   &  6.5   &   6.6   &   6.9   &   6.9  \\
		&          							& 10\%    &	5.7   &  6.1   &  7.3   &   7.4   &   8.2   &    8.5   &  9.0   &   9.4   &   9.3   &   9.8  \\
		\midrule                                                                                                                       
1000& $\mathcal{P}_n^{(D)}$	& 1\%		  &	0.8   &  1.0   &  0.9   &   0.9   &   0.7   &    0.5   &  0.5   &   0.6   &   0.5   &   0.5  \\
		&          							& 5\%	    &	4.8   &  3.4   &  3.5   &   3.3   &   3.1   &    2.8   &  2.7   &   2.5   &   2.2   &   2.0  \\
		&          							& 10\%    &	9.7   &  7.0   &  6.5   &   6.1   &   5.9   &    5.7   &  5.4   &   4.9   &   4.4   &   4.0  \\[0.25ex]
			\cline{2-13}\\[-2.25ex]                                                                                                      
		& $\mathcal{F}_n^{(D)}$	& 1\%		  &	0.7   &  0.7   &  0.7   &   0.6   &   0.5   &    0.5   &  0.6   &   0.5   &   0.4   &   0.4  \\
		&          							& 5\%	    &	3.5   &  2.9   &  2.9   &   2.2   &   2.0   &    2.0   &  1.8   &   1.4   &   1.4   &   1.3  \\
		&          							& 10\%    &	6.9   &  5.4   &  5.2   &   4.4   &   4.2   &    3.9   &  3.7   &   3.4   &   2.7   &   2.4  \\[0.25ex]
			\cline{2-13}\\[-2.25ex]                                                                                                        
		& $\LB_n^{(D)}$					& 1\%		  &	1.3   &  2.0   &  2.5   &   2.5   &   2.9   &    3.0   &  3.3   &   3.7   &   3.7   &   3.7  \\
		&          							& 5\%	    &	2.7   &  4.0   &  5.0   &   5.3   &   5.7   &    5.7   &  6.2   &   6.5   &   6.8   &   6.5  \\
		&          							& 10\%    &	4.8   &  6.4   &  7.2   &   7.6   &   8.3   &    8.5   &  8.6   &   8.7   &   9.1   &   9.3  \\
		\midrule                                                                                                                         
2000& $\mathcal{P}_n^{(D)}$	& 1\%		  &	0.9   &  1.0   &  1.1   &   1.1   &   1.0   &    0.9   &  1.0   &   0.9   &   0.8   &   0.7  \\
		&          							& 5\%	    &	4.2   &  4.0   &  3.8   &   3.6   &   3.4   &    3.0   &  2.9   &   2.8   &   2.5   &   2.6   \\
		&          							& 10\%    &	7.9   &  7.7   &  7.3   &   6.9   &   6.5   &    6.2   &  5.8   &   5.4   &   5.1   &   5.2   \\[0.25ex]
			\cline{2-13}\\[-2.25ex]                                                                                                        
		& $\mathcal{F}_n^{(D)}$	& 1\%		  &	1.0   &  1.3   &  1.3   &   1.3   &   1.1   &    1.2   &  1.0   &   0.9   &   0.8   &   0.8  \\
		&          							& 5\%	    &	3.9   &  3.5   &  3.6   &   3.5   &   3.2   &    3.0   &  2.7   &   2.3   &   2.2   &   2.1   \\
		&          							& 10\%    &	7.4   &  6.8   &  6.2   &   6.3   &   5.8   &    5.1   &  4.6   &   4.5   &   4.1   &   3.7   \\[0.25ex]
			\cline{2-13}\\[-2.25ex]                                                                                                        
		& $\LB_n^{(D)}$					& 1\%		  &	1.6   &  2.2   &  3.0   &   3.3   &   3.5   &    3.8   &  3.9   &   4.1   &   4.3   &   4.2  \\
		&          							& 5\%	    &	3.2   &  4.3   &  5.2   &   5.8   &   6.0   &    6.4   &  6.7   &   7.0   &   7.4   &   7.5   \\
		&          							& 10\%    &	4.9   &  6.0   &  7.3   &   7.8   &   8.4   &    8.6   &  9.0   &   9.3   &   9.7   &   9.9   \\
	\bottomrule
		\end{tabular}
	\end{center}
\caption{\label{tab:size}Size in \% of tests based on $\mathcal{P}_n^{(D)}$, $\mathcal{F}_n^{(D)}$ and $\LB_n^{(D)}$ for significance levels $\alpha\in\{1\%, 5\%, 10\%\}$ and $D=1,\ldots,10$. Results are for model \eqref{eq:APARCH-X(1,1)} with $\vtheta^{\circ}=\vtheta^{\circ}_{s}=(0.046, 0.027, 0.092, 0.843, 0)^\prime$.}
\end{table}

Reconsider the setup of Section~\ref{Misspecified Volatility} in the main paper. The only difference is that we now consider a fixed $k=\lfloor 0.11\cdot n^{0.99}\rfloor$ and varying $D\in\{1,\ldots,10\}$. When $\vtheta^{\circ}=\vtheta^{\circ}_{s}$ in \eqref{eq:APARCH-X(1,1)}, the results correspond to size, which is displayed in Table~\ref{tab:size}. As a benchmark, we have again included the $\LB_{n}^{(D)}$-test with the estimation correction of \citet{CF11}. We see that, particularly for small sample sizes, our tests tend to be undersized. Yet, the size distortions substantially decrease with increasing $n$. Except for $D=1$, size is reasonably stable across different choices of the number $D$ of included lags.

\begin{table}[t!]
	\begin{center}
		\begin{tabular}{lllrrrrrrrrrr}
			\toprule
$n$	& Test    &	$\alpha$&\multicolumn{10}{c}{$D$}  \\[0.25ex]
\cline{4-13}\\[-2.25ex] 
		&         & 				& 1 & 2 & 3 & 4 & 5 & 6 & 7 & 8 & 9 & 10 \\
\midrule	
500 & $\mathcal{P}_n^{(D)}$	& 1\%		  &	 3.8   &    8.3  &  11.0 &   13.0 &   14.3 &   14.7 &   14.8  &   14.9  &  14.9  &  14.7	\\
		&          							& 5\%	    &	12.1   &   15.6  &  20.1 &   22.2 &   23.9 &   25.0 &   24.9  &   25.0  &  24.6  &  24.0  \\
		&          							& 10\%    &	20.7   &   21.7  &  26.8 &   30.6 &   31.9 &   32.7 &   32.9  &   32.4  &  31.6  &  31.1  \\[0.25ex]
			\cline{2-13}\\[-2.25ex]                                                                                           
		& $\mathcal{F}_n^{(D)}$	& 1\%		  &	 5.3   &    8.5  &  11.5 &   13.8 &   15.3 &   15.9 &   15.9  &   15.6  &  15.5  &  15.1	\\
		&          							& 5\%	    &	11.2   &   16.4  &  20.6 &   23.7 &   25.2 &   26.1 &   26.1  &   25.8  &  25.0  &  24.3  \\
		&          							& 10\%    &	17.0   &   23.4  &  28.0 &   31.0 &   32.8 &   33.5 &   33.4  &   33.5  &  32.7  &  31.6  \\[0.25ex]
			\cline{2-13}\\[-2.25ex]                                                                                           
		& $\LB_n^{(D)}$					& 1\%		  &	 4.3   &    8.0  &  10.6 &   12.5 &   14.1 &   14.8 &   15.1  &   15.4  &  15.9  &  15.5	\\
		&          							& 5\%	    &	 7.3   &   12.3  &  15.2 &   17.6 &   19.5 &   20.8 &   20.9  &   20.8  &  21.1  &  20.9  \\
		&          							& 10\%    &	 9.6   &   15.1  &  18.9 &   21.2 &   22.9 &   24.1 &   24.5  &   24.7  &  24.7  &  24.2  \\
		\midrule                                                                                                            
1000& $\mathcal{P}_n^{(D)}$	& 1\%		  &	 9.5   &   13.8  &  19.9 &   24.0 &   27.4 &   29.8 &   31.2  &   31.4  &  31.6  &  31.0	\\
		&          							& 5\%	    &	19.1   &   27.0  &  34.1 &   39.0 &   43.5 &   46.2 &   47.0  &   47.3  &  47.6  &  46.9  \\
		&          							& 10\%    &	27.2   &   34.1  &  43.0 &   49.1 &   53.3 &   55.6 &   56.5  &   56.8  &  56.8  &  56.6  \\[0.25ex]
			\cline{2-13}\\[-2.25ex]                                                                                           
		& $\mathcal{F}_n^{(D)}$	& 1\%		  &	10.0   &   16.9  &  23.9 &   29.9 &   33.7 &   35.9 &   37.6  &   37.7  &  37.4  &  37.1	\\
		&          							& 5\%	    &	20.0   &   30.4  &  38.7 &   44.9 &   49.4 &   52.3 &   53.2  &   53.9  &  53.9  &  53.4  \\
		&          							& 10\%    &	27.5   &   39.0  &  48.5 &   55.5 &   59.2 &   61.1 &   62.1  &   62.2  &  62.3  &  61.8  \\[0.25ex]
			\cline{2-13}\\[-2.25ex]                                                                                           
		& $\LB_n^{(D)}$					& 1\%		  &	 5.9   &   10.3  &  13.7 &   15.6 &   17.3 &   18.3 &   18.8  &   19.2  &  19.1  &  19.0	\\
		&          							& 5\%	    &	 9.7   &   15.2  &  19.4 &   22.4 &   23.6 &   25.0 &   25.7  &   25.9  &  26.0  &  25.3  \\
		&          							& 10\%    &	12.4   &   18.7  &  23.3 &   26.1 &   27.7 &   28.8 &   29.2  &   29.7  &  29.4  &  29.1  \\
		\midrule                                                                                                            
2000& $\mathcal{P}_n^{(D)}$	& 1\%		  &	14.2   &   27.3  &  39.8 &   49.0 &   55.8 &   60.5 &   62.9  &   64.0  &  64.6  &  65.5	\\
		&          							& 5\%	    &	26.7   &   44.4  &  58.1 &   67.1 &   73.1 &   76.3 &   78.2  &   79.2  &  79.5  &  79.5  \\
		&          							& 10\%    &	33.0   &   54.4  &  67.4 &   75.3 &   80.9 &   83.3 &   84.7  &   85.6  &  85.9  &  85.2  \\[0.25ex]
			\cline{2-13}\\[-2.25ex]                                                                                           
		& $\mathcal{F}_n^{(D)}$	& 1\%		  &	18.8   &   35.1  &  50.2 &   61.4 &   68.6 &   72.5 &   74.7  &   75.5  &  76.2  &  76.3	\\
		&          							& 5\%	    &	31.5   &   52.3  &  67.7 &   77.2 &   82.5 &   85.4 &   87.2  &   87.2  &  87.4  &  87.1  \\
		&          							& 10\%    &	41.6   &   61.8  &  76.3 &   84.2 &   88.6 &   90.4 &   91.2  &   92.0  &  92.1  &  91.6  \\[0.25ex]
			\cline{2-13}\\[-2.25ex]                                                                                           
		& $\LB_n^{(D)}$					& 1\%		  &	06.7   &   12.9  &  16.2 &   18.2 &   19.7 &   20.4 &   21.0  &   21.1  &  21.2  &  20.6	\\
		&          							& 5\%	    &	10.4   &   18.2  &  22.4 &   24.2 &   25.3 &   26.0 &   26.7  &   26.9  &  26.9  &  26.5  \\
		&          							& 10\%    &	13.3   &   21.9  &  26.6 &   28.5 &   29.8 &   29.9 &   30.2  &   30.5  &  30.1  &  29.7  \\
	\bottomrule
		\end{tabular}
	\end{center}
\caption{\label{tab:power}Power in \% of tests based on $\mathcal{P}_n^{(D)}$, $\mathcal{F}_n^{(D)}$ and $\LB_n^{(D)}$ for significance levels $\alpha\in\{1\%, 5\%, 10\%\}$ and $D=1,\ldots,10$.  Results are for model \eqref{eq:APARCH-X(1,1)} with $\vtheta^{\circ}=\vtheta^{\circ}_{p}=(0.046, 0.027, 0.092, 0.843, 0.089)^\prime$.}
\end{table}

Comparing our $\mathcal{P}_n^{(D)}$- and $\mathcal{F}_n^{(D)}$-test with the Ljung--Box test, we find that all tests are comparable in their sensitivity to the choice of $D$. Furthermore, while our tests may have inferior size at the 5\%- and 10\%-level for small samples, size tends to be better overall for $n=2000$. E.g., for the $\LB_n^{(D)}$-test, the empirical rejection frequencies at the 1\%-level vary between 1.6\% an 4.3\% for $n=2000$, while the fluctuations for the $\mathcal{F}_n^{(D)}$-test are much smaller (size between 0.8\% and 1.3\%). The fact that size of our tests improves more for increasing $n$ than for the Ljung--Box test is as expected, because our tests are based on the \textit{tail} copula, thus, requiring reasonably large samples.

\begin{figure}[t!]
	\centering
		\includegraphics[width=\textwidth]{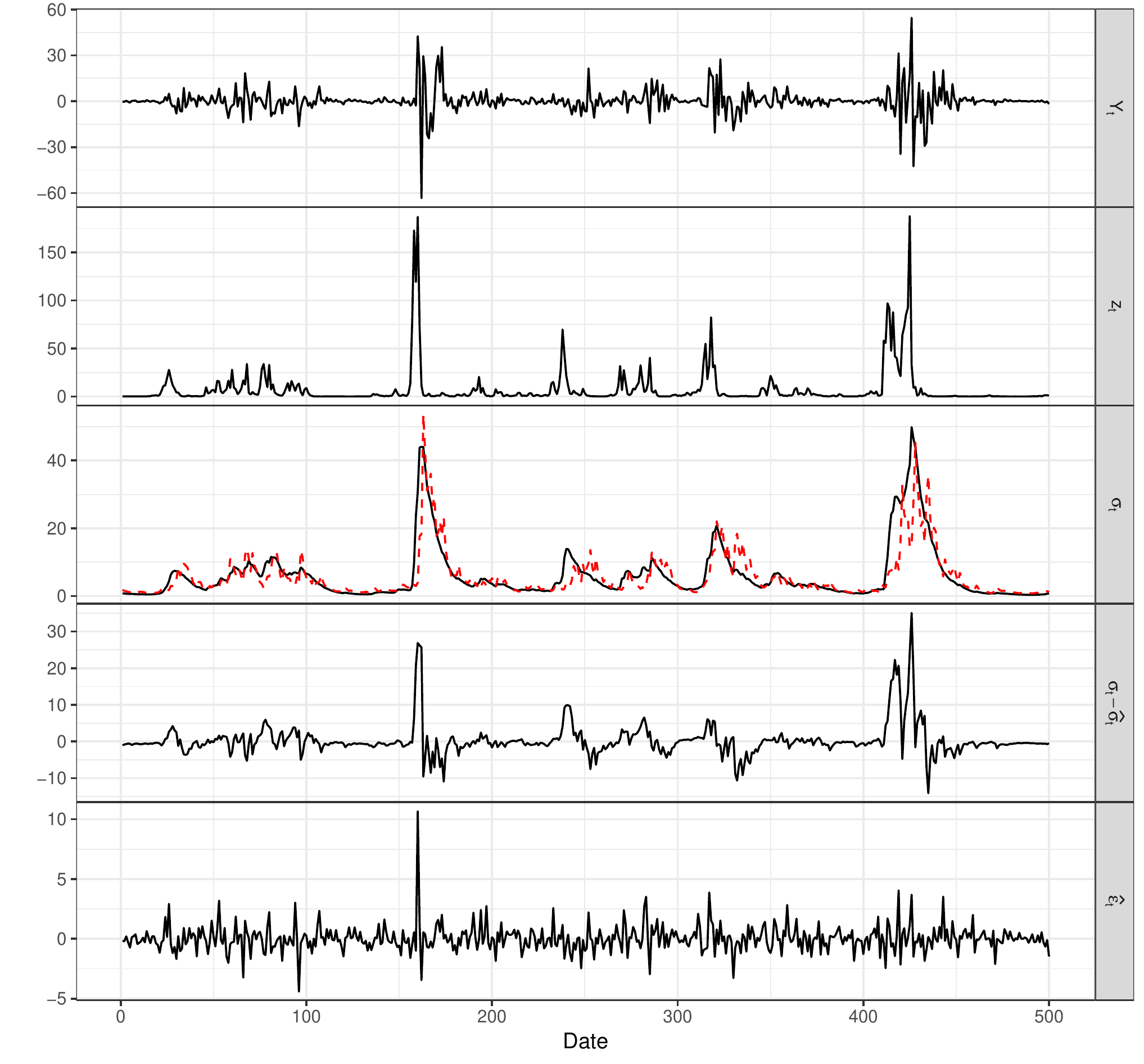}
	\caption{From top to bottom: $Y_t$; $z_t$, $\sigma_t$ (black) and $\widehat{\sigma}_t$ (red, dashed); $\sigma_t-\widehat{\sigma}_t$; $\widehat{\varepsilon}_t$.}
	\label{fig:SimTS}
\end{figure}

Now, we turn to a comparison of power by simulating from \eqref{eq:APARCH-X(1,1)} with $\vtheta^{\circ}=\vtheta^{\circ}_{p}$. Table~\ref{tab:power} displays the results. As expected, power increases for all tests the larger the sample. For almost all sample sizes and choices of $D$, our tests are more likely than the Ljung--Box test to signal a misspecified model. This difference in power increases markedly in $n$, because the power increase in $n$ is rather modest for $\LB_n^{(D)}$. Table~\ref{tab:power} reveals larger differences in empirical rejection frequencies across different $D$ than Table~\ref{tab:size}. Irrespective of the sample size, the power of our tests reaches the maximum for $D\approx7$. Comparing $\mathcal{P}_n^{(D)}$ and $\mathcal{F}_n^{(D)}$, we find that clear differences in power favoring $\mathcal{F}_n^{(D)}$ only emerge in large samples.

\begin{figure}[t!]
	\centering
		\includegraphics[width=\textwidth]{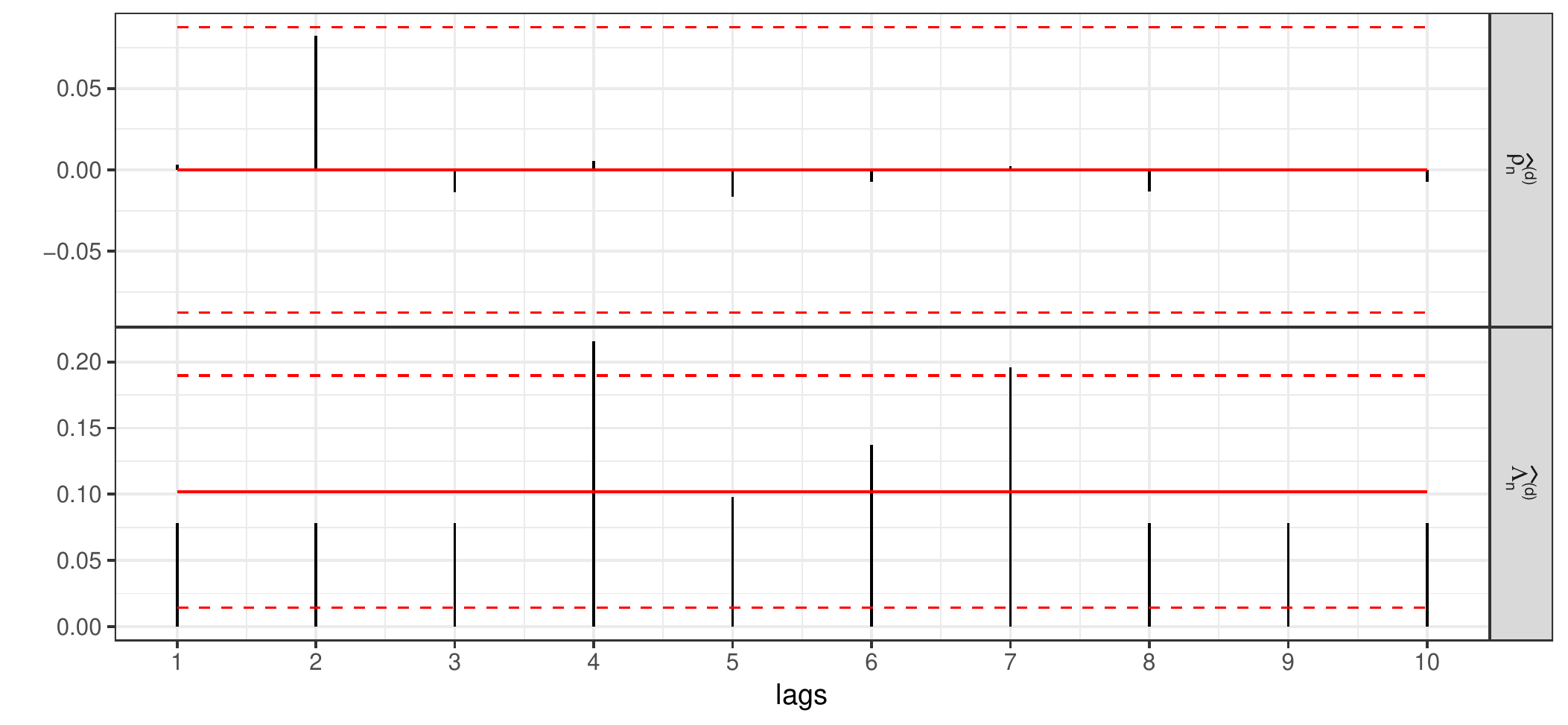}
	\caption{Top: Estimates $\widehat{\rho}_n^{(d)}$ of the autocorrelation at lag $d$; bottom: Estimates $\widehat{\Lambda}_n^{(d)}(1,1)$.}
	\label{fig:ACF}
\end{figure}

Finally, we illustrate the difference between $\LB_n^{(D)}$ and our tests now for a representative sample from \eqref{eq:APARCH-X(1,1)} with $\vtheta^{\circ}=\vtheta^{\circ}_{p}$ and $n=500$. Figure~\ref{fig:SimTS} shows the time series $Y_t$. True volatility $\sigma_t=\sigma_t(\vtheta^{\circ})$ and estimated volatility $\widehat{\sigma}_t$ (estimated based on the misspecified APARCH(1,1) model) are shown in the middle panel, with the differences $\sigma_t-\widehat{\sigma}_t$ plotted below. While volatility seems to be captured well by the misspecified model during calm phases, this is not the case in turbulent periods. This, in turn, induces some serial dependence only in the extreme regions of the distribution of $\widehat{\varepsilon}_t$.

To see this more clearly, consider Figure~\ref{fig:ACF}. It shows lag $d$-estimates of the squared residual autocorrelations, $\widehat{\rho}_n^{(d)}$, and the tail copula, $\widehat{\Lambda}_n^{(d)}(1,1)$.\footnote{We have excluded the estimates $\widehat{\rho}_n^{(d)}=\widehat{\Lambda}_n^{(d)}(1,1)=1$ for $d=0$ to zoom in on the relevant autocorrelations in Figure~\ref{fig:ACF}.} The pointwise 95\%-confidence intervals are indicated by the red dashed lines in both panels. None of the autocorrelations are significant at the 5\%-level, yet the $\widehat{\Lambda}_n^{(d)}$ suggest two significant lags of the tail copula. Thus, while the linear dependence in the body of the distribution appears to be negligible, the misspecified volatility estimates nonetheless induce some serial \textit{extremal} dependence in the filtered residuals.

\subsection{Misspecified Innovations}\label{App: Misspecified Innovations}

\begin{table}[t!]
	\begin{center}
		\begin{tabular}{lllrrrrrrrrrr}
			\toprule
$n$	& Test    &	$\alpha$&\multicolumn{10}{c}{$D$}  \\[0.25ex]
\cline{4-13}\\[-2.25ex] 
		&         & 				& 1 & 2 & 3 & 4 & 5 & 6 & 7 & 8 & 9 & 10 \\
\midrule	
500 & $\mathcal{P}_n^{(D)}$	& 1\%		  &	 1.4  &   2.9  &    3.2  &    3.4 &  3.6  &    3.6  &   3.5 &   3.4  &  3.5  &   3.3	\\
		&          							& 5\%	    &	 5.6  &   5.7  &    6.2  &    6.4 &  6.6  &    6.7  &   6.7 &   6.5  &  6.5  &   6.4  \\
		&          							& 10\%    &	12.0  &   9.3  &    9.7  &    9.8 &  9.8  &   10.0  &  10.0 &   9.5  &  9.3  &   9.1  \\[0.25ex]
			\cline{2-13}\\[-2.25ex]           
		& $\mathcal{F}_n^{(D)}$	& 1\%		  &	 2.4  &   3.2  &    3.9  &    4.1 &  4.2  &    4.2  &   4.1 &   4.1  &  4.0  &   3.8	\\
		&          							& 5\%	    &	 5.8  &   6.2  &    6.3  &    6.4 &  6.6  &    6.6  &   6.7 &   6.5  &  6.4  &   6.0  \\
		&          							& 10\%    &	 9.4  &   9.7  &    9.3  &    9.3 &  9.2  &    8.7  &   8.8 &   8.4  &  8.4  &   8.0  \\[0.25ex]
			\cline{2-13}\\[-2.25ex]           
		& $\LB_n^{(D)}$					& 1\%		  &	 3.2  &   5.2  &    5.5  &    5.8 &  5.9  &    5.6  &   5.5 &   5.7  &  5.7  &   5.3	\\
		&          							& 5\%	    &	 6.0  &   8.3  &    8.6  &    9.1 &  8.8  &    8.7  &   8.4 &   8.3  &  8.5  &   7.9  \\
		&          							& 10\%    &	 8.3  &  10.7  &   11.4  &   11.3 & 11.3  &   10.7  &  10.4 &  10.4  & 10.5  &   9.9  \\
		\midrule                            
1000& $\mathcal{P}_n^{(D)}$	& 1\%		  &	 1.8  &   1.9  &    2.3  &    2.5 &  2.7  &    2.5  &   2.6 &   2.6  &  2.5  &   2.4  \\
		&          							& 5\%	    &	 6.1  &   4.8  &    5.5  &    5.4 &  5.5  &    5.5  &   5.6 &   5.4  &  5.5  &   5.4  \\
		&          							& 10\%    &	10.9  &   9.1  &    8.5  &    8.9 &  8.5  &    8.6  &   8.7 &   8.8  &  8.6  &   8.3  \\[0.25ex]
			\cline{2-13}\\[-2.25ex]           
		& $\mathcal{F}_n^{(D)}$	& 1\%		  &	 1.9  &   2.4  &    2.6  &    2.7 &  2.8  &    2.8  &   2.7 &   2.7  &  2.7  &   2.6  \\
		&          							& 5\%	    &	 5.0  &   5.5  &    5.6  &    5.3 &  5.1  &    4.7  &   4.8 &   4.8  &  4.9  &   4.8  \\
		&          							& 10\%    &	 9.0  &   9.0  &    8.9  &    8.3 &  7.9  &    7.2  &   7.1 &   7.1  &  7.2  &   6.8  \\[0.25ex]
			\cline{2-13}\\[-2.25ex]             
		& $\LB_n^{(D)}$					& 1\%		  &	 2.3  &   4.4  &    5.1  &    5.2 &  4.7  &    4.5  &   4.6 &   4.7  &  5.2  &   5.1  \\
		&          							& 5\%	    &	 4.2  &   7.3  &    7.7  &    7.3 &  6.3  &    6.3  &   6.7 &   6.8  &  7.3  &   7.3  \\
		&          							& 10\%    &	 5.7  &   9.3  &    9.7  &    8.8 &  8.2  &    7.8  &   8.1 &   8.4  &  9.0  &   9.1  \\
		\midrule                              
2000& $\mathcal{P}_n^{(D)}$	& 1\%		  &	 1.1  &   1.6  &    1.7  &    1.7 &  1.7  &    1.5  &   1.6 &   1.7  &  1.7  &   1.5  \\
		&          							& 5\%	    &	 4.6  &   5.2  &    5.0  &    4.7 &  4.5  &    4.4  &   4.2 &   4.1  &  4.0  &   4.0   \\
		&          							& 10\%    &	 9.7  &   9.2  &    9.1  &    8.7 &  8.3  &    7.5  &   7.6 &   7.4  &  7.2  &   7.0   \\[0.25ex]
			\cline{2-13}\\[-2.25ex]             
		& $\mathcal{F}_n^{(D)}$	& 1\%		  &	 1.8  &   2.0  &    2.0  &    2.0 &  2.1  &    2.0  &   1.9 &   1.9  &  1.8  &   1.8  \\
		&          							& 5\%	    &	 5.9  &   5.6  &    5.3  &    4.8 &  5.0  &    4.4  &   4.2 &   4.1  &  4.0  &   3.7   \\
		&          							& 10\%    &	10.2  &   9.4  &    9.2  &    8.2 &  7.8  &    7.3  &   6.8 &   6.3  &  6.3  &   5.9   \\[0.25ex]
			\cline{2-13}\\[-2.25ex]             
		& $\LB_n^{(D)}$					& 1\%		  &	 1.3  &   3.6  &    4.9  &    4.6 &  4.4  &    3.8  &   3.9 &   3.8  &  4.2  &   4.1  \\
		&          							& 5\%	    &	 2.8  &   5.6  &    7.3  &    6.7 &  6.3  &    5.4  &   5.8 &   5.6  &  6.2  &   6.1   \\
		&          							& 10\%    &	 4.1  &   7.2  &    8.9  &    8.5 &  7.7  &    6.7  &   7.0 &   7.0  &  7.6  &   7.5   \\
	\bottomrule
		\end{tabular}
	\end{center}
\caption{\label{tab:size2} Size in \% of tests based on $\mathcal{P}_n^{(D)}$, $\mathcal{F}_n^{(D)}$ and $\LB_n^{(D)}$ for significance levels $\alpha\in\{1\%, 5\%, 10\%\}$ and $D=1,\ldots,10$. Results are for model \eqref{eq:GARCH(1,1)} with $(a_1,b_1,c_1)^\prime=(-3, 0, 0)^\prime$ and $(a_2,b_2,c_2)^\prime=(-1, 0, 0)^\prime$.}
\end{table}

Now, we reconsider the setup of Section~\ref{Misspecified Innovations} in the main paper with fixed $k=\lfloor 0.11\cdot n^{0.99}\rfloor$ and varying $D\in\{1,\ldots,10\}$. We first report size, where $(a_1,b_1,c_1)^\prime=(-3, 0, 0)^\prime$ and $(a_2,b_2,c_2)^\prime=(-1, 0, 0)^\prime$ for model \eqref{eq:GARCH(1,1)}. Table~\ref{tab:size2} displays the tests' size, together with that of the estimation effects-corrected Ljung--Box test \citep[Theorem 8.2]{FZ10}. Except perhaps for small samples and $\alpha=1\%$, the size of our tests is very good, and generally better than that of the Ljung--Box test. Moreover, the size of our tests is also very stable across different $D$.

\begin{table}[t!]
	\begin{center}
		\begin{tabular}{lllrrrrrrrrrr}
			\toprule
$n$	& Test    &	$\alpha$&\multicolumn{10}{c}{$D$}  \\[0.25ex]
\cline{4-13}\\[-2.25ex] 
		&         & 				& 1 & 2 & 3 & 4 & 5 & 6 & 7 & 8 & 9 & 10 \\
\midrule	
500 & $\mathcal{P}_n^{(D)}$	& 1\%		  &	47.2   &  51.8  &   50.8 &   48.6  &   47.0  &   45.0  &   44.0 &   42.6 &   40.8  &   39.7	\\
		&          							& 5\%	    &	63.2   &  61.8  &   61.5 &   59.0  &   56.8  &   55.5  &   53.8 &   52.8 &   51.2  &   50.1  \\
		&          							& 10\%    &	71.0   &  67.6  &   66.6 &   65.3  &   63.6  &   61.7  &   59.9 &   58.4 &   57.1  &   55.8  \\[0.25ex]
			\cline{2-13}\\[-2.25ex]           
		& $\mathcal{F}_n^{(D)}$	& 1\%		  &	55.9   &  56.9  &   55.1 &   53.1  &   50.9  &   49.1  &   47.6 &   46.2 &   44.7  &   43.0	\\
		&          							& 5\%	    &	66.6   &  67.0  &   65.2 &   62.9  &   60.7  &   59.0  &   57.3 &   55.5 &   54.1  &   52.6  \\
		&          							& 10\%    &	72.0   &  72.3  &   70.7 &   68.7  &   66.2  &   64.4  &   62.8 &   61.1 &   59.7  &   58.2  \\[0.25ex]
			\cline{2-13}\\[-2.25ex]           
		& $\LB_n^{(D)}$					& 1\%		  &	13.5   &  15.5  &   15.0 &   14.1  &   13.1  &   12.4  &   11.3 &   10.3 &    9.9  &    9.8	\\
		&          							& 5\%	    &	16.9   &  18.0  &   17.0 &   16.3  &   15.4  &   14.7  &   13.7 &   12.9 &   12.3  &   12.2  \\
		&          							& 10\%    &	19.7   &  19.7  &   18.4 &   17.2  &   16.6  &   15.9  &   14.9 &   14.3 &   14.0  &   13.8  \\
		\midrule                             
1000& $\mathcal{P}_n^{(D)}$	& 1\%		  &	67.5   &  68.0  &   67.9 &   66.2  &   64.6  &   62.5  &   60.7 &   58.8 &   57.4  &   56.2	\\
		&          							& 5\%	    &	77.1   &  78.4  &   77.8 &   76.2  &   74.8  &   73.1  &   71.7 &   70.5 &   69.4  &   67.9  \\
		&          							& 10\%    &	81.6   &  82.5  &   82.5 &   81.4  &   79.7  &   78.4  &   76.9 &   75.8 &   74.7  &   73.7  \\[0.25ex]
			\cline{2-13}\\[-2.25ex]           
		& $\mathcal{F}_n^{(D)}$	& 1\%		  &	72.5   &  74.7  &   73.4 &   71.5  &   70.0  &   68.2  &   66.3 &   64.9 &   63.4  &   62.2	\\
		&          							& 5\%	    &	80.5   &  82.3  &   81.8 &   80.7  &   78.9  &   77.5  &   75.7 &   74.5 &   73.5  &   72.6  \\
		&          							& 10\%    &	83.6   &  85.6  &   85.3 &   84.6  &   83.3  &   81.6  &   80.5 &   79.1 &   78.0  &   76.8  \\[0.25ex]
			\cline{2-13}\\[-2.25ex]           
		& $\LB_n^{(D)}$					& 1\%		  &	 6.2   &   7.5  &    7.7 &    7.7  &    7.5  &    7.2  &    7.1 &    6.9 &    6.7  &    6.6	\\
		&          							& 5\%	    &	 8.1   &   9.0  &    8.9 &    8.9  &    8.7  &    8.5  &    8.3 &    8.1 &    7.9  &    7.7  \\
		&          							& 10\%    &	 9.3   &   9.8  &    9.7 &    9.4  &    9.2  &    9.2  &    9.0 &    8.8 &    8.7  &    8.5  \\
		\midrule                            
2000& $\mathcal{P}_n^{(D)}$	& 1\%		  &	84.9   &  87.8  &   88.0 &   87.1  &   86.2  &   85.5  &   84.8 &   83.8 &   82.7  &   81.9	\\
		&          							& 5\%	    &	90.6   &  92.6  &   92.4 &   92.4  &   91.7  &   90.8  &   90.1 &   89.4 &   88.9  &   88.4  \\
		&          							& 10\%    &	91.9   &  94.4  &   94.6 &   94.3  &   93.8  &   93.5  &   92.5 &   92.2 &   91.6  &   90.8  \\[0.25ex]
			\cline{2-13}\\[-2.25ex]           
		& $\mathcal{F}_n^{(D)}$	& 1\%		  &	88.9   &  90.6  &   90.6 &   90.1  &   89.4  &   88.4  &   87.9 &   87.4 &   86.9  &   86.7	\\
		&          							& 5\%	    &	92.5   &  93.9  &   94.1 &   93.9  &   93.4  &   92.6  &   92.2 &   92.0 &   91.4  &   90.9  \\
		&          							& 10\%    &	93.9   &  95.6  &   95.6 &   95.4  &   95.3  &   94.5  &   93.9 &   93.8 &   93.5  &   93.0  \\[0.25ex]
			\cline{2-13}\\[-2.25ex]           
		& $\LB_n^{(D)}$					& 1\%		  &	 1.8   &   2.4  &    2.5 &    2.3  &    2.4  &    2.5  &    2.7 &    2.7 &    2.8  &    3.0	\\
		&          							& 5\%	    &	 2.5   &   2.9  &    3.2 &    3.1  &    3.1  &    3.0  &    3.1 &    3.1 &    3.2  &    3.6  \\
		&          							& 10\%    &	 3.0   &   3.7  &    3.6 &    3.5  &    3.5  &    3.4  &    3.6 &    3.5 &    3.5  &    3.9  \\
	\bottomrule
		\end{tabular}
	\end{center}
\caption{\label{tab:power2}Power in \% of tests based on $\mathcal{P}_n^{(D)}$, $\mathcal{F}_n^{(D)}$ and $\LB_n^{(D)}$ for significance levels $\alpha\in\{1\%, 5\%, 10\%\}$ and $D=1,\ldots,10$. Results are for model \eqref{eq:GARCH(1,1)} with $(a_1,b_1,c_1)^\prime=(-3, -6, 0.6)^\prime$ and $(a_2,b_2,c_2)^\prime=(-1, -2, 0.6)^\prime$.}
\end{table}

Now, we turn to a comparison of power by simulating from \eqref{eq:GARCH(1,1)} with $(a_1,b_1,c_1)^\prime=(-3, -6, 0.6)^\prime$ and $(a_2,b_2,c_2)^\prime=(-1, -2, 0.6)^\prime$. Table~\ref{tab:power2} displays the results. Once more our tests are more likely than the Ljung--Box test to signal a misspecified model for all sample sizes and choices of $D$. This difference in power increases markedly in $n$, because the $\LB_n^{(D)}$-test even \textit{loses} power for increasing $n$. Table~\ref{tab:power2} reveals larger differences in empirical rejection frequencies across different $D$ than Table~\ref{tab:size}. There is a tendency for power to decrease for larger $D$. Yet, that power decrease is only small up until $D=5$. Comparing $\mathcal{P}_n^{(D)}$ and $\mathcal{F}_n^{(D)}$, we find that there are only minor differences favoring $\mathcal{F}_n^{(D)}$.

Summing up the results of Appendices~\ref{App: Misspecified Volatility} and~\ref{App: Misspecified Innovations}, we recommend the $\mathcal{F}_n^{(D)}$-test with $D=5$, due to its good size and power. By virtue of its simplicity, we can, however, also recommend the $\mathcal{P}_n^{(D)}$-based test as a viable alternative. Although the choice $D=5$ leads to good results in both simulation setups, we recommend---as is common practice for other tests---to also report results for other choices of $D$.


\end{appendices}

\end{document}